\newif\ifproc
	\setlist[enumerate,1]{label=(\alph*), font={\bfseries}}
	\setlist[enumerate,1]{label=\arabic*.}
\tikzset{every picture/.style={line width=0.75pt}} 
\title{Parameterized Complexity of Fair Many-to-One Matchings} 
\titlerunning{}
\author{Ramin Javadi}{Isfahan University of Technology, Isfahan, Iran}{rjavadi@iut.ac.ir}{https://orcid.org/0000-0003-4401-2110}{}
\author{Hossein Shokouhi}{Isfahan University of Technology, Isfahan, Iran}{h.shokouhizadeh@math.iut.ac.ir}{}{}
\authorrunning{R.\ Javadi and H. Shokouhi}
\keywords{Many-to-one Matching, Fair Matching}
\newcommand{\prob}[3]{
	\begin{center}
		\fbox{~\begin{minipage}{.97\textwidth}
			\vspace{2pt}
			\noindent
			\normalsize\textsc{#1}

			\vspace{4pt}
			\setlength{\tabcolsep}{3pt}
			\renewcommand{\arraystretch}{1.0}
			\begin{tabularx}{\textwidth}{@{}lX@{}}
				\normalsize\textbf{Input:}	& \normalsize#2 \\
				\normalsize\textbf{Question:}		 & \normalsize#3
			\end{tabularx}
		\end{minipage}}
	\end{center}
}
\DeclareMathOperator{\col}{col}
\DeclareMathOperator{\vc}{vc}
\DeclareMathOperator{\fvs}{fvn}
\DeclareMathOperator{\fes}{fen}
\DeclareMathOperator{\tw}{tw}
\DeclareMathOperator{\cw}{cw}
\DeclareMathOperator{\mw}{mw}
\DeclareMathOperator{\pw}{pw}
\DeclareMathOperator{\nd}{nd}
\DeclareMathOperator{\td}{td}
\DeclareMathOperator{\vi}{vi}
\newcommand{\gfm}{\textsc{Generalized Fair Matching}}
\newcommand{\fm}{\textsc{Fair Matching}}
\newcommand{\ubp}{\textsc{Unary Bin Packing}}
\newcommand{\mcc}{\textsc{Multicolored Clique}}
\begin{document}

\maketitle

\begin{abstract}
Given a bipartite graph $G=(U\cup V,E)$, a left-perfect many-to-one matching is a subset $M \subseteq E$ such that each vertex in $U$ is incident with exactly one edge in $M$. If $U$ is partitioned into some groups, the matching is called fair if for every $v\in V$, the difference between the number of vertices matched with $v$ in any two groups does not exceed a given threshold. In this paper, we investigate parameterized complexity of fair left-perfect many-to-one matching problem with respect to the structural parameters of the input graph. In particular, we prove that the problem is W[1]-hard with respect to the feedback vertex number, tree-depth and the maximum degree of $U$, combined. Also, it is W[1]-hard with respect to the path-width, the number of groups and the maximum degree of $U$, combined. In the positive side, we prove that the problem is FPT with respect to the treewidth and the maximum degree of $V$. Also, it is FPT with respect to the neighborhood diversity of the input graph (which implies being FPT with respect to vertex cover and modular-width). Finally, we prove that the problem is FPT with respect to the tree-depth and the number of groups. 
\end{abstract}



\section{Introduction}
A left-perfect many-to-one matching for a bipartite graph $G$ with a bipartition $(U,V)$ is a subset of edges $M\subseteq E(G)$ such that each vertex in $U$ is incident with exactly one edge in $M$. In this paper, we study the problem of finding a left-perfect many-to-one matching which satisfies a fairness Max-Min criterion. In particular, the set $U$ is colored with a set of colors $C$ and we aim to find a  left-perfect many-to-one matching $M$ such that for each vertex $v\in V$, the number of  vertices matched to $v$ of all colors are  almost equitable with a prescribed tolerance, i.e. the difference of most and least frequent colors of vertices matched with each vertex in $V$ is bounded by a given constant.  
The concept of fairness in matching problems have many applications in different assignment problems in which you are going to allocate to each object (e.g. jobs, schools, hospitals, constituencies) a number of representatives (e.g. employees, students, doctors, voters) such that the assigned items to each object are fairly distributed with some sense such as age, ethnic, region, type, etc. Computational complexity of this problem has been firstly studied in \cite{main} and it is proved that the problem is NP-hard even when $|C|=\Delta_U=\Delta_V=3$, where $|C|$ is the number of colors, $\Delta_U$ and $\Delta_V$ are the maximum degree of vertices in $U$ and $V$, respectively. It is also proved in \cite{main} that the problem is FPT with respect to $|V|$ (see Theorem~\ref{thm:FPTk}) and polynomial solvable for complete bipartite graphs. We will generalize two former results by proving that the problem is FPT with respect to the neighborhood diversity of the input graph. In \cite{main}, fair matching problem is also studied with another fairness measure called \textit{margin of victory} (MoV) which is defined as the difference of most and second most frequent colors in matched vertices of each vertex in $V$. 

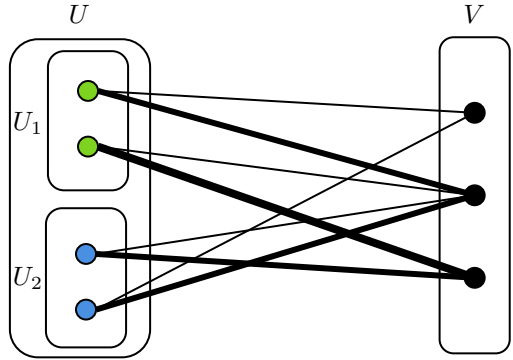
\begin{figure}[t]
	\begin{center}
\begin{tikzpicture}[x=0.75pt,y=0.75pt,yscale=-1,xscale=1]
	
	\draw  [fill={rgb, 255:red, 126; green, 211; blue, 33 }  ,fill opacity=1 ] (229,176) .. controls (229,173.24) and (231.24,171) .. (234,171) .. controls (236.76,171) and (239,173.24) .. (239,176) .. controls (239,178.76) and (236.76,181) .. (234,181) .. controls (231.24,181) and (229,178.76) .. (229,176) -- cycle ;
	\draw  [fill={rgb, 255:red, 74; green, 144; blue, 226 }  ,fill opacity=1 ] (228,258) .. controls (228,255.24) and (230.24,253) .. (233,253) .. controls (235.76,253) and (238,255.24) .. (238,258) .. controls (238,260.76) and (235.76,263) .. (233,263) .. controls (230.24,263) and (228,260.76) .. (228,258) -- cycle ;
	\draw  [fill={rgb, 255:red, 126; green, 211; blue, 33 }  ,fill opacity=1 ] (229,204) .. controls (229,201.24) and (231.24,199) .. (234,199) .. controls (236.76,199) and (239,201.24) .. (239,204) .. controls (239,206.76) and (236.76,209) .. (234,209) .. controls (231.24,209) and (229,206.76) .. (229,204) -- cycle ;
	\draw  [fill={rgb, 255:red, 0; green, 0; blue, 0 }  ,fill opacity=1 ] (422,270) .. controls (422,267.24) and (424.24,265) .. (427,265) .. controls (429.76,265) and (432,267.24) .. (432,270) .. controls (432,272.76) and (429.76,275) .. (427,275) .. controls (424.24,275) and (422,272.76) .. (422,270) -- cycle ;
	\draw  [fill={rgb, 255:red, 0; green, 0; blue, 0 }  ,fill opacity=1 ] (422,228.5) .. controls (422,225.74) and (424.24,223.5) .. (427,223.5) .. controls (429.76,223.5) and (432,225.74) .. (432,228.5) .. controls (432,231.26) and (429.76,233.5) .. (427,233.5) .. controls (424.24,233.5) and (422,231.26) .. (422,228.5) -- cycle ;
	\draw  [fill={rgb, 255:red, 0; green, 0; blue, 0 }  ,fill opacity=1 ] (422,187) .. controls (422,184.24) and (424.24,182) .. (427,182) .. controls (429.76,182) and (432,184.24) .. (432,187) .. controls (432,189.76) and (429.76,192) .. (427,192) .. controls (424.24,192) and (422,189.76) .. (422,187) -- cycle ;
	\draw  [fill={rgb, 255:red, 74; green, 144; blue, 226 }  ,fill opacity=1 ] (228,286) .. controls (228,283.24) and (230.24,281) .. (233,281) .. controls (235.76,281) and (238,283.24) .. (238,286) .. controls (238,288.76) and (235.76,291) .. (233,291) .. controls (230.24,291) and (228,288.76) .. (228,286) -- cycle ;
	\draw   (246,156) .. controls (250.42,156) and (254,159.58) .. (254,164) -- (254,218) .. controls (254,222.42) and (250.42,226) .. (246,226) -- (222,226) .. controls (217.58,226) and (214,222.42) .. (214,218) -- (214,164) .. controls (214,159.58) and (217.58,156) .. (222,156) -- cycle ;
	\draw   (245,235) .. controls (249.42,235) and (253,238.58) .. (253,243) -- (253,297) .. controls (253,301.42) and (249.42,305) .. (245,305) -- (221,305) .. controls (216.58,305) and (213,301.42) .. (213,297) -- (213,243) .. controls (213,238.58) and (216.58,235) .. (221,235) -- cycle ;
	\draw   (251,150) .. controls (258.73,150) and (265,156.27) .. (265,164) -- (265,296) .. controls (265,303.73) and (258.73,310) .. (251,310) -- (209,310) .. controls (201.27,310) and (195,303.73) .. (195,296) -- (195,164) .. controls (195,156.27) and (201.27,150) .. (209,150) -- cycle ;
	\draw   (437.5,149) .. controls (441.37,149) and (444.5,152.13) .. (444.5,156) -- (444.5,301) .. controls (444.5,304.87) and (441.37,308) .. (437.5,308) -- (416.5,308) .. controls (412.63,308) and (409.5,304.87) .. (409.5,301) -- (409.5,156) .. controls (409.5,152.13) and (412.63,149) .. (416.5,149) -- cycle ;
	\draw    (239,176) -- (427,187) ;
	\draw    (239,204) -- (427,228.5) ;
	\draw [line width=2.25]    (239,176) -- (427,228.5) ;
	\draw [line width=2.25]    (238,258) -- (427,270) ;
	\draw [line width=2.25]    (238,286) -- (427,228.5) ;
	\draw    (238,286) -- (427,187) ;
	\draw    (238,258) -- (427,228.5) ;
	\draw [line width=3]    (239,204) -- (427,270) ;
	
	\draw (223,131) node [anchor=north west][inner sep=0.75pt]    {$U$};
	\draw (420,131) node [anchor=north west][inner sep=0.75pt]    {$V$};
	\draw (195,185) node [anchor=north west][inner sep=0.75pt]    {$U_{1}$};
	\draw (195,263) node [anchor=north west][inner sep=0.75pt]    {$U_{2}$};
\end{tikzpicture}
	\end{center}
	\caption{Example of a left-perfect many-to-one fair matching}
\end{figure}

\paragraph*{Our Contributions}
In this paper, we investigate the parameterized complexity of fair matching problem with respect to different structural parameters of the input graph such as vertex cover number ($\vc$), treewidth ($\tw$), path-width ($\pw$), mudular-width ($\mw$), clique-width ($\cw$), tree-depth ($\td$), neighborhood diversity ($\nd$), feedback vertex number ($\fvs$) and feedback edge number ($\fes$) as well as the intrinsic parameters such as the number of colors and the maximum degree of both parts in the input graph. In particular, we prove that the problem is FPT with respect to feedback edge number (Theorem~\ref{thm:fes}), neighborhood diversity (Theorem~\ref{thm:nd}) and thus vertex cover number. It is also FPT with respect to treewidth and $\Delta_V$, combined (Theorem~\ref{thm:twDelta}). On the negative side, by a parameterized reduction from \mcc\, we prove that it is W[1]-hard with respect to the feedback vertex number, tree-depth and $\Delta_U$, combined (Theorem~\ref{thm:fvstdDelta}). Moreover, if we consider the number of colors $|C|$ as a parameter, we can prove that the problem is FPT with respect to tree-depth and $|C|$ (Theorem~\ref{thm:tdC}). We also prove that in this result tree-depth cannot be replaced with path-width, since it is W[1]-hard with respect to path-width, $|C|$ and $\Delta_U$, combined (Theorem~\ref{thm:pwCDelta} by a parameterized reduction from \ubp). A summary of our main results is depicted in Figure~\ref{fig:summary}.  For definition of notions in parameterized complexity, see \cite{cygan}. Also, for the definition of structural parameters of graphs see \cite{lampis}. 

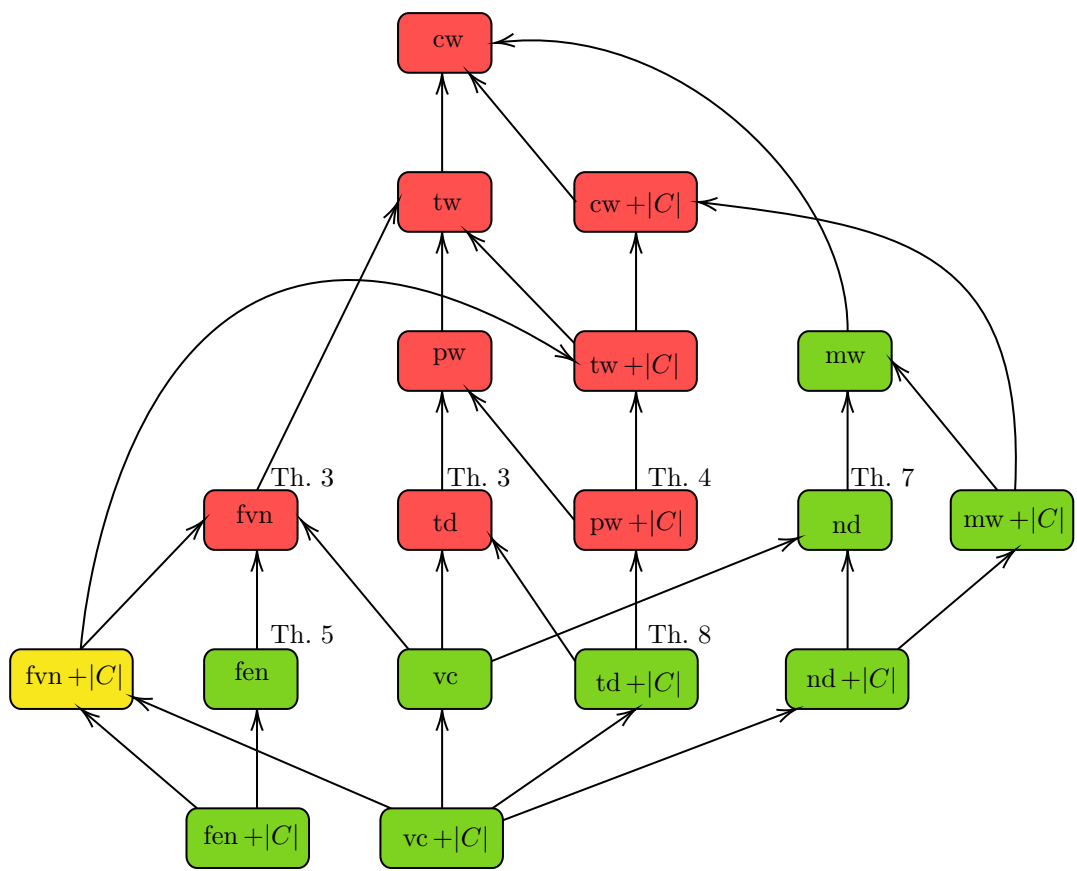
\begin{figure}[ht]
\begin{center}   
\begin{tikzpicture}[x=0.75pt,y=0.75pt,yscale=-1,xscale=.88]
	
	\draw    (275,420) -- (275,372) ;
	\draw [shift={(275,370)}, rotate = 90] [color={rgb, 255:red, 0; green, 0; blue, 0 }  ][line width=0.75]    (10.93,-3.29) .. controls (6.95,-1.4) and (3.31,-0.3) .. (0,0) .. controls (3.31,0.3) and (6.95,1.4) .. (10.93,3.29)   ;
	\draw  [fill={rgb, 255:red, 126; green, 211; blue, 33 }  ,fill opacity=1 ] (240,426) .. controls (240,422.69) and (242.69,420) .. (246,420) -- (303.5,420) .. controls (306.81,420) and (309.5,422.69) .. (309.5,426) -- (309.5,444) .. controls (309.5,447.31) and (306.81,450) .. (303.5,450) -- (246,450) .. controls (242.69,450) and (240,447.31) .. (240,444) -- cycle ;
	\draw  [fill={rgb, 255:red, 126; green, 211; blue, 33 }  ,fill opacity=1 ] (350.5,346) .. controls (350.5,342.69) and (353.19,340) .. (356.5,340) -- (414,340) .. controls (417.31,340) and (420,342.69) .. (420,346) -- (420,364) .. controls (420,367.31) and (417.31,370) .. (414,370) -- (356.5,370) .. controls (353.19,370) and (350.5,367.31) .. (350.5,364) -- cycle ;
	\draw  [fill={rgb, 255:red, 255; green, 80; blue, 80 }  ,fill opacity=1 ] (350,266) .. controls (350,262.69) and (352.69,260) .. (356,260) -- (413.5,260) .. controls (416.81,260) and (419.5,262.69) .. (419.5,266) -- (419.5,284) .. controls (419.5,287.31) and (416.81,290) .. (413.5,290) -- (356,290) .. controls (352.69,290) and (350,287.31) .. (350,284) -- cycle ;
	\draw  [fill={rgb, 255:red, 255; green, 80; blue, 80 }  ,fill opacity=1 ] (350,186) .. controls (350,182.69) and (352.69,180) .. (356,180) -- (413.5,180) .. controls (416.81,180) and (419.5,182.69) .. (419.5,186) -- (419.5,204) .. controls (419.5,207.31) and (416.81,210) .. (413.5,210) -- (356,210) .. controls (352.69,210) and (350,207.31) .. (350,204) -- cycle ;
	\draw  [fill={rgb, 255:red, 255; green, 80; blue, 80 }  ,fill opacity=1 ] (350,106) .. controls (350,102.69) and (352.69,100) .. (356,100) -- (413.5,100) .. controls (416.81,100) and (419.5,102.69) .. (419.5,106) -- (419.5,124) .. controls (419.5,127.31) and (416.81,130) .. (413.5,130) -- (356,130) .. controls (352.69,130) and (350,127.31) .. (350,124) -- cycle ;
	\draw  [fill={rgb, 255:red, 126; green, 211; blue, 33 }  ,fill opacity=1 ] (470,346) .. controls (470,342.69) and (472.69,340) .. (476,340) -- (533.5,340) .. controls (536.81,340) and (539.5,342.69) .. (539.5,346) -- (539.5,364) .. controls (539.5,367.31) and (536.81,370) .. (533.5,370) -- (476,370) .. controls (472.69,370) and (470,367.31) .. (470,364) -- cycle ;
	\draw  [fill={rgb, 255:red, 126; green, 211; blue, 33 }  ,fill opacity=1 ] (130,426) .. controls (130,422.69) and (132.69,420) .. (136,420) -- (193.5,420) .. controls (196.81,420) and (199.5,422.69) .. (199.5,426) -- (199.5,444) .. controls (199.5,447.31) and (196.81,450) .. (193.5,450) -- (136,450) .. controls (132.69,450) and (130,447.31) .. (130,444) -- cycle ;
	\draw  [fill={rgb, 255:red, 248; green, 231; blue, 28 }  ,fill opacity=1 ] (30,346) .. controls (30,342.69) and (32.69,340) .. (36,340) -- (93.5,340) .. controls (96.81,340) and (99.5,342.69) .. (99.5,346) -- (99.5,364) .. controls (99.5,367.31) and (96.81,370) .. (93.5,370) -- (36,370) .. controls (32.69,370) and (30,367.31) .. (30,364) -- cycle ;
	\draw  [fill={rgb, 255:red, 255; green, 80; blue, 80 }  ,fill opacity=1 ] (250,26) .. controls (250,22.69) and (252.69,20) .. (256,20) -- (297,20) .. controls (300.31,20) and (303,22.69) .. (303,26) -- (303,44) .. controls (303,47.31) and (300.31,50) .. (297,50) -- (256,50) .. controls (252.69,50) and (250,47.31) .. (250,44) -- cycle ;
	\draw  [fill={rgb, 255:red, 126; green, 211; blue, 33 }  ,fill opacity=1 ] (477,186) .. controls (477,182.69) and (479.69,180) .. (483,180) -- (524,180) .. controls (527.31,180) and (530,182.69) .. (530,186) -- (530,204) .. controls (530,207.31) and (527.31,210) .. (524,210) -- (483,210) .. controls (479.69,210) and (477,207.31) .. (477,204) -- cycle ;
	\draw  [fill={rgb, 255:red, 126; green, 211; blue, 33 }  ,fill opacity=1 ] (250,346) .. controls (250,342.69) and (252.69,340) .. (256,340) -- (297,340) .. controls (300.31,340) and (303,342.69) .. (303,346) -- (303,364) .. controls (303,367.31) and (300.31,370) .. (297,370) -- (256,370) .. controls (252.69,370) and (250,367.31) .. (250,364) -- cycle ;
	\draw  [fill={rgb, 255:red, 255; green, 80; blue, 80 }  ,fill opacity=1 ] (250,266) .. controls (250,262.69) and (252.69,260) .. (256,260) -- (297,260) .. controls (300.31,260) and (303,262.69) .. (303,266) -- (303,284) .. controls (303,287.31) and (300.31,290) .. (297,290) -- (256,290) .. controls (252.69,290) and (250,287.31) .. (250,284) -- cycle ;
	\draw  [fill={rgb, 255:red, 255; green, 80; blue, 80 }  ,fill opacity=1 ] (250,186) .. controls (250,182.69) and (252.69,180) .. (256,180) -- (297,180) .. controls (300.31,180) and (303,182.69) .. (303,186) -- (303,204) .. controls (303,207.31) and (300.31,210) .. (297,210) -- (256,210) .. controls (252.69,210) and (250,207.31) .. (250,204) -- cycle ;
	\draw  [fill={rgb, 255:red, 255; green, 80; blue, 80 }  ,fill opacity=1 ] (250,106) .. controls (250,102.69) and (252.69,100) .. (256,100) -- (297,100) .. controls (300.31,100) and (303,102.69) .. (303,106) -- (303,124) .. controls (303,127.31) and (300.31,130) .. (297,130) -- (256,130) .. controls (252.69,130) and (250,127.31) .. (250,124) -- cycle ;
	\draw  [fill={rgb, 255:red, 126; green, 211; blue, 33 }  ,fill opacity=1 ] (140,346) .. controls (140,342.69) and (142.69,340) .. (146,340) -- (187,340) .. controls (190.31,340) and (193,342.69) .. (193,346) -- (193,364) .. controls (193,367.31) and (190.31,370) .. (187,370) -- (146,370) .. controls (142.69,370) and (140,367.31) .. (140,364) -- cycle ;
	\draw  [fill={rgb, 255:red, 255; green, 80; blue, 80 }  ,fill opacity=1 ] (140,266) .. controls (140,262.69) and (142.69,260) .. (146,260) -- (187,260) .. controls (190.31,260) and (193,262.69) .. (193,266) -- (193,284) .. controls (193,287.31) and (190.31,290) .. (187,290) -- (146,290) .. controls (142.69,290) and (140,287.31) .. (140,284) -- cycle ;
	\draw  [fill={rgb, 255:red, 126; green, 211; blue, 33 }  ,fill opacity=1 ] (477,266) .. controls (477,262.69) and (479.69,260) .. (483,260) -- (524,260) .. controls (527.31,260) and (530,262.69) .. (530,266) -- (530,284) .. controls (530,287.31) and (527.31,290) .. (524,290) -- (483,290) .. controls (479.69,290) and (477,287.31) .. (477,284) -- cycle ;
	\draw  [fill={rgb, 255:red, 126; green, 211; blue, 33 }  ,fill opacity=1 ] (563.5,266) .. controls (563.5,262.69) and (566.19,260) .. (569.5,260) -- (627,260) .. controls (630.31,260) and (633,262.69) .. (633,266) -- (633,284) .. controls (633,287.31) and (630.31,290) .. (627,290) -- (569.5,290) .. controls (566.19,290) and (563.5,287.31) .. (563.5,284) -- cycle ;
	\draw    (275,340) -- (275,292) ;
	\draw [shift={(275,290)}, rotate = 90] [color={rgb, 255:red, 0; green, 0; blue, 0 }  ][line width=0.75]    (10.93,-3.29) .. controls (6.95,-1.4) and (3.31,-0.3) .. (0,0) .. controls (3.31,0.3) and (6.95,1.4) .. (10.93,3.29)   ;
	\draw    (275,260) -- (275,212) ;
	\draw [shift={(275,210)}, rotate = 90] [color={rgb, 255:red, 0; green, 0; blue, 0 }  ][line width=0.75]    (10.93,-3.29) .. controls (6.95,-1.4) and (3.31,-0.3) .. (0,0) .. controls (3.31,0.3) and (6.95,1.4) .. (10.93,3.29)   ;
	\draw    (275,180) -- (275,132) ;
	\draw [shift={(275,130)}, rotate = 90] [color={rgb, 255:red, 0; green, 0; blue, 0 }  ][line width=0.75]    (10.93,-3.29) .. controls (6.95,-1.4) and (3.31,-0.3) .. (0,0) .. controls (3.31,0.3) and (6.95,1.4) .. (10.93,3.29)   ;
	\draw    (275,100) -- (275,52) ;
	\draw [shift={(275,50)}, rotate = 90] [color={rgb, 255:red, 0; green, 0; blue, 0 }  ][line width=0.75]    (10.93,-3.29) .. controls (6.95,-1.4) and (3.31,-0.3) .. (0,0) .. controls (3.31,0.3) and (6.95,1.4) .. (10.93,3.29)   ;
	\draw    (505,260) -- (505,212) ;
	\draw [shift={(505,210)}, rotate = 90] [color={rgb, 255:red, 0; green, 0; blue, 0 }  ][line width=0.75]    (10.93,-3.29) .. controls (6.95,-1.4) and (3.31,-0.3) .. (0,0) .. controls (3.31,0.3) and (6.95,1.4) .. (10.93,3.29)   ;
	\draw    (505,340) -- (505,292) ;
	\draw [shift={(505,290)}, rotate = 90] [color={rgb, 255:red, 0; green, 0; blue, 0 }  ][line width=0.75]    (10.93,-3.29) .. controls (6.95,-1.4) and (3.31,-0.3) .. (0,0) .. controls (3.31,0.3) and (6.95,1.4) .. (10.93,3.29)   ;
	\draw    (385,180) -- (385,132) ;
	\draw [shift={(385,130)}, rotate = 90] [color={rgb, 255:red, 0; green, 0; blue, 0 }  ][line width=0.75]    (10.93,-3.29) .. controls (6.95,-1.4) and (3.31,-0.3) .. (0,0) .. controls (3.31,0.3) and (6.95,1.4) .. (10.93,3.29)   ;
	\draw    (385,260) -- (385,212) ;
	\draw [shift={(385,210)}, rotate = 90] [color={rgb, 255:red, 0; green, 0; blue, 0 }  ][line width=0.75]    (10.93,-3.29) .. controls (6.95,-1.4) and (3.31,-0.3) .. (0,0) .. controls (3.31,0.3) and (6.95,1.4) .. (10.93,3.29)   ;
	\draw    (385,340) -- (385,292) ;
	\draw [shift={(385,290)}, rotate = 90] [color={rgb, 255:red, 0; green, 0; blue, 0 }  ][line width=0.75]    (10.93,-3.29) .. controls (6.95,-1.4) and (3.31,-0.3) .. (0,0) .. controls (3.31,0.3) and (6.95,1.4) .. (10.93,3.29)   ;
	\draw    (170,340) -- (170,292) ;
	\draw [shift={(170,290)}, rotate = 90] [color={rgb, 255:red, 0; green, 0; blue, 0 }  ][line width=0.75]    (10.93,-3.29) .. controls (6.95,-1.4) and (3.31,-0.3) .. (0,0) .. controls (3.31,0.3) and (6.95,1.4) .. (10.93,3.29)   ;
	\draw    (170,420) -- (170,372) ;
	\draw [shift={(170,370)}, rotate = 90] [color={rgb, 255:red, 0; green, 0; blue, 0 }  ][line width=0.75]    (10.93,-3.29) .. controls (6.95,-1.4) and (3.31,-0.3) .. (0,0) .. controls (3.31,0.3) and (6.95,1.4) .. (10.93,3.29)   ;
	\draw    (136,420) -- (71.59,371.21) ;
	\draw [shift={(70,370)}, rotate = 37.15] [color={rgb, 255:red, 0; green, 0; blue, 0 }  ][line width=0.75]    (10.93,-3.29) .. controls (6.95,-1.4) and (3.31,-0.3) .. (0,0) .. controls (3.31,0.3) and (6.95,1.4) .. (10.93,3.29)   ;
	\draw    (303.5,420) -- (383.3,371.05) ;
	\draw [shift={(385,370)}, rotate = 148.47] [color={rgb, 255:red, 0; green, 0; blue, 0 }  ][line width=0.75]    (10.93,-3.29) .. controls (6.95,-1.4) and (3.31,-0.3) .. (0,0) .. controls (3.31,0.3) and (6.95,1.4) .. (10.93,3.29)   ;
	\draw    (309.5,426) -- (474.1,370.64) ;
	\draw [shift={(476,370)}, rotate = 161.41] [color={rgb, 255:red, 0; green, 0; blue, 0 }  ][line width=0.75]    (10.93,-3.29) .. controls (6.95,-1.4) and (3.31,-0.3) .. (0,0) .. controls (3.31,0.3) and (6.95,1.4) .. (10.93,3.29)   ;
	\draw    (533.5,340) -- (598.4,291.2) ;
	\draw [shift={(600,290)}, rotate = 143.06] [color={rgb, 255:red, 0; green, 0; blue, 0 }  ][line width=0.75]    (10.93,-3.29) .. controls (6.95,-1.4) and (3.31,-0.3) .. (0,0) .. controls (3.31,0.3) and (6.95,1.4) .. (10.93,3.29)   ;
	\draw    (303,346) -- (475.12,284.67) ;
	\draw [shift={(477,284)}, rotate = 160.39] [color={rgb, 255:red, 0; green, 0; blue, 0 }  ][line width=0.75]    (10.93,-3.29) .. controls (6.95,-1.4) and (3.31,-0.3) .. (0,0) .. controls (3.31,0.3) and (6.95,1.4) .. (10.93,3.29)   ;
	\draw    (256,340) -- (196.37,276.46) ;
	\draw [shift={(195,275)}, rotate = 46.82] [color={rgb, 255:red, 0; green, 0; blue, 0 }  ][line width=0.75]    (10.93,-3.29) .. controls (6.95,-1.4) and (3.31,-0.3) .. (0,0) .. controls (3.31,0.3) and (6.95,1.4) .. (10.93,3.29)   ;
	\draw    (246,420) -- (101.37,364.71) ;
	\draw [shift={(99.5,364)}, rotate = 20.92] [color={rgb, 255:red, 0; green, 0; blue, 0 }  ][line width=0.75]    (10.93,-3.29) .. controls (6.95,-1.4) and (3.31,-0.3) .. (0,0) .. controls (3.31,0.3) and (6.95,1.4) .. (10.93,3.29)   ;
	\draw    (70,340) -- (138.53,276.36) ;
	\draw [shift={(140,275)}, rotate = 137.12] [color={rgb, 255:red, 0; green, 0; blue, 0 }  ][line width=0.75]    (10.93,-3.29) .. controls (6.95,-1.4) and (3.31,-0.3) .. (0,0) .. controls (3.31,0.3) and (6.95,1.4) .. (10.93,3.29)   ;
	\draw    (170,260) -- (249.03,116.75) ;
	\draw [shift={(250,115)}, rotate = 118.89] [color={rgb, 255:red, 0; green, 0; blue, 0 }  ][line width=0.75]    (10.93,-3.29) .. controls (6.95,-1.4) and (3.31,-0.3) .. (0,0) .. controls (3.31,0.3) and (6.95,1.4) .. (10.93,3.29)   ;
	\draw    (350,275) -- (291.36,211.47) ;
	\draw [shift={(290,210)}, rotate = 47.29] [color={rgb, 255:red, 0; green, 0; blue, 0 }  ][line width=0.75]    (10.93,-3.29) .. controls (6.95,-1.4) and (3.31,-0.3) .. (0,0) .. controls (3.31,0.3) and (6.95,1.4) .. (10.93,3.29)   ;
	\draw    (350,186) -- (290.47,131.35) ;
	\draw [shift={(289,130)}, rotate = 42.55] [color={rgb, 255:red, 0; green, 0; blue, 0 }  ][line width=0.75]    (10.93,-3.29) .. controls (6.95,-1.4) and (3.31,-0.3) .. (0,0) .. controls (3.31,0.3) and (6.95,1.4) .. (10.93,3.29)   ;
	\draw    (351,115) -- (291.37,51.46) ;
	\draw [shift={(290,50)}, rotate = 46.82] [color={rgb, 255:red, 0; green, 0; blue, 0 }  ][line width=0.75]    (10.93,-3.29) .. controls (6.95,-1.4) and (3.31,-0.3) .. (0,0) .. controls (3.31,0.3) and (6.95,1.4) .. (10.93,3.29)   ;
	\draw    (505,180) .. controls (506,109.35) and (408.98,22.87) .. (306.54,34.81) ;
	\draw [shift={(305,35)}, rotate = 352.81] [color={rgb, 255:red, 0; green, 0; blue, 0 }  ][line width=0.75]    (10.93,-3.29) .. controls (6.95,-1.4) and (3.31,-0.3) .. (0,0) .. controls (3.31,0.3) and (6.95,1.4) .. (10.93,3.29)   ;
	\draw    (70,340) .. controls (75,286) and (120,64) .. (350,195) ;
	\draw [shift={(350,195)}, rotate = 209.66] [color={rgb, 255:red, 0; green, 0; blue, 0 }  ][line width=0.75]    (10.93,-3.29) .. controls (6.95,-1.4) and (3.31,-0.3) .. (0,0) .. controls (3.31,0.3) and (6.95,1.4) .. (10.93,3.29)   ;
	\draw    (600,260) .. controls (606.97,136.62) and (532.75,127.09) .. (421.68,115.18) ;
	\draw [shift={(420,115)}, rotate = 6.12] [color={rgb, 255:red, 0; green, 0; blue, 0 }  ][line width=0.75]    (10.93,-3.29) .. controls (6.95,-1.4) and (3.31,-0.3) .. (0,0) .. controls (3.31,0.3) and (6.95,1.4) .. (10.93,3.29)   ;
	\draw    (350.5,346) -- (304.22,285.59) ;
	\draw [shift={(303,284)}, rotate = 52.54] [color={rgb, 255:red, 0; green, 0; blue, 0 }  ][line width=0.75]    (10.93,-3.29) .. controls (6.95,-1.4) and (3.31,-0.3) .. (0,0) .. controls (3.31,0.3) and (6.95,1.4) .. (10.93,3.29)   ;
	\draw    (590,260) -- (531.36,196.47) ;
	\draw [shift={(530,195)}, rotate = 47.29] [color={rgb, 255:red, 0; green, 0; blue, 0 }  ][line width=0.75]    (10.93,-3.29) .. controls (6.95,-1.4) and (3.31,-0.3) .. (0,0) .. controls (3.31,0.3) and (6.95,1.4) .. (10.93,3.29)   ;
	
	\draw (268,350) node [anchor=north west][inner sep=0.75pt]    {$\vc$};
	\draw (268,187) node [anchor=north west][inner sep=0.75pt]    {$\pw$};
	\draw (494.5,271) node [anchor=north west][inner sep=0.75pt]    {$\nd$};
	\draw (268,270) node [anchor=north west][inner sep=0.75pt]    {$\td$};
	\draw (268,30) node [anchor=north west][inner sep=0.75pt]    {$\cw$};
	\draw (268,109) node [anchor=north west][inner sep=0.75pt]    {$\tw$};
	\draw (156.5,266) node [anchor=north west][inner sep=0.75pt]    {$\fvs$};
	\draw (490.5,189) node [anchor=north west][inner sep=0.75pt]    {$\mw$};
	\draw (155.5,345) node [anchor=north west][inner sep=0.75pt]    {$\fes$};
	\draw (251,427.5) node [anchor=north west][inner sep=0.75pt]    {$\vc+|C|$};
	\draw (480,347.5) node [anchor=north west][inner sep=0.75pt]    {$\nd+|C|$};
	\draw (569,267) node [anchor=north west][inner sep=0.75pt]    {$\mw+|C|$};
	\draw (360.5,349.5) node [anchor=north west][inner sep=0.75pt]    {$\td+|C|$};
	\draw (357,268.5) node [anchor=north west][inner sep=0.75pt]    {$\pw+|C|$};
	\draw (357,189.5) node [anchor=north west][inner sep=0.75pt]    {$\tw+|C|$};
	\draw (357,108.5) node [anchor=north west][inner sep=0.75pt]    {$\cw+|C|$};
	\draw (138,426.5) node [anchor=north west][inner sep=0.75pt]    {$\fes+|C|$};
	\draw (37,346) node [anchor=north west][inner sep=0.75pt]    {$\fvs+|C|$};
	\draw (176,326) node [anchor=north west][inner sep=0.75pt]    {Th.~\ref{thm:fes}};
	\draw (176,247) node [anchor=north west][inner sep=0.75pt]    {Th.~\ref{thm:fvstdDelta}};
	\draw (276,247) node [anchor=north west][inner sep=0.75pt]    {Th.~\ref{thm:fvstdDelta}};
	\draw (390,247) node [anchor=north west][inner sep=0.75pt]    {Th.~\ref{thm:pwCDelta}};
	\draw (390,326) node [anchor=north west][inner sep=0.75pt]    {Th.~\ref{thm:tdC}};
	\draw (505,247) node [anchor=north west][inner sep=0.75pt]    {Th.~\ref{thm:nd}};

\end{tikzpicture}
	\end{center}
	\caption{Summary of our main results regarding \gfm\ problem. An arrow from $f$ to $g$ means that $g$ is bounded by a function of $f$ and so W[1]-hardness result with respect to $f$ implies W[1]-hardness with respect to $g$. Parameters marked by green are proved to be FPT. Parameters marked by red are proved to be W[1]-hard and the situation of the parameter marked by yellow is unknown.} \label{fig:summary}
\end{figure}
\paragraph*{Related Work}
The first study of a fairness notion in combinatorial problems  dates back to 1989 in \cite{lin} where the \textsc{Fair Edge Deletion} problem is studied. In this problem, we are going to delete some edges of the graph to obtain an acyclic subgraph such that the number of deleted edges incident with each vertex is bounded by a given constant.
In general, a \textsc{Fair Deletion Problem} seeks for a set of elements (vertices or edges) $S$ of a graph $G$ such that the subgraph obtained by deletion of $S$ from $G$ satisfies  a given property and the number of deleted elements in the neighborhood of each vertex is minimized. The parameterized complexity of these problems is investigated in  \cite{knop,masarik,kolman}.
Two special problems in the class of fair vertex deletion problems are \textsc{Fair Vertex Cover} and \textsc{Fair Feedback Vertex Set} problems in which we are seeking for a vertex cover (resp. a feedback vertex set) $S$ such that the number of neighbors of every vertex in $S$ is bounded by a given constant. In \cite{knop} it is proved that \textsc{Fair Vertex Cover} is W[1]-hard with respect to tree-depth and feedback vertex number of the input graph and FPT with respect to modular-width of the input graph. 
In \cite{kanesh}, it is proved that \textsc{Fair Feedback Vertex Set} is W[1]-hard with respect to tree-depth and FPT with respect to neighborhood diversity of the input graph. 

A generalization of \textsc{Fair Vertex Cover} problem is \textsc{Fair Hitting Set} in which we are given a universe $\mathcal{U}$ and two families $\mathcal{F}$ and $\mathcal{B}$ of subsets of $\mathcal{U}$ and we are looking for a subset $S\subseteq \mathcal{U}$ of a given size $k$ which hits every element of $\mathcal{F}$ and its intersection with each $B\in \mathcal{B}$ is bounded by a given constant. In \cite{inamdar}, among other results, it is proved that the problem is W[1]-hard with respect to $k$ and if every element of $\mathcal{U}$ appears in at most $q$ sets in $\mathcal{B}$ and $d$ sets in $\mathcal{F}$, then it is FPT with respect to $q, d$ and $k$. 

The Max-Min fairness measure is also considered in \textsc{Fair Short Path} problem \cite{bentert} in which we are given a graph $G$ whose vertex set is colored, two vertices $s,t$ and two integers $\ell, \delta$ and the question is that if there exists an $s-t-$path $P$ of length at most $\ell$ such that the difference of the numbers of vertices in $P$ in each two color classes is bounded by $\delta$. Bentert et al. \cite{bentert} proved that \textsc{Fair Short Path} is in XP and W[1]-hard with respect to the number of colors and is FPT with respect to $\ell$ and $\delta$. 

The margin of victory fairness measure (MoV) is studied e.g. in \textsc{Fair Connected Districting} problem in which we are given a vertex-colored graph and we are looking for a partition of vertices into $k$ districts (connected subgraphs) such that each district is $\ell$-fair with MoV measure. This problem is firstly introduced by Stoica et al. \cite{stoica} and, the parameterized complexity of the problem is investigated in \cite{bomer} In particular, it is proved that the problem is W[1]-hard with respect to treewidth, $k$ and the number of colors, combined. Also, it is W[1]-hard with respect to feedback edge number and $k$, combined. It is also FPT with respect to vertex cover number and the number of colors, combined. 

The notion of fairness is also studied widely in the literature of machine learning such as data clustering \cite{friggstad,ahmadian,ahmadian2020,froese}, matroids and matchings \cite{chierichetti},  influence maximization \cite{khajehnejad} and graph mining \cite{kang,dong}. 

\section{Preliminaries}
For positive integers $m,n,m < n $, let us define $[m,n]=\{m,m+1,\dots,n\}$ and $[n]=[1,n]$. For a graph $G=(V,E)$ and two subsets $A,B \subseteq E$ the set $E(A,B)$ denotes the set of all edges in $E$ with one endpoint in $A$ and one endpoint in $B$. Also, for a vertex $v \in V$, we say that $v$ is complete (resp. incomplete) to $A$ if $v$ is adjacent (resp. nonadjacent) to all vertices in $A$.

 Let $G=(U\cup V,E) $ be a bipartite graph. For an edge set $M\subseteq E$ and a vertex $x\in U\cup V$, let us define $M(x)$ be the set of all vertices $y$ such that $xy$ is in $M$, i.e. $M(x)=\{y\in U\cup V: xy\in M\} $. We say that $M$ is a \textit{many-to-one matching} if for every vertex $u\in U$, we have $|M(u)|\leq 1$. A many-to-one matching is called \textit{left-perfect} if $|M(u)|=1$ for all $u\in U$.

Now, let $C$ be a set of colors and $\col:U\to C$ be a coloring of vertices in $U$.  For a set $U'\subseteq U$ and a color $c\in C$,  we define $U'_c$ as the set of all vertices in $U'$ with color $c$. Also, define $G_c=G[U_c\cup V]$.  Now, let $\ell$ be an integer. we say that $U'$ is $\ell$-fair if  $\max_{c\in C} |U'_c|-\min_{c\in C} |U'_c|\leq \ell  $.
A many-to-one matching is called $\ell$-fair if for every vertex $v\in V$, $M(v)$ is $\ell$-fair. Now, we define \textsc{Fair Matching} problem as follows.

\prob{Fair Matching}{A bipartite graph $G=(U\cup V,E)$, a coloring $\col:U\to C$ and a non-negative integer $\ell$.}{Is there a left-perfect many-to-one matching for $G$ which is $\ell$-fair?}

We can also generalize the concept of $\ell$-fairness as follows. Suppose that $L:V\to \mathbb{N}\cup \{0\}$ is a threshold function which assigns an integer to each vertex in $V$. Then, a many-to-one matching is called $L$-fair if for every vertex $v\in V$, $M(v)$ is $L(v)$-fair. 
\textsc{generalized fair matching} can be defined as follows.

\prob{Generalized Fair Matching}{A bipartite graph $G=(U\cup V,E)$, a set of colors $C$, a coloring $\col:U\to C$ and a threshold function $L:V\to \mathbb{N}\cup \{0\}$.}{Is there a left-perfect many-to-one matching for $G$ which is $L$-fair?}

In \cite{main}, it is proved that \textsc{Fair Matching} in FPT with respect to $k=|V|$. In fact, they proposed the following ILP formulation for the problem.

Let $M$ be a many-to-one matching and for each vertex $v\in V$, let $x_v$ (resp. $y_v$) be the minimum (resp. maximum) number of vertices of some color $c\in C$ matched to $v$, i.e. $x_v =\max_{c\in C} M(v)_c$, $y_v= \min_{c\in C} M(v)_c$. For a vertex $w\in U\cup V$, let $N_G(w)$ be the set of neighbors of $w$ in $G$. For each $W\subseteq U\cup V$, let $N_G(W)=\cup_{w\in W} N_G(w)$. Also, define $\nu_G(W)=\{w'\in U\cup V: N_G(w')\subseteq W\}$. Moreover, for each $c\in C$, let us define $N_c(W)=N_{G_c}(W)$ and $\nu_c(W)=N_{G_c}(W)$. then, we can define the following ILP with variables $x_v , y_v , v \in V$.
\begin{align}
&\text{ILP1: } \nonumber \\
&\sum_{v\in W} y_v\geq \max_{c\in C} |\nu_c(W)| \quad & \forall\ W\subseteq V\\
&\sum_{v\in W} x_v\leq \min_{c\in C} |N_c(W)| \quad  &\forall\ W\subseteq V\\
&0\leq y_v-x_v\leq \ell \quad& \forall v\in V \label{con:fair}\\
&x_v,y_v\in \mathbb{Z}^+\cup\{0\} \quad& \forall v\in V
\end{align}
It is proved in \cite{main} that ILP1 has a feasible solution if and only if the graph G has a left-perfect many-to-one matching which is $\ell$-fair. ILP1 has $O(k)$ variables and $O(2^k)$ constraints. So, by a result of Lenstra \cite{lenstra}, it can be solved in time $O^*(k^{O(k)})$ and so Fair Matching is FPT with respect to $k$. It can be easily seen that in Condition~\eqref{con:fair}, $\ell$ can be replaced with $L(v)$ and so this result is also valid for \textsc{Generalized Fair Matching}. 

\begin{theorem}\label{thm:FPTk}
Let $G=(U\cup V, E)$ be the input graph and $|V|=k$. Then, \textsc{Generalized Fair Matching} can be solved in time $O^*(k^{O(k)})$. 
\end{theorem}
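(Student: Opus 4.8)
The plan is to reduce \gfm\ to integer linear programming in a bounded number of dimensions, and then invoke Lenstra's algorithm (with Kannan's refinement) for ILP in fixed dimension. Concretely, I would (i) establish that ILP1, with $\ell$ in constraint~\eqref{con:fair} replaced by the general threshold $L(v)$, is feasible if and only if $G$ admits an $L$-fair left-perfect many-to-one matching, and then (ii) bound the time to decide feasibility of this ILP.

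For the ``only if'' direction, given an $L$-fair left-perfect matching $M$, put $x_v=\min_{c\in C}|M(v)_c|$ and $y_v=\max_{c\in C}|M(v)_c|$. Constraint~\eqref{con:fair} is then exactly $L$-fairness, and $x_v,y_v\in\mathbb{Z}^+\cup\{0\}$. For any $W\subseteq V$ and any color $c$, every vertex of $U_c$ all of whose neighbours lie in $W$ is forced to be matched inside $W$, so $\sum_{v\in W}|M(v)_c|\ge|\nu_c(W)|$; since $|M(v)_c|\le y_v$, summing and taking the maximum over $c$ yields the first family of constraints. Dually, only color-$c$ vertices with a neighbour in $W$ can be matched inside $W$, giving $\sum_{v\in W}|M(v)_c|\le|N_c(W)|$, and $|M(v)_c|\ge x_v$ together with the minimum over $c$ yields the second family.

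The core of the argument is the ``if'' direction, which I would handle color by color. Fix a feasible integral point $(x_v,y_v)_{v\in V}$ and a color $c$, and consider the task of assigning every $u\in U_c$ to one of its neighbours so that each $v\in V$ receives a number of vertices in the window $[x_v,y_v]$. Model this as an integral flow: a source with an arc of capacity $1$ into each $u\in U_c$, an arc $u\to v$ for each $uv\in E$, and an arc $v\to\text{sink}$ with lower bound $x_v$ and upper bound $y_v$. By the feasibility criterion for flows with lower and upper bounds (Hoffman's circulation theorem / Gale), a feasible \emph{integral} flow exists iff for every $W\subseteq V$ the forced color-$c$ vertices fit, $|\nu_c(W)|\le\sum_{v\in W}y_v$, and the demands are servable, $\sum_{v\in W}x_v\le|N_c(W)|$ --- both implied by the two constraint families of ILP1 (the first through the $\max_c$, the second through the $\min_c$, and note $\nu_c(V)=U_c$ handles total balance). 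Carrying this out independently for every $c$ (which is legitimate since the classes $U_c$ are vertex-disjoint and the $V$-side arcs only share the common window $[x_v,y_v]$) produces a left-perfect many-to-one matching $M$ with $x_v\le|M(v)_c|\le y_v$ for all $v,c$, hence $\max_c|M(v)_c|-\min_c|M(v)_c|\le y_v-x_v\le L(v)$, so $M$ is $L$-fair.

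Finally, ILP1 has $2k$ variables $x_v,y_v$ and $O(2^k)$ constraints, whose right-hand sides $\max_{c}|\nu_c(W)|$ and $\min_{c}|N_c(W)|$ are computable in polynomial time and have polynomially bounded encoding length; all coefficients are $0/1$. Enumerating the constraints and applying Lenstra's integer programming algorithm in fixed dimension, refined by Kannan, decides feasibility in time $d^{O(d)}\cdot\mathrm{poly}(\text{input})$ with $d=2k$, i.e.\ $O^*(k^{O(k)})$. Since $\ell$ entered only through constraint~\eqref{con:fair}, the identical argument with $\ell$ replaced by $L$ proves the bound for \gfm. The main obstacle is the ``if'' direction: one must verify that the flow formulation is characterized \emph{exactly} by the Hoffman cut conditions and that these coincide with the two constraint families of ILP1 --- in particular that a single pair $(x_v,y_v)$ per vertex simultaneously certifies feasibility across all colors.
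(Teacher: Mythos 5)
Your proposal is correct and follows essentially the same route as the paper: the paper simply states ILP1, cites \cite{main} for the equivalence between its feasibility and the existence of an $L$-fair left-perfect matching, and invokes Lenstra's algorithm on the $O(k)$ variables and $O(2^k)$ constraints. You additionally supply the correctness argument for ILP1 (the per-color flow with lower and upper bounds and Hoffman's feasibility conditions), which the paper delegates to the cited reference; that argument is sound.
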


In order to prove that Generalized Fair Matching is FPT with respect to ($\td , |C|$), we will write another ILP formulation for this problem (see ILP2) which has many more number of variables but the structure of the associated graph to this ILP is related to the structure of the input graph. For this, we need some results in parameterized complexity of integer linear programming (ILP). Consider a standard form of ILP as follows. 
\begin{equation}\label{eq:ILP}
\min \{\mathbf{w}.\mathbf{x}\mid A\mathbf{x}=\mathbf{b},\ \mathbf{l}\leq \mathbf{x}\leq \mathbf{u},\ \mathbf{x}\in \mathbb{Z}^n\},
\end{equation}
where $A$ is an integer $m\times n$ matrix, $\mathbf{b}\in \mathbb{Z}^m$ and $\mathbf{l,u}\in (\mathbb{Z}\cup\{\pm \infty\})^n$.
It is known that solving an ILP in its general form is NP-hard, however it can be solved in FPT time when the structure of the coefficient matrix $A$ is confined with a special parameter. To do this, we associate to $A$ two graphs, namely \textit{primal} and \textit{dual} graphs as follows. The primal graph $G_P(A)$ is the graph whose vertex set is corresponding to the columns of $A$ and two vertices corresponding to columns $j_1,j_2$ are adjacent if there is a row $i$ such that $A_{ij_1}$ and $A_{ij_2}$ are non-zero. Similarly, the dual graph $G_D(A)$ is defined as the graph with vertex set corresponding to the rows of $A$ and two vertices corresponding to rows $i_1,i_2$ are adjacent if there is a column $j$ such that $A_{i_1j}$ and $A_{i_2j}$ are non-zero. The primal and dual tree-depth of $A$ denoted by $\td_P(A)$ and $\td_D(A)$ respectively is defined as the tree-depth of the primal and dual graph, respectively. We need the following results which asserts that ILP can be solved in FPT time with respect to the dual tree-depth of $A$ and $\|A\|_\infty$.
\begin{theorem} {\rm \cite{eisenbrand}} \label{thm:tdDual}
	The ILP in \eqref{eq:ILP} can be solved in time $O^*((\|A\|_\infty+1)^{2^{\td_D(A)}})$.
\end{theorem}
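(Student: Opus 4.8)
The plan is to prove this via the \emph{iterative augmentation} framework built on Graver bases. Recall that the Graver basis $\mathcal G(A)$ is the set of $\sqsubseteq$-minimal nonzero vectors of $\ker_{\mathbb Z}(A)$, where $\mathbf y\sqsubseteq\mathbf z$ means $\mathbf y,\mathbf z$ lie in a common orthant and $|y_i|\le|z_i|$ for every $i$. The framework rests on two classical facts. First, if $\mathbf x$ is feasible for \eqref{eq:ILP} but not optimal, there is $\mathbf g\in\mathcal G(A)$ and an integer step length $\lambda\ge 1$ with $\mathbf x+\lambda\mathbf g$ feasible and of strictly smaller objective value; and if one always takes a \emph{Graver-best} step (the pair $(\mathbf g,\lambda)$ maximizing the objective decrease), then starting from any feasible point the optimum is reached after a number of steps polynomial in $n$ and $\log(\|\mathbf u-\mathbf l\|_\infty\|\mathbf w\|_\infty)$. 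Second, an initial feasible point can be produced by running the same procedure on an auxiliary program whose constraint matrix has essentially the same dual graph, hence the same $\td_D$, and polynomially bounded data. Thus the whole algorithm reduces to (i) bounding $g_\infty(A):=\max_{\mathbf g\in\mathcal G(A)}\|\mathbf g\|_\infty$ by a function of $\|A\|_\infty$ and $\td_D(A)$, and (ii) implementing the Graver-best oracle within the target running time.

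For (i), fix a tree-depth decomposition $F$ of the dual graph $G_D(A)$ of height $d=\td_D(A)$ and argue by induction on $d$. Deleting the topmost row(s) of $F$ splits the remaining rows into independent blocks, each of dual tree-depth at most $d-1$; correspondingly $A$ decomposes into a ``top'' part coupling the blocks and a block-diagonal remainder $A'$. By induction each block of $A'$ has small $g_\infty$, and the Positive Sum Property lets one write the restriction of any $\mathbf g\in\mathcal G(A)$ to a block as a sign-compatible sum of Graver elements of that block. A Steinitz-type estimate then bounds how many such elements can be combined before the coupling rows force the vector to cease being $\sqsubseteq$-minimal, and carrying the bound through the $d$ levels yields $g_\infty(A)\le(\|A\|_\infty+1)^{2^{\,\td_D(A)}}$, up to the precise form of the exponent.

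For (ii), note that by a proximity/scaling reduction it suffices to optimize $\mathbf w\cdot\mathbf h$ over integral $\mathbf h$ with $A\mathbf h=\mathbf 0$, $\mathbf l\le\mathbf x+\mathbf h\le\mathbf u$, and $\|\mathbf h\|_\infty$ at most a small multiple of $g_\infty(A)$. I would compute such an $\mathbf h$ by dynamic programming guided by $F$: process the columns of $A$ in an order compatible with the decomposition, letting the DP state record the partial values of the $\le d$ constraints currently on the active root-to-node path, and verifying that a constraint evaluates to $0$ exactly when its row leaves the active path. Since only $d$ constraints are ever active and each partial value stays within a range controlled by $g_\infty(A)$ and $\|A\|_\infty$, the state space is bounded by a function of $\|A\|_\infty$ and $d$ alone, and a careful accounting shows the overall procedure runs in time $O^*\big((\|A\|_\infty+1)^{2^{\td_D(A)}}\big)$.

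The principal obstacle is obtaining exactly the right dependence in step (i): a crude induction multiplies the Graver-norm bound at each of the $d$ levels by a quantity polynomial in the current bound, producing a tower of exponentials in $d$; making the exponent merely \emph{double} per level requires the sharp Steinitz-lemma argument and precise bookkeeping of the coupling. A secondary subtlety is step (ii)'s reliance on a proximity theorem confining the scaled subproblems to the norm regime handled by the DP, together with the halving trick ensuring each Graver-best step makes geometric progress so that only polynomially many steps are needed.
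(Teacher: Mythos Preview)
The paper does not prove this statement at all: Theorem~\ref{thm:tdDual} is quoted verbatim from~\cite{eisenbrand} as a black-box tool, with no proof or sketch provided. There is therefore nothing in the paper to compare your proposal against.

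That said, your outline is a faithful summary of the actual argument in~\cite{eisenbrand}: the iterative-augmentation framework driven by Graver bases, the inductive bound on $g_\infty(A)$ via the tree-depth decomposition of the dual graph combined with a Steinitz-type lemma, and the dynamic program over the active root-to-leaf path to realize the Graver-best step oracle. You have also correctly identified the delicate point, namely that a naive induction gives a tower of exponentials in $\td_D(A)$ rather than the stated doubly-exponential dependence, and that the sharper bound requires care with the Steinitz argument at each level. For the purposes of the present paper, however, none of this is needed; the theorem is simply invoked.
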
    
\section{W[1]-hard Results}

In this section, we prove two W[1]-hard results regarding \gfm. In the first result the parameters are feedback vertex number, tree-depth and $\Delta_U$, combined and reduction is from \mcc. In the second result, the parameters are path-with, number of colors and $\Delta_U$, combined and the reduction is from \ubp.

\begin{theorem} \label{thm:fvstdDelta}
\gfm\ is W[1]-hard with respect to $(\fvs,\td)$ even on graphs with $\Delta_U=2$ where $\fvs$ and $\td$ are  the feedback vertex number and the tree-depth of the input graph and $\Delta_U$ is the maximum degree of vertices in $U$. 
\end{theorem}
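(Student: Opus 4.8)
The plan is to give a polynomial‑time parameterized reduction from \mcc, whose parameter is the number $k$ of colour classes. Recall an instance of \mcc\ is a graph $H$ with $V(H)$ partitioned into $V_1,\dots,V_k$; after standard preprocessing I may assume all $|V_i|$ equal some $n$, that $H$ has no edge inside a class, and that every vertex of $V_i$ has a neighbour in every other class (a vertex violating the last condition cannot lie in a multicolored clique and is deleted). From such an $H$ I will build a \gfm‑instance $(G,\col,L)$ with $\Delta_U=2$ whose feedback vertex number and tree‑depth are both $\mathcal O(k^2)$, and which is a yes‑instance iff $H$ has a multicolored clique; since $\mathcal O(k^2)$ is a function of the parameter, this yields the theorem.

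The backbone of $G$ is a set of $\mathcal O(k^2)$ ``hub'' vertices of $V$: a vertex $D_i$ per class and a vertex $W_{ij}$ per pair $i<j$. Everything else is low‑degree wiring: every vertex of $U$ has degree $1$ or $2$, a degree‑$2$ vertex behaving in any left‑perfect many‑to‑one matching like an orientable edge between its two neighbours and a degree‑$1$ vertex like a pendant forced onto its neighbour, so that $\ell$‑fairness at $v$ says the colour multiset of the $U$‑vertices matched to $v$ is $L(v)$‑balanced. The construction is arranged so that after deleting the $\mathcal O(k^2)$ hubs, $G$ becomes a disjoint union of constant‑size trees plus isolated vertices — hence $\fvs(G),\td(G)=\mathcal O(k^2)$ — which is why the wiring has to be kept shallow (no long paths, whose tree‑depth $\Theta(\log n)$ would not be bounded in $k$). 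The key enabling observation about $\ell$‑fairness is that, by attaching to a hub $v$ a prescribed number of pendant $U$‑vertices of each colour, one of the colours occurring $M$ times as an ``anchor'', and setting $L(v)=0$, one forces for \emph{every} colour $c$ the number of $c$‑coloured non‑pendant $U$‑vertices matched to $v$ to equal a pre‑chosen target; this costs only $\mathrm{poly}(n)$ pendants and drives $|C|$ up to $\mathrm{poly}(n)$, which is allowed since $|C|$ is not a parameter here.

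Around $D_i$ I would build a \emph{selection gadget}: $n$ orientable edges of a private colour $\mathrm{sel}_i$ leave $D_i$, one per $a\in V_i$, and the forced‑in‑degree device at $D_i$ together with one small auxiliary ``counting'' hub per class forces exactly one of them to point outward — i.e.\ exactly one vertex $a^\ast_i\in V_i$ is ``chosen'' — after which short synchronised strands of degree‑$2$ vertices broadcast the chosen status to each of the $k-1$ pair‑gadgets that $i$ participates in. Around $W_{ij}$ I would build a \emph{verification gadget}: using a private colour $\chi_b$ per $b\in V_j$ with one base pendant of colour $\chi_b$ at $W_{ij}$, route into $W_{ij}$, for the chosen $a^\ast_i$, one $\chi_{b}$‑edge for each \emph{non}‑neighbour $b\in V_j\setminus N_j(a^\ast_i)$, and for the chosen $b^\ast_j$ one $\chi_{b^\ast_j}$‑edge; then $\chi_b$ is matched to $W_{ij}$ exactly $1+[\,b\notin N_j(a^\ast_i)\,]+[\,b=b^\ast_j\,]$ times, and (using the preprocessing) this multiset has spread $1$ when $a^\ast_i b^\ast_j\in E(H)$ and spread $2$ exactly when it is not, so $L(W_{ij})=1$ makes $W_{ij}$ satisfiable iff the two chosen vertices are adjacent. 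Consequently $G$ admits an $L$‑fair left‑perfect many‑to‑one matching iff one can choose $a^\ast_i\in V_i$ pairwise adjacent in $H$, i.e.\ iff $H$ has a multicolored clique; the routine verifications that $\Delta_U=2$ and $\fvs(G),\td(G)=\mathcal O(k^2)$ then complete the argument.

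The hard part, as usual for such reductions, is the verification gadget — and in particular getting it to ``point the right way'': the obvious encodings make the \emph{presence} of an edge the concentrated, large‑spread configuration, so that a small $L$ would detect non‑edges instead of edges, and the fix (route a token per non‑neighbour against a uniform base, so that precisely a non‑edge overloads one colour) has to be coupled with the preprocessing to kill the degenerate cases ($N_j(a^\ast_i)$ empty or all of $V_j$). The other genuine obstacle is carrying the broadcast of each chosen vertex to all $k-1$ pair‑gadgets under $\Delta_U=2$ without ever creating a path long enough to push the tree‑depth above $\mathcal O(k^2)$ — i.e.\ doing the fan‑out with many short forced‑pendant‑synchronised strands rather than one long strand or one high‑degree selector.
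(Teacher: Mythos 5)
Your overall architecture coincides with the paper's: a reduction from \mcc\ in which $O(k^2)$ high-degree $V$-hubs (one selection hub per class, one verification hub per pair) form both the feedback vertex set and the top of the elimination forest, all $U$-vertices have degree at most $2$ and act as pendants or orientable edges, pendant colour-padding with $L=0$ pins down exact per-colour in-degrees at the hubs, and $|C|$ is allowed to grow polynomially. The paper likewise uses a single selector $v_0$, pair hubs $v_{ab}$, and ``all-or-nothing'' auxiliary $V$-vertices, so the route is essentially the same; the only substantive divergence is the verification arithmetic (you use a private colour per candidate $b\in V_j$ with non-neighbour tokens and $L(W_{ij})=1$; the paper uses prefix/suffix interval colours $(b,1),\dots,(b,n)$ with $L(v_{ab})=0$).

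The genuine gap is the forcing direction at the verification hub, which is exactly where the paper's proof does its work. As you describe it, $W_{ij}$ is only \emph{offered} the bundle of non-neighbour tokens of the chosen $a^*_i$ and the token of the chosen $b^*_j$; but each of these tokens is a degree-$2$ vertex whose other endpoint is some activation vertex, and if all of them are absorbed at their activation vertices then $W_{ij}$ receives only its base pendants (count $1$ in every colour $\chi_b$), which has spread $0$ and is fair --- so a fair matching could satisfy every $W_{ij}$ vacuously and the necessity direction collapses. What is needed, and what you never construct, is the switch that makes absorption at the activation vertex \emph{impossible} precisely when $a$ is chosen (and mandatory otherwise), propagated from the single selection edge at $D_i$ to all $k-1$ pair gadgets under $\Delta_U=2$. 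The paper achieves this with the vertices $v'_{ai}$: each has exactly one neighbour of colour $(a,0)$, so when that neighbour is taken by $v_0$ the $L=0$ constraint forces $v'_{ai}$ to take \emph{nothing}, expelling the whole bundle $U_{a,b,i}$ into $v_{ab}$; and at $v_{ab}$ the prefix colours $(b,1),\dots,(b,i_a)$ can only be completed to a balanced profile by the suffix $(b,i_a+1),\dots,(b,n)$ supplied by an edge gadget $H^{ab}_{i_a j}$, whose shared middle vertex $v_2$ then forces the reciprocal check at $v_{ba}$. (Your worry about long broadcast strands is a non-issue: the fan-out is done by a single high-degree $V$-vertex, which is permitted since only $\Delta_U$ is bounded.) Until you exhibit an analogous single-occurrence-colour switch and verify that your $L(W_{ij})=1$ gadget cannot be under-filled, the reduction is a blueprint rather than a proof.
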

\begin{proof}
We give a parameterized reduction from \mcc\ which is known to be W[1]-hard with respect to the size of the clique $\ell$ \cite{fellow}. 

\prob{Multicolored Clique}{A graph $G=(V,E)$ where the vertex set $V$ is partitioned into $\ell$ independent sets $U_1,\ldots,U_{\ell}$.}{Is there a clique of size $\ell$ in $G$?}

Let $G=(U_1\cup\cdots\cup U_{\ell},E)$ be an instance of \mcc\ such that for each $a\in [\ell]$, $|U_a|=n$. We define an ordering on the vertices in each $U_a$ and if there is an edge between $i$-th vertex in $U_a$ and $j$-th vertex in $U_b$, we denote such edge by $ij\in E(U_a,U_b)$. Now, we construct an instance $G'=(U'\cup V',E')$ of \gfm\ whose feedback vertex number and tree-depth are $O(\ell^2)$ and $\Delta_U=2$. The set of colors is defined as $C=\{(a,i): a\in [\ell], i\in [0,n] \}$, so $|C|=(n+1)\ell$.

Firstly, for each edge $ij\in E(U_a,U_b)$, we construct the gadget $H^{ab}_{ij}$ which is defined as follows. The bipartite graph $H^{ab}_{ij}$ has a bipartition $(U,V)$ where $V=\{v_1,v_2,v_3\}$ and $U=U^{ab}_{ij}\cup U^{ba}_{ji}$, where $U^{ab}_{ij}$ is a set of $n-i$ vertices with colors $(b,i+1),\ldots, (b,n)$ and $U^{ba}_{ji}$ is a set of $n-j$ vertices with colors $(a,j+1),\ldots, (a,n)$. The set $U^{ab}_{ij}$ is complete to $\{v_1,v_2\}$ and $U^{ba}_{ji}$ is complete to $\{v_2,v_3\}$. There is no more edges in $H^{ab}_{ij}$. 

Secondly, For every subset of colors $\hat{C}\subseteq C$, we define the gadget $H[\hat{C}]$ as follows. The bipartite graph $H[\hat{C}]$ has a bipartition $(\hat{U},\hat{V})$ where $|\hat{U}|=|\hat{C}|$ and for each color $c\in \hat{C}$, there is exactly one vertex $u^c$ in $ \hat{U} $ of color $c$. There is a vertex $\bar{v}\in \hat{V}$ which is complete $\hat{U}$. Also, each vertex $u^c$ in $ \hat{U}$ has a private neighbor $v^c$ in $\hat{V}$. Also, $L(\bar{v})=0$ and $L(v^c)=1$ for all $c\in \hat{C}$. There is no more edges in $H[\hat{C}]$. 

Now, we construct the graph $H$ as follows. The bipartite graph $H$ has a bipartition $U_H\cup V_H$, where $V_H=\{v_0,v_{ab}, v'_{ai}; a,b\in[\ell],a\neq b, i\in [n] \}$ and 
$$U_H=\bigcup_{a\in [0,\ell]} U_a \bigcup_{a,b\in [\ell]} U_{a,b} \bigcup_{a,b\in [\ell],i\in [n]} U_{a,b,i}. $$
For each $a,b\in [\ell], a\neq b$ and $i\in[n]$, $U_a$ is a set of $n$ vertices all with color $(a,0)$, $U_{a,b,i}$ is a set of $i$ vertices of colors $(b,1),\ldots ,(b,i)$, $U_0$ is a set of $n\ell$ vertices with colors $(a,i)$, $a \in [\ell]$, $i \in [n]$, and finally, $U_{a,b}$ is a set of $(n+1)(\ell-1)$ vertices with colors $(c,i)$, $c \in [\ell] \setminus \{b\}$, $i \in [0,n]$. Adjacencies are as follows. The vertex $v_0$ is complete to $U_0\cup U_1\cup \cdots \cup U_\ell$. $v_{ab}$ is complete to $U_{ab} \bigcup_{i \in [n]} U_{a,b,i}$ and $v'_{ai}$ is complete to $\bigcup_{b \in [\ell]} U_{a,b,i}$ as well as i-th vertex in $U_a$. 

Finally, the bipartite graph $G'=(U\cup V,E')$ is constructed as follows. We take a copy of $H$ and for each edge $ij\in E(U_a,U_b)$, we add a disjoint gadget $H^{ab}_{ij}$ and then identify the vertex $v_1$ with the vertex $v_{ab}$ and the vertex $v_3$ with the vertex $v_{ba}$. Also, for every vertex $v\in V\setminus \{v_0, v_{ab},a,b\in [\ell], a\neq b \}$, setting $\hat{C}_v$ to be the set of colors in $C$ not appearing in the neighbors of $v$, we add a disjoint gadget $H[\hat{C}_v]$ and identify $\bar{v}$ with $v$ (see Figure~\ref{fig:fvs}). Note that for every vertex $v\in V$ except vertices in the gadgets $H[\hat{C}]$, we have $L(v)=0$.
\begin{figure}[ht]
	\begin{center}

\tikzset{every picture/.style={line width=0.75pt}} 

\begin{tikzpicture}[x=0.75pt,y=0.75pt,yscale=-1,xscale=1]
	
	\draw  [fill={rgb, 255:red, 0; green, 0; blue, 0 }  ,fill opacity=1 ] (345,211) .. controls (345,208.24) and (347.24,206) .. (350,206) .. controls (352.76,206) and (355,208.24) .. (355,211) .. controls (355,213.76) and (352.76,216) .. (350,216) .. controls (347.24,216) and (345,213.76) .. (345,211) -- cycle ;
	\draw    (350,211) -- (207.51,270) ;
	\draw  [fill={rgb, 255:red, 0; green, 0; blue, 0 }  ,fill opacity=1 ] (192,376) .. controls (192,373.24) and (194.24,371) .. (197,371) .. controls (199.76,371) and (202,373.24) .. (202,376) .. controls (202,378.76) and (199.76,381) .. (197,381) .. controls (194.24,381) and (192,378.76) .. (192,376) -- cycle ;
	\draw    (197,376) -- (162,441.76) ;
	\draw  [color={rgb, 255:red, 0; green, 0; blue, 0 }  ,draw opacity=1 ] (90,378) .. controls (90,366.95) and (105.67,358) .. (125,358) .. controls (144.33,358) and (160,366.95) .. (160,378) .. controls (160,389.05) and (144.33,398) .. (125,398) .. controls (105.67,398) and (90,389.05) .. (90,378) -- cycle ;
	\draw  [fill={rgb, 255:red, 0; green, 0; blue, 0 }  ,fill opacity=1 ] (457.77,379.52) .. controls (457.77,376.76) and (460.01,374.52) .. (462.77,374.52) .. controls (465.53,374.52) and (467.77,376.76) .. (467.77,379.52) .. controls (467.77,382.28) and (465.53,384.52) .. (462.77,384.52) .. controls (460.01,384.52) and (457.77,382.28) .. (457.77,379.52) -- cycle ;
	\draw    (160,378) -- (197,376) ;
	\draw  [color={rgb, 255:red, 0; green, 0; blue, 0 }  ,draw opacity=1 ] (500,394.14) .. controls (500,383.1) and (515.67,374.14) .. (535,374.14) .. controls (554.33,374.14) and (570,383.1) .. (570,394.14) .. controls (570,405.19) and (554.33,414.14) .. (535,414.14) .. controls (515.67,414.14) and (500,405.19) .. (500,394.14) -- cycle ;
	\draw    (462.77,379.52) -- (500,394.14) ;
	\draw    (386,167) -- (350,211) ;
	\draw  [color={rgb, 255:red, 0; green, 0; blue, 0 }  ,draw opacity=1 ][fill={rgb, 255:red, 238; green, 241; blue, 241 }  ,fill opacity=1 ] (390,306.83) .. controls (390,291.74) and (422.46,279.51) .. (462.49,279.51) .. controls (502.53,279.51) and (534.99,291.74) .. (534.99,306.83) .. controls (534.99,321.91) and (502.53,334.14) .. (462.49,334.14) .. controls (422.46,334.14) and (390,321.91) .. (390,306.83) -- cycle ;
	\draw    (350,211) -- (462.49,279.51) ;
	\draw   (449.51,310.7) .. controls (449.51,304.63) and (454.43,299.7) .. (460.51,299.7) .. controls (466.58,299.7) and (471.51,304.63) .. (471.51,310.7) .. controls (471.51,316.78) and (466.58,321.7) .. (460.51,321.7) .. controls (454.43,321.7) and (449.51,316.78) .. (449.51,310.7) -- cycle ;
	\draw   (493,311.14) .. controls (493,305.07) and (497.92,300.14) .. (504,300.14) .. controls (510.08,300.14) and (515,305.07) .. (515,311.14) .. controls (515,317.22) and (510.08,322.14) .. (504,322.14) .. controls (497.92,322.14) and (493,317.22) .. (493,311.14) -- cycle ;
	\draw   (404.51,310.7) .. controls (404.51,304.63) and (409.43,299.7) .. (415.51,299.7) .. controls (421.58,299.7) and (426.51,304.63) .. (426.51,310.7) .. controls (426.51,316.78) and (421.58,321.7) .. (415.51,321.7) .. controls (409.43,321.7) and (404.51,316.78) .. (404.51,310.7) -- cycle ;
	\draw    (460.51,321.7) -- (462.77,379.52) ;
	\draw    (230,441.76) -- (199,503.24) ;
	\draw    (162,441.76) -- (199,503.24) ;
	\draw  [color={rgb, 255:red, 0; green, 0; blue, 0 }  ,draw opacity=1 ][fill={rgb, 255:red, 238; green, 241; blue, 241 }  ,fill opacity=1 ] (162,441.76) .. controls (162,430.85) and (177.22,422) .. (196,422) .. controls (214.78,422) and (230,430.85) .. (230,441.76) .. controls (230,452.67) and (214.78,461.52) .. (196,461.52) .. controls (177.22,461.52) and (162,452.67) .. (162,441.76) -- cycle ;
	\draw    (197,376) -- (230,440) ;
	\draw  [fill={rgb, 255:red, 0; green, 0; blue, 0 }  ,fill opacity=1 ] (194,503.24) .. controls (194,500.48) and (196.24,498.24) .. (199,498.24) .. controls (201.76,498.24) and (204,500.48) .. (204,503.24) .. controls (204,506) and (201.76,508.24) .. (199,508.24) .. controls (196.24,508.24) and (194,506) .. (194,503.24) -- cycle ;
	\draw    (462.77,379.52) -- (429,444.9) ;
	\draw    (497,444.9) -- (466,506.38) ;
	\draw    (429,444.9) -- (466,506.38) ;
	\draw  [color={rgb, 255:red, 0; green, 0; blue, 0 }  ,draw opacity=1 ][fill={rgb, 255:red, 238; green, 241; blue, 241 }  ,fill opacity=1 ] (429,444.9) .. controls (429,433.99) and (444.22,425.14) .. (463,425.14) .. controls (481.78,425.14) and (497,433.99) .. (497,444.9) .. controls (497,455.81) and (481.78,464.66) .. (463,464.66) .. controls (444.22,464.66) and (429,455.81) .. (429,444.9) -- cycle ;
	\draw    (462.77,379.52) -- (497,444.9) ;
	\draw  [fill={rgb, 255:red, 0; green, 0; blue, 0 }  ,fill opacity=1 ] (461,506.38) .. controls (461,503.62) and (463.24,501.38) .. (466,501.38) .. controls (468.76,501.38) and (471,503.62) .. (471,506.38) .. controls (471,509.14) and (468.76,511.38) .. (466,511.38) .. controls (463.24,511.38) and (461,509.14) .. (461,506.38) -- cycle ;
	\draw    (199,503.24) -- (275,490.48) ;
	\draw    (466,506.38) -- (391,550) ;
	\draw  [fill={rgb, 255:red, 0; green, 0; blue, 0 }  ,fill opacity=1 ] (381,167) .. controls (381,164.24) and (383.24,162) .. (386,162) .. controls (388.76,162) and (391,164.24) .. (391,167) .. controls (391,169.76) and (388.76,172) .. (386,172) .. controls (383.24,172) and (381,169.76) .. (381,167) -- cycle ;
	\draw  [fill={rgb, 255:red, 0; green, 0; blue, 0 }  ,fill opacity=1 ] (311,166) .. controls (311,163.24) and (313.24,161) .. (316,161) .. controls (318.76,161) and (321,163.24) .. (321,166) .. controls (321,168.76) and (318.76,171) .. (316,171) .. controls (313.24,171) and (311,168.76) .. (311,166) -- cycle ;
	\draw    (350,211) -- (316,166) ;
	\draw    (199,503.24) -- (270,550) ;
	\draw  [color={rgb, 255:red, 0; green, 0; blue, 0 }  ,draw opacity=1 ][fill={rgb, 255:red, 238; green, 241; blue, 241 }  ,fill opacity=1 ] (250,520.24) .. controls (250,503.81) and (258.95,490.48) .. (270,490.48) .. controls (281.05,490.48) and (290,503.81) .. (290,520.24) .. controls (290,536.68) and (281.05,550) .. (270,550) .. controls (258.95,550) and (250,536.68) .. (250,520.24) -- cycle ;
	\draw  [color={rgb, 255:red, 0; green, 0; blue, 0 }  ,draw opacity=1 ][fill={rgb, 255:red, 238; green, 241; blue, 241 }  ,fill opacity=1 ] (370,520.24) .. controls (370,503.81) and (379.4,490.48) .. (391,490.48) .. controls (402.6,490.48) and (412,503.81) .. (412,520.24) .. controls (412,536.68) and (402.6,550) .. (391,550) .. controls (379.4,550) and (370,536.68) .. (370,520.24) -- cycle ;
	\draw  [fill={rgb, 255:red, 0; green, 0; blue, 0 }  ,fill opacity=1 ] (328,523.24) .. controls (328,520.48) and (330.24,518.24) .. (333,518.24) .. controls (335.76,518.24) and (338,520.48) .. (338,523.24) .. controls (338,526) and (335.76,528.24) .. (333,528.24) .. controls (330.24,528.24) and (328,526) .. (328,523.24) -- cycle ;
	\draw    (270,550) -- (333,523.24) ;
	\draw    (275,490.48) -- (333,523.24) ;
	\draw    (391,490.48) -- (466,506.38) ;
	\draw    (333,523.24) -- (390,490) ;
	\draw    (333,523.24) -- (391,550) ;
	\draw  [color={rgb, 255:red, 0; green, 0; blue, 0 }  ,draw opacity=1 ][fill={rgb, 255:red, 238; green, 241; blue, 241 }  ,fill opacity=1 ] (135.01,297.31) .. controls (135.01,282.23) and (167.47,270) .. (207.51,270) .. controls (247.54,270) and (280,282.23) .. (280,297.31) .. controls (280,312.4) and (247.54,324.63) .. (207.51,324.63) .. controls (167.47,324.63) and (135.01,312.4) .. (135.01,297.31) -- cycle ;
	\draw   (194.52,301.19) .. controls (194.52,295.12) and (199.45,290.19) .. (205.52,290.19) .. controls (211.6,290.19) and (216.52,295.12) .. (216.52,301.19) .. controls (216.52,307.27) and (211.6,312.19) .. (205.52,312.19) .. controls (199.45,312.19) and (194.52,307.27) .. (194.52,301.19) -- cycle ;
	\draw   (238.01,301.63) .. controls (238.01,295.55) and (242.94,290.63) .. (249.01,290.63) .. controls (255.09,290.63) and (260.01,295.55) .. (260.01,301.63) .. controls (260.01,307.7) and (255.09,312.63) .. (249.01,312.63) .. controls (242.94,312.63) and (238.01,307.7) .. (238.01,301.63) -- cycle ;
	\draw   (149.52,301.19) .. controls (149.52,295.12) and (154.45,290.19) .. (160.52,290.19) .. controls (166.6,290.19) and (171.52,295.12) .. (171.52,301.19) .. controls (171.52,307.27) and (166.6,312.19) .. (160.52,312.19) .. controls (154.45,312.19) and (149.52,307.27) .. (149.52,301.19) -- cycle ;
	\draw    (205.52,312.19) -- (197,376) ;
	\draw  [color={rgb, 255:red, 0; green, 0; blue, 0 }  ,draw opacity=1 ] (300,470) .. controls (300,458.95) and (315.67,450) .. (335,450) .. controls (354.33,450) and (370,458.95) .. (370,470) .. controls (370,481.05) and (354.33,490) .. (335,490) .. controls (315.67,490) and (300,481.05) .. (300,470) -- cycle ;
	\draw    (335,490) -- (333,523.24) ;
	\draw  [fill={rgb, 255:red, 0; green, 0; blue, 0 }  ,fill opacity=1 ] (509,478) .. controls (509,475.24) and (511.24,473) .. (514,473) .. controls (516.76,473) and (519,475.24) .. (519,478) .. controls (519,480.76) and (516.76,483) .. (514,483) .. controls (511.24,483) and (509,480.76) .. (509,478) -- cycle ;
	\draw  [fill={rgb, 255:red, 0; green, 0; blue, 0 }  ,fill opacity=1 ] (509,525) .. controls (509,522.24) and (511.24,520) .. (514,520) .. controls (516.76,520) and (519,522.24) .. (519,525) .. controls (519,527.76) and (516.76,530) .. (514,530) .. controls (511.24,530) and (509,527.76) .. (509,525) -- cycle ;
	\draw    (199,503.24) -- (145,488) ;
	\draw  [fill={rgb, 255:red, 0; green, 0; blue, 0 }  ,fill opacity=1 ] (140,488) .. controls (140,485.24) and (142.24,483) .. (145,483) .. controls (147.76,483) and (150,485.24) .. (150,488) .. controls (150,490.76) and (147.76,493) .. (145,493) .. controls (142.24,493) and (140,490.76) .. (140,488) -- cycle ;
	\draw  [fill={rgb, 255:red, 0; green, 0; blue, 0 }  ,fill opacity=1 ] (139,536) .. controls (139,533.24) and (141.24,531) .. (144,531) .. controls (146.76,531) and (149,533.24) .. (149,536) .. controls (149,538.76) and (146.76,541) .. (144,541) .. controls (141.24,541) and (139,538.76) .. (139,536) -- cycle ;
	\draw    (199,503.24) -- (144,536) ;
	\draw    (466,506.38) -- (514,525) ;
	\draw    (514,478) -- (466,506.38) ;
	\draw  [color={rgb, 255:red, 0; green, 0; blue, 0 }  ,draw opacity=1 ] (285,167.5) .. controls (285,155.07) and (312.98,145) .. (347.5,145) .. controls (382.02,145) and (410,155.07) .. (410,167.5) .. controls (410,179.93) and (382.02,190) .. (347.5,190) .. controls (312.98,190) and (285,179.93) .. (285,167.5) -- cycle ;
	\draw  [color={rgb, 255:red, 0; green, 0; blue, 0 }  ,draw opacity=1 ] (494,500) .. controls (494,477.91) and (502.95,460) .. (514,460) .. controls (525.05,460) and (534,477.91) .. (534,500) .. controls (534,522.09) and (525.05,540) .. (514,540) .. controls (502.95,540) and (494,522.09) .. (494,500) -- cycle ;
	\draw  [color={rgb, 255:red, 0; green, 0; blue, 0 }  ,draw opacity=1 ] (120,510.5) .. controls (120,488.68) and (129.85,471) .. (142,471) .. controls (154.15,471) and (164,488.68) .. (164,510.5) .. controls (164,532.32) and (154.15,550) .. (142,550) .. controls (129.85,550) and (120,532.32) .. (120,510.5) -- cycle ;
	\draw   (180,512.06) .. controls (180,475.02) and (248.28,445) .. (332.5,445) .. controls (416.72,445) and (485,475.02) .. (485,512.06) .. controls (485,549.1) and (416.72,579.12) .. (332.5,579.12) .. controls (248.28,579.12) and (180,549.1) .. (180,512.06) -- cycle ;
	
	\draw (342,214) node [anchor=south east] [inner sep=0.75pt]   [align=left] {$\displaystyle v_{0}$};
	\draw (182,253) node [anchor=north west][inner sep=0.75pt]    {$U_{a}$};
	\draw (173,346.62) node [anchor=north west][inner sep=0.75pt]    {$v_{ai}^{'}$};
	\draw (327,382) node [anchor=north west][inner sep=0.75pt]    {$...$};
	\draw (433,367) node [anchor=north west][inner sep=0.75pt]    {$v_{bj}^{'}$};
	\draw (327,299) node [anchor=north west][inner sep=0.75pt]    {$...$};
	\draw (515,386) node [anchor=north west][inner sep=0.75pt]  [font=\scriptsize]  {$H[\hat{C}_{v_{bj}^{'}}]$};
	\draw (458.51,262.7) node [anchor=north west][inner sep=0.75pt]    {$U_{b}$};
	\draw (457,304) node [anchor=north west][inner sep=0.75pt]    {$j$};
	\draw (411,305) node [anchor=north west][inner sep=0.75pt]    {$1$};
	\draw (500,306.14) node [anchor=north west][inner sep=0.75pt]    {$n$};
	\draw (430.49,307.7) node [anchor=north west][inner sep=0.75pt]    {$...$};
	\draw (473.49,309.7) node [anchor=north west][inner sep=0.75pt]    {$...$};
	\draw (211,519) node [anchor=north east] [inner sep=0.75pt]    {$v_{ab}$};
	\draw (182,433) node [anchor=north west][inner sep=0.75pt]  [font=\scriptsize]  {$U_{a,b,i}$};
	\draw (457,518.38) node [anchor=north west][inner sep=0.75pt]    {$v_{ba}$};
	\draw (449,439) node [anchor=north west][inner sep=0.75pt]  [font=\scriptsize]  {$U_{b,a,j}$};
	\draw (106,368) node [anchor=north west][inner sep=0.75pt]  [font=\scriptsize]  {$H[\hat{C}_{v_{ai}^{'}}]$};
	\draw (517,495) node [anchor=north west][inner sep=0.75pt]  [rotate=-90]  {$...$};
	\draw (257,507) node [anchor=north west][inner sep=0.75pt]    {$U_{ij}^{ab}$};
	\draw (374,509) node [anchor=north west][inner sep=0.75pt]    {$U_{ji}^{ba}$};
	\draw (321,530) node [anchor=north west][inner sep=0.75pt]    {$v_{2}$};
	\draw (202.01,294.49) node [anchor=north west][inner sep=0.75pt]    {$i$};
	\draw (156.01,295.49) node [anchor=north west][inner sep=0.75pt]    {$1$};
	\draw (245.01,296.63) node [anchor=north west][inner sep=0.75pt]    {$n$};
	\draw (175.5,298.19) node [anchor=north west][inner sep=0.75pt]    {$...$};
	\draw (218.5,300.19) node [anchor=north west][inner sep=0.75pt]    {$...$};
	\draw (315,461.86) node [anchor=north west][inner sep=0.75pt]  [font=\scriptsize]  {$H[\hat{C}_{v_{2}}]$};
	\draw (343,163) node [anchor=north west][inner sep=0.75pt]    {$...$};
	\draw (145,504) node [anchor=north west][inner sep=0.75pt]  [rotate=-90]  {$...$};
	\draw (91,497) node [anchor=north west][inner sep=0.75pt]    {$U_{a,b}$};
	\draw (319,582) node [anchor=north west][inner sep=0.75pt]    {$H_{ij}^{ab}$};
	\draw (535,494) node [anchor=north west][inner sep=0.75pt]    {$U_{b,a}$};
	\draw (263,159) node [anchor=north west][inner sep=0.75pt]    {$U_{0}$};

\end{tikzpicture}
\end{center}
\caption{The graph $G'$. \label{fig:fvs}}
\end{figure}

Now, we prove that this is a parameterized reduction. 

\noindent\textbf{Proof of Sufficiency.}
Let $G=(U_1\cup\cdots\cup U_{\ell},E)$ be a yes-instance of \mcc\ and for all $a\in [\ell]$, $i_a$-th vertex in $U_a$ form a clique of size $\ell$ in $G$. We will construct a fair matching $M$ in $G'$ as follows. For each $a\in [\ell]$, $v_0$ is matched to $i_a$ in $U_a$ along with all vertices in $U_0$. Also, for each $a,b\in [\ell], a\neq b$, $v_{ab}$ is matched to all vertices in $U_{a,b,i_a}$ along with all vertices in $U^{ab}_{i_ai_b} \cup U_{a,b}$. Now, for each $a\in [\ell]$ and each $i\neq i_a$, the vertex $v'_{a,i}$ is matched to all its neighbors. Finally, for each $a,b\in [\ell], a\neq b$, each $i,j\in [n] $ where $i\neq i_a$, in the gadget $H^{ab}_{ij}$, the vertex $v_2$ is matched to all its neighbors. Every vertex in $U$ which is not yet matched, is in some $H[\hat{C}_v]$. So, we match these unmatched vertices to their pendent neighbors. It is clear that $M$ is a left-perfect matching. To see that $M$ is $L$-fair, note that if $v\in V\setminus \{v_0, v_{ab},a,b\in [\ell], a\neq b \}$, then $v$ is either matched to all its neighbors, or matched to none of its neighbors. Also, since the clique of $G$ has only one element in each $U_a$, $v_0$ is matched to one vertex of color $(a,0)$ and it pendent neighbors are of colors $(a,i)$, $i\in [n]$. So, $v_0$ is fair. Finally, the vertex $v_{ab}$ is matched to vertices in $U_{a,b,i_a}$ which are of colors $(b,1), \ldots, (b,i_a)$ along with vertices in $U^{ab}_{i_ai_b}$ which are of colors $(b,i_a+1),\ldots, (b,n)$. Other colors appear in vertices of $U_{a,b}$. So, $v_{ab} $ is also fair. 

\noindent\textbf{Proof of Necessity.}
Now, suppose that $G'$ has a left-perfect $L$-fair matching $M$. Since $M$ is left-perfect, all pendant vertices in $U$ are matched with their unique neighbors. Thus, $v_0$ has to be matched with only one vertex with color $(a,0)$ for each $a\in [\ell]$. Suppose that $v_0$ is matched to $i_a$-th vertex in $U_a$. We show that these vertices $i_1,\ldots, i_\ell$ form a clique in $G$. Let $a,b \in [\ell], a\neq b$. Since $i_a$ is matched with $v_0$, $v'_{a,i_a}$ is matched to no vertex. Therefore, $v_{ab}$ is matched to all vertices in $U_{a,b,i_a}$. Now, since $v_{ab}$ is fair, it should be matched with vertices of colors $(b,i_a+1),\ldots, (b,n)$. We show that these vertices should be in some gadget $H^{ab}_{i_aj}$, for some $j$. To see this, note that in the gadget $H^{ab}_{ij}$, the vertex $v_2$ is either matched to all its neighbors, or none of its neighbors. Therefore, $v_{ab}$ is either matched to all vertices in $U^{ab}_{ij}$, or none of vertices in $U^{ab}_{ij}$. This implies that $v_{ab}$ is matched to all vertices in $U^{ab}_{i_aj}$, for some $j$, in the gadget $H^{ab}_{i_aj}$. Now, we show that $j=i_b$. Note that the vertex $v_2$ in $H^{ab}_{i_aj}$ is matched to no vertex. So, vertices in $U^{ba}_{ji_a}$ are matched to $v_{ba}$. If $j\neq i_b$, then $v_{ba}$ is matched to either two neighbors of the same color $(a,i_b)$, or no vertex of color $(a,j)$. This contradiction implies that $j=i_b$ and therefore, $i_a$ and $i_b$ are adjacent. Therefore, $G=(U_1\cup\cdots\cup U_{\ell},E)$ is a yes-instance of \mcc.

Finally, note that in the graph $G'$, we have $\Delta_U=2$ and if one removes all vertices $v_{ab}$, $a,b\in [\ell]$ then the remaining graph is a forest consisting of trees of depth at most $4$. Thus, the feedback vertex number and the tree-depth of $G'$ is at most $O(\ell^2)$. Moreover, the number of vertices of $G'$ is at most $O(n\ell^3+n^2\ell^2+n\ell m)$, where $m$ is the number of edges of $G$. Hence, this is a parameterized reduction and the proof is complete. 
\end{proof}

\begin{theorem} \label{thm:pwCDelta}
\gfm\ is W[1]-hard with respect to $(\pw,|C|,\Delta_U)$ where $\pw$ is the path-width of the input graph,  $\Delta_U$ is the maximum degree of vertices in $U$ and $|C|$ is the number of colors. 
\end{theorem}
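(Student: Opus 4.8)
The plan is to give a parameterized reduction from \ubp\ — given item sizes $s_1,\dots,s_n$ in unary, a number $k$ of bins and a capacity $B$, decide whether the items can be packed so that each bin has total size at most $B$ — which is W[1]-hard with respect to the number $k$ of bins; we may assume $\sum_i s_i = kB$, so that a packing fills every bin exactly. From such an instance I would build a bipartite graph $G'=(U'\cup V',E')$ with colour set $C=\{1,\dots,k\}$ together with a constant number of auxiliary colours (so $|C|=O(k)$), where colour $j$ represents bin $j$. The skeleton of $G'$ consists of $k$ ``bin vertices'' $b_1,\dots,b_k\in V'$; $b_j$ is given a fairness threshold designed so that the number of colour-$j$ vertices matched to it is at most $B$, and each $b_j$ also carries attached padding sub-gadgets of the $H[\hat C]$-type used in the proof of Theorem~\ref{thm:fvstdDelta}, whose only purpose is to repair the fairness of $b_j$ with respect to the colours it does not otherwise see. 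Since these padding vertices are pendant-equipped, they contribute neither to $\Delta_{U'}$ nor to the path-width.

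To each item $i$ I would attach an ``item gadget'' $D_i$ that forces a single-bin commitment: in every $L$-fair left-perfect matching, $D_i$ ``selects'' one bin $j(i)\in[k]$ and routes exactly $s_i$ vertices of colour $j(i)$ — and no vertices of any other colour — into $b_{j(i)}$, all of its remaining vertices being absorbed by private pendant sub-gadgets. Concretely, $D_i$ contains, for each bin $j$, a ``weight channel'' consisting of $\Theta(s_i)$ colour-$j$ vertices of bounded degree (each touching only $b_j$, a neighbouring channel vertex, and a private pendant) that is forced to be entirely ``on'' (matched to $b_j$) or entirely ``off'' (matched to its pendants); and a ``selector core'' of size bounded by a function of $k$ alone, built from balancing vertices of threshold $0$ and pendant-equipped auxiliary vertices, wired to the $k$ channels so that exactly one channel is on. The only part of $D_i$ that must persist over a long stretch of a path decomposition is its selector core; the channels and pendant sub-gadgets can be swept vertex by vertex.

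Correctness then follows in both directions. A valid packing $f$ yields the matching that turns on channel $f(i)$ of each $D_i$, fills the padding of each $b_j$ so that $M(b_j)$ becomes $0$-fair, and absorbs everything else in pendants; here each $b_j$ receives $\sum_{i:f(i)=j}s_i\le B$ vertices of colour $j$. Conversely, from an $L$-fair left-perfect matching one reads off $f(i):=j(i)$ from the selector core of $D_i$; the threshold at $b_j$ combined with the all-or-nothing behaviour of the channels gives $\sum_{i:f(i)=j}s_i\le B$ for every $j$, while left-perfectness forces $\sum_j|M(b_j)|=\sum_i s_i=kB$, so every bin is filled exactly to $B$ and $f$ is a valid packing. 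For the parameters: the reduction is polynomial because the sizes are unary, $|C|=O(k)$ and $\Delta_{U'}=O(1)$ by construction, and $\pw(G')=O(k)$ via the path decomposition that keeps $\{b_1,\dots,b_k\}$ and the selector core of the current item gadget (with a sliding window over its channels and pendant sub-gadgets) in every bag.

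The main obstacle is the design and analysis of $D_i$, in particular the selector core. The subtlety is that a vertex of threshold $0$ that does not see all $k$ colours is forced to be unmatched, so the ``choose exactly one of $k$'' behaviour cannot be obtained naively; one must combine all-or-nothing fairness vertices with pendant escapes so that the only globally consistent configurations of $D_i$ are the $k$ single-bin commitments, with no leakage of weight between channels or between items — and this must be done while keeping every $U'$-vertex of bounded degree and the whole gadget of bounded width, so that $\Delta_{U'}$, $|C|$ and $\pw$ all stay $O(k)$. Establishing soundness — that every $L$-fair matching genuinely decomposes into clean per-item commitments — is the technically delicate part of the proof.
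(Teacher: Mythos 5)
Your high-level strategy---reduce from \ubp\ parameterized by the number of bins, encode bins by colours, and use all-or-nothing ``weight channels'' of $s_i$ bounded-degree vertices per item per bin---is the same as the paper's. But the proposal has a genuine gap: the two devices on which the whole argument rests are never constructed. You explicitly defer both the ``selector core'' that is supposed to force exactly one channel of $D_i$ to be on, and the soundness argument that every $L$-fair matching decomposes into clean per-item commitments, calling them ``the technically delicate part of the proof.'' That delicate part is essentially the entire content of the theorem. As you yourself observe, fairness constraints are ill-suited to expressing ``choose exactly one of $k$'': a threshold-$0$ vertex must receive an equal number of vertices of \emph{every} colour in $C$, including colours absent from its neighbourhood, so it acts as an all-or-nothing device rather than a selector, and it is far from clear that a core of size $f(k)$ wired to $k$ channels can realize one-of-$k$ behaviour using only such devices and pendant escapes.

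The paper sidesteps exactly this difficulty with two concrete ideas that are missing from your plan. First, there is no per-item selector at all: for each item $i$ the $m$ channel-controlling vertices $v_{i1},\dots,v_{im}$ share a common pool $Y_i$ of exactly $(m-1)x_i$ reference-colour vertices, each adjacent to all of $v_{i1},\dots,v_{im}$; a chain gadget forces each $v_{ij}$ to absorb either $0$ or $x_i$ of them, and left-perfectness of $Y_i$ then forces exactly $m-1$ of the $v_{ij}$ to be ``on,'' so exactly one set $X_{ij}$ escapes to the centre---a counting argument replaces the selector. (This shared pool is also why the paper only achieves $\Delta_U=m$, not the $O(1)$ you claim; if you insist on constant-degree channel vertices touching only $b_j$, a neighbour and a pendant, you still owe a mechanism for the one-of-$k$ choice.) Second, the capacity bound is pinned by a reference colour class: a single central vertex $v_0$ with threshold $0$ receives exactly $B$ forced neighbours of an auxiliary colour, so fairness forces exactly $B$ vertices of each bin-colour at $v_0$. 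Your statement that $b_j$ is ``given a fairness threshold designed so that the number of colour-$j$ vertices matched to it is at most $B$'' needs such a pinning mechanism made explicit, since fairness only constrains differences between colour counts. Until the item gadget and its soundness proof are supplied, the proposal is a plausible plan rather than a proof.
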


\begin{proof}
We give a reduction form \ubp\ which is known to be W[1]-hard with respect to the number of bins $m$ \cite{jansen}. 

\prob{Unary Bin Packing}{A number of positive integers $x_1,\ldots, x_n$, $m$ and $B$ which are given in unary encoding and we have $\sum_{i=1}^n{x_i}=mB$.}{Can we partition $x_1,\ldots, x_n$ into $m$ bins of capacity $B$?}

Let $(x_1,\ldots, x_n;m,B)$ be an instance of \ubp\ and we construct an instance of \gfm\ where the path-width of the graph as well as the number of colors and the maximum degree of vertices in $U$ is $O(m)$. 

First, let $C=[m+1]$ be the set of colors and $H=(U_H \cup V_H;E_H)$ be the bipartite graph constructed as follows. The vertex set $U_H$ consists of the disjoint union of sets $X$, $Y_i$ and $X_{ij}$, $i\in [n]$ and $j\in [m]$, where $|X|=B$, $|X_{ij}|=x_i$ and $|Y_i|=(m-1)x_i$ for each $i\in [n]$ and $j\in [m]$. Each vertex in   $X_{ij}$ has color $j$ and each vertex in $X\cup Y_i$ has color $m+1$. Also, we have $V_H=\{v_0, v_{ij}, i\in [n], j\in [m] \}$. For all vertices $v\in V_H$ we have $L(v)=0$. Now, the edges of $H$ are defined as follows. First, $v_0$ is complete to $X$ and $\cup_{i=1}^n \cup_{j=1}^m X_{ij}$. Also, each vertex $v_{ij}$ is complete to $X_{ij}\cup Y_i$ (see Figure~\ref{fig:pw}). We also construct two gadgets as follows.

\paragraph*{Gadget $H[\hat{C},k]$}
For every positive integer $k$ and every subset of colors $\hat{C}\subseteq C$, we define a gadget $H[\hat{C},k]$ as follows. The bipartite graph $H[\hat{C},k]$ has a bipartition $(U,V)$ where $|U|=|\hat{C}|k$ and for each color $c\in \hat{C}$, there are $k$ vertices $u^1_c,\ldots, u^k_c\in U$ of color $c$. There is a vertex $\bar{v}\in V$ which is complete to $U$. Also, each vertex $u^i_c\in U$ has a private neighbor $v^i_c$ in $V$. Also, $L(v)=0$ and $L(v^i_c)=1$ for all $c\in \hat{C}$. There is no more edges in $H[\hat{C},k]$ (see Figure~\ref{fig:pwgad1}).  
       
\begin{figure}[ht]
\begin{center}
\begin{tikzpicture}[x=0.75pt,y=0.75pt,yscale=-1,xscale=1]
	
	\draw  [color={rgb, 255:red, 29; green, 2; blue, 2 }  ,draw opacity=1 ][fill={rgb, 255:red, 36; green, 4; blue, 4 }  ,fill opacity=1 ] (305.29,101.56) .. controls (305.27,99.35) and (303.19,97.58) .. (300.66,97.6) .. controls (298.12,97.62) and (296.08,99.42) .. (296.1,101.63) .. controls (296.13,103.84) and (298.2,105.61) .. (300.74,105.59) .. controls (303.28,105.57) and (305.32,103.77) .. (305.29,101.56) -- cycle ;
	\draw  [color={rgb, 255:red, 29; green, 2; blue, 2 }  ,draw opacity=1 ][fill={rgb, 255:red, 36; green, 4; blue, 4 }  ,fill opacity=1 ] (254.8,200.03) .. controls (254.78,197.82) and (252.7,196.05) .. (250.16,196.07) .. controls (247.63,196.09) and (245.59,197.89) .. (245.61,200.1) .. controls (245.64,202.31) and (247.71,204.09) .. (250.25,204.07) .. controls (252.79,204.05) and (254.83,202.24) .. (254.8,200.03) -- cycle ;
	\draw  [color={rgb, 255:red, 29; green, 2; blue, 2 }  ,draw opacity=1 ][fill={rgb, 255:red, 36; green, 4; blue, 4 }  ,fill opacity=1 ] (356.22,151.12) .. controls (356.19,148.92) and (354.11,147.14) .. (351.58,147.16) .. controls (349.04,147.18) and (347,148.99) .. (347.02,151.2) .. controls (347.05,153.41) and (349.12,155.18) .. (351.66,155.16) .. controls (354.2,155.14) and (356.24,153.33) .. (356.22,151.12) -- cycle ;
	\draw  [color={rgb, 255:red, 29; green, 2; blue, 2 }  ,draw opacity=1 ][fill={rgb, 255:red, 36; green, 4; blue, 4 }  ,fill opacity=1 ] (305.17,200.27) .. controls (305.15,198.06) and (303.07,196.29) .. (300.53,196.31) .. controls (297.99,196.33) and (295.96,198.13) .. (295.98,200.34) .. controls (296,202.55) and (298.08,204.32) .. (300.62,204.3) .. controls (303.16,204.28) and (305.2,202.48) .. (305.17,200.27) -- cycle ;
	\draw  [color={rgb, 255:red, 29; green, 2; blue, 2 }  ,draw opacity=1 ][fill={rgb, 255:red, 36; green, 4; blue, 4 }  ,fill opacity=1 ] (305.6,150.27) .. controls (305.58,148.06) and (303.5,146.29) .. (300.97,146.31) .. controls (298.43,146.33) and (296.39,148.13) .. (296.41,150.34) .. controls (296.44,152.55) and (298.51,154.33) .. (301.05,154.31) .. controls (303.59,154.29) and (305.63,152.48) .. (305.6,150.27) -- cycle ;
	\draw  [color={rgb, 255:red, 29; green, 2; blue, 2 }  ,draw opacity=1 ][fill={rgb, 255:red, 36; green, 4; blue, 4 }  ,fill opacity=1 ] (255.19,150.03) .. controls (255.17,147.82) and (253.09,146.05) .. (250.55,146.07) .. controls (248.01,146.09) and (245.98,147.9) .. (246,150.1) .. controls (246.02,152.31) and (248.1,154.09) .. (250.64,154.07) .. controls (253.18,154.05) and (255.22,152.24) .. (255.19,150.03) -- cycle ;
	\draw    (300.7,101.6) -- (250.6,150.07) ;
	\draw    (250.6,150.07) -- (250.21,200.07) ;
	\draw    (300.7,101.6) -- (351.62,151.16) ;
	\draw  [color={rgb, 255:red, 29; green, 2; blue, 2 }  ,draw opacity=1 ][fill={rgb, 255:red, 36; green, 4; blue, 4 }  ,fill opacity=1 ] (354.84,200.12) .. controls (354.81,197.91) and (352.74,196.14) .. (350.2,196.16) .. controls (347.66,196.18) and (345.62,197.99) .. (345.65,200.19) .. controls (345.67,202.4) and (347.75,204.18) .. (350.28,204.16) .. controls (352.82,204.14) and (354.86,202.33) .. (354.84,200.12) -- cycle ;
	\draw    (351.62,151.16) -- (350.24,200.16) ;
	\draw    (301.01,150.31) -- (300.58,200.3) ;
	\draw    (300.7,101.6) -- (301.01,150.31) ;
	
	\draw (306.33,86.68) node [anchor=north west][inner sep=0.75pt]  [rotate=-359.47]  {$\overline{v}$};
	\draw (250.64,154.07) node [anchor=north west][inner sep=0.75pt]  [rotate=-359.47]  {$u_{c}^{1}$};
	\draw (301.05,154.31) node [anchor=north west][inner sep=0.75pt]  [rotate=-359.47]  {$u_{c}^{2}$};
	\draw (351.66,155.16) node [anchor=north west][inner sep=0.75pt]  [rotate=-359.47]  {$u_{c}^{k}$};
	\draw (317.71,149.65) node [anchor=north west][inner sep=0.75pt]    {$...$};
	\draw (250.25,204.07) node [anchor=north west][inner sep=0.75pt]  [rotate=-359.47]  {$v_{c}^{1}$};
	\draw (300.62,204.3) node [anchor=north west][inner sep=0.75pt]  [rotate=-359.47]  {$v_{c}^{2}$};
	\draw (350.28,204.16) node [anchor=north west][inner sep=0.75pt]  [rotate=-359.47]  {$v_{c}^{k}$};
	\draw (317.71,198.67) node [anchor=north west][inner sep=0.75pt]    {$...$};
	
\end{tikzpicture}
\caption{The gadget graph $H[\hat{C},k]$. \label{fig:pwgad1}}	
\end{center}
\end{figure}
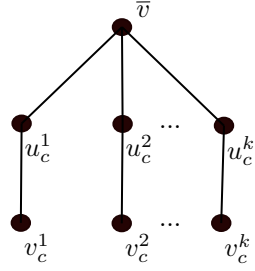

\paragraph*{Gadget $H_1[k,l]$}
Also, for each positive integer $k$ and each $l\in [m]$, we construct a gadget $H_1[k,l]$ as follows. The bipartite graph $H_1[k,l]$ has a bipartition $(U,V)$ where $U=\{u_i,u'_i,u''_j,u'''_j; i\in [k], j\in [k-2]\}$ and $V=\{\hat{v},v'_i, v''_j; i\in [2,k], j\in [2,k-1]\}$. For all $v\in V$, we have $L(v)=0$. All vertices $u_1,\ldots, u_k$ are of color $l$. The vertex $u'_1$ is of color $l+2 \pmod m$. For each $i\in [2, k]$, the vertex $u'_i$ is of color $l+1 \pmod m$. For each $j\in [k-2]$, the vertex $u''_j$ is of color $l$ and the vertex $u'''_j$ is of color $l+2$. The edges are as follows. The vertex $\hat{v}$ is adjacent to all $u_i$,  $i\in [k]$. Each vertex $v_i$ is adjacent to $u_i$ and $u'_i$. For each $i\in [2,k]$, $v'_i$ is adjacent to $u'_i$. Also, $v'_2$ is adjacent to $u'_1$. For each $j\in [k-2]$, $u''_j$ is adjacent to $v'_{j+1}$ and $v''_{j+1}$.  For each $j\in [k-2]$, $u'''_j$ is adjacent to $v''_{j+1}$ and $v'_{j+2}$. Now, for each vertex $v\in V$, let $\hat{C}_v$ be the set of colors in $C$ which are absent in neighbors of $v$.  For each vertex $v\in V\setminus \{\hat{v}\}$, we add a gadget $H[\hat{C}_v,1]$ and identify $\bar{v}$ and $v$ (see Figure~\ref{fig:pwgad2}).

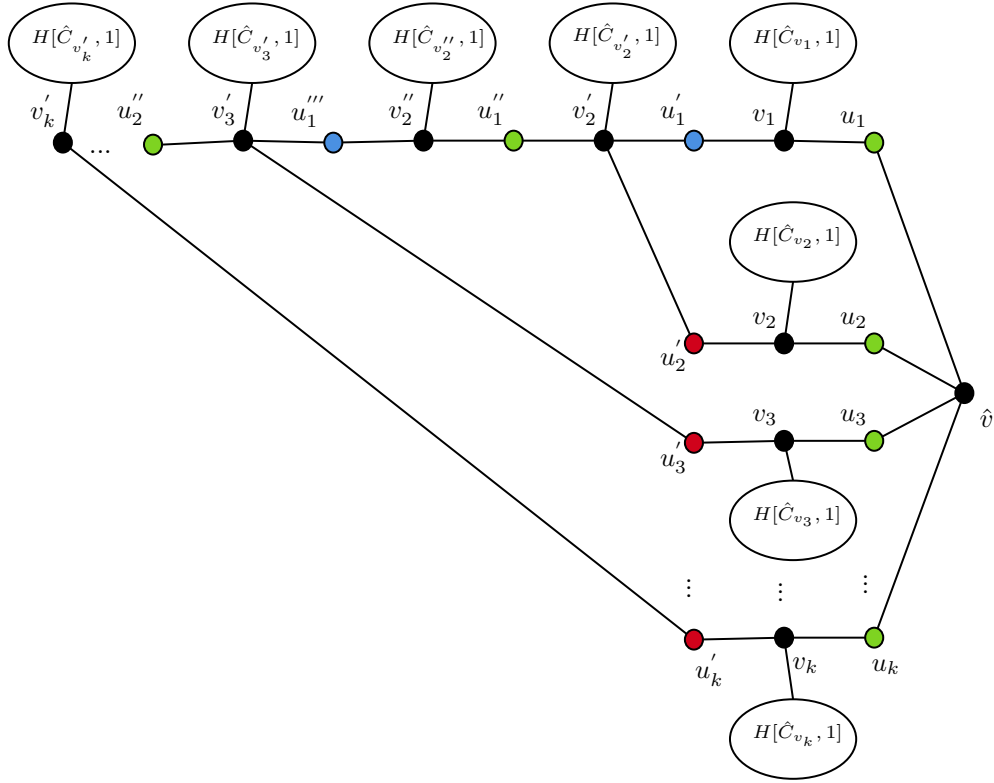
\begin{figure}[ht]
\begin{center}   
\begin{tikzpicture}[x=0.75pt,y=0.75pt,yscale=-1,xscale=.9]
	
	\draw    (500,200) -- (450,199) ;
	\draw    (500,301) -- (450,301) ;
	\draw    (500,449) -- (450,449) ;
	\draw    (450,199) -- (400,199) ;
	\draw    (450,301) -- (400,301) ;
	\draw    (450,449) -- (400,450) ;
	\draw    (550,326) -- (500,301) ;
	\draw    (550,326) -- (500,200) ;
	\draw    (500,449) -- (550,326) ;
	\draw    (350,199) -- (400,301) ;
	\draw    (400,199) -- (350,199) ;
	\draw    (350,199) -- (300,199) ;
	\draw    (300,199) -- (250,199) ;
	\draw    (500,350) -- (450,350) ;
	\draw    (450,350) -- (400,351) ;
	\draw    (500,350) -- (550,326) ;
	\draw    (50,200) -- (400,450) ;
	\draw    (400,351) -- (150,199) ;
	\draw    (250,199) -- (200,200) ;
	\draw    (200,200) -- (150,199) ;
	\draw  [fill={rgb, 255:red, 0; green, 0; blue, 0 }  ,fill opacity=1 ] (545,326) .. controls (545,323.24) and (547.24,321) .. (550,321) .. controls (552.76,321) and (555,323.24) .. (555,326) .. controls (555,328.76) and (552.76,331) .. (550,331) .. controls (547.24,331) and (545,328.76) .. (545,326) -- cycle ;
	\draw  [fill={rgb, 255:red, 126; green, 211; blue, 33 }  ,fill opacity=1 ] (495,200) .. controls (495,197.24) and (497.24,195) .. (500,195) .. controls (502.76,195) and (505,197.24) .. (505,200) .. controls (505,202.76) and (502.76,205) .. (500,205) .. controls (497.24,205) and (495,202.76) .. (495,200) -- cycle ;
	\draw  [fill={rgb, 255:red, 0; green, 0; blue, 0 }  ,fill opacity=1 ] (445,199) .. controls (445,196.24) and (447.24,194) .. (450,194) .. controls (452.76,194) and (455,196.24) .. (455,199) .. controls (455,201.76) and (452.76,204) .. (450,204) .. controls (447.24,204) and (445,201.76) .. (445,199) -- cycle ;
	\draw  [fill={rgb, 255:red, 74; green, 144; blue, 226 }  ,fill opacity=1 ] (395,199) .. controls (395,196.24) and (397.24,194) .. (400,194) .. controls (402.76,194) and (405,196.24) .. (405,199) .. controls (405,201.76) and (402.76,204) .. (400,204) .. controls (397.24,204) and (395,201.76) .. (395,199) -- cycle ;
	\draw  [fill={rgb, 255:red, 0; green, 0; blue, 0 }  ,fill opacity=1 ] (345,199) .. controls (345,196.24) and (347.24,194) .. (350,194) .. controls (352.76,194) and (355,196.24) .. (355,199) .. controls (355,201.76) and (352.76,204) .. (350,204) .. controls (347.24,204) and (345,201.76) .. (345,199) -- cycle ;
	\draw  [fill={rgb, 255:red, 126; green, 211; blue, 33 }  ,fill opacity=1 ] (295,199) .. controls (295,196.24) and (297.24,194) .. (300,194) .. controls (302.76,194) and (305,196.24) .. (305,199) .. controls (305,201.76) and (302.76,204) .. (300,204) .. controls (297.24,204) and (295,201.76) .. (295,199) -- cycle ;
	\draw  [fill={rgb, 255:red, 0; green, 0; blue, 0 }  ,fill opacity=1 ] (245,199) .. controls (245,196.24) and (247.24,194) .. (250,194) .. controls (252.76,194) and (255,196.24) .. (255,199) .. controls (255,201.76) and (252.76,204) .. (250,204) .. controls (247.24,204) and (245,201.76) .. (245,199) -- cycle ;
	\draw  [fill={rgb, 255:red, 74; green, 144; blue, 226 }  ,fill opacity=1 ] (195,200) .. controls (195,197.24) and (197.24,195) .. (200,195) .. controls (202.76,195) and (205,197.24) .. (205,200) .. controls (205,202.76) and (202.76,205) .. (200,205) .. controls (197.24,205) and (195,202.76) .. (195,200) -- cycle ;
	\draw  [fill={rgb, 255:red, 0; green, 0; blue, 0 }  ,fill opacity=1 ] (145,199) .. controls (145,196.24) and (147.24,194) .. (150,194) .. controls (152.76,194) and (155,196.24) .. (155,199) .. controls (155,201.76) and (152.76,204) .. (150,204) .. controls (147.24,204) and (145,201.76) .. (145,199) -- cycle ;
	\draw  [fill={rgb, 255:red, 126; green, 211; blue, 33 }  ,fill opacity=1 ] (495,301) .. controls (495,298.24) and (497.24,296) .. (500,296) .. controls (502.76,296) and (505,298.24) .. (505,301) .. controls (505,303.76) and (502.76,306) .. (500,306) .. controls (497.24,306) and (495,303.76) .. (495,301) -- cycle ;
	\draw  [fill={rgb, 255:red, 0; green, 0; blue, 0 }  ,fill opacity=1 ] (445,301) .. controls (445,298.24) and (447.24,296) .. (450,296) .. controls (452.76,296) and (455,298.24) .. (455,301) .. controls (455,303.76) and (452.76,306) .. (450,306) .. controls (447.24,306) and (445,303.76) .. (445,301) -- cycle ;
	\draw  [fill={rgb, 255:red, 208; green, 2; blue, 27 }  ,fill opacity=1 ] (395,301) .. controls (395,298.24) and (397.24,296) .. (400,296) .. controls (402.76,296) and (405,298.24) .. (405,301) .. controls (405,303.76) and (402.76,306) .. (400,306) .. controls (397.24,306) and (395,303.76) .. (395,301) -- cycle ;
	\draw  [fill={rgb, 255:red, 126; green, 211; blue, 33 }  ,fill opacity=1 ] (495,350) .. controls (495,347.24) and (497.24,345) .. (500,345) .. controls (502.76,345) and (505,347.24) .. (505,350) .. controls (505,352.76) and (502.76,355) .. (500,355) .. controls (497.24,355) and (495,352.76) .. (495,350) -- cycle ;
	\draw  [fill={rgb, 255:red, 208; green, 2; blue, 27 }  ,fill opacity=1 ] (395,351) .. controls (395,348.24) and (397.24,346) .. (400,346) .. controls (402.76,346) and (405,348.24) .. (405,351) .. controls (405,353.76) and (402.76,356) .. (400,356) .. controls (397.24,356) and (395,353.76) .. (395,351) -- cycle ;
	\draw  [fill={rgb, 255:red, 0; green, 0; blue, 0 }  ,fill opacity=1 ] (445,350) .. controls (445,347.24) and (447.24,345) .. (450,345) .. controls (452.76,345) and (455,347.24) .. (455,350) .. controls (455,352.76) and (452.76,355) .. (450,355) .. controls (447.24,355) and (445,352.76) .. (445,350) -- cycle ;
	\draw  [fill={rgb, 255:red, 208; green, 2; blue, 27 }  ,fill opacity=1 ] (395,450) .. controls (395,447.24) and (397.24,445) .. (400,445) .. controls (402.76,445) and (405,447.24) .. (405,450) .. controls (405,452.76) and (402.76,455) .. (400,455) .. controls (397.24,455) and (395,452.76) .. (395,450) -- cycle ;
	\draw  [fill={rgb, 255:red, 0; green, 0; blue, 0 }  ,fill opacity=1 ] (445,449) .. controls (445,446.24) and (447.24,444) .. (450,444) .. controls (452.76,444) and (455,446.24) .. (455,449) .. controls (455,451.76) and (452.76,454) .. (450,454) .. controls (447.24,454) and (445,451.76) .. (445,449) -- cycle ;
	\draw  [fill={rgb, 255:red, 126; green, 211; blue, 33 }  ,fill opacity=1 ] (495,449) .. controls (495,446.24) and (497.24,444) .. (500,444) .. controls (502.76,444) and (505,446.24) .. (505,449) .. controls (505,451.76) and (502.76,454) .. (500,454) .. controls (497.24,454) and (495,451.76) .. (495,449) -- cycle ;
	\draw  [fill={rgb, 255:red, 126; green, 211; blue, 33 }  ,fill opacity=1 ] (95,201) .. controls (95,198.24) and (97.24,196) .. (100,196) .. controls (102.76,196) and (105,198.24) .. (105,201) .. controls (105,203.76) and (102.76,206) .. (100,206) .. controls (97.24,206) and (95,203.76) .. (95,201) -- cycle ;
	\draw  [fill={rgb, 255:red, 0; green, 0; blue, 0 }  ,fill opacity=1 ] (45,200) .. controls (45,197.24) and (47.24,195) .. (50,195) .. controls (52.76,195) and (55,197.24) .. (55,200) .. controls (55,202.76) and (52.76,205) .. (50,205) .. controls (47.24,205) and (45,202.76) .. (45,200) -- cycle ;
	\draw    (105,201) -- (150,199) ;
	\draw  [color={rgb, 255:red, 0; green, 0; blue, 0 }  ,draw opacity=1 ] (420,150) .. controls (420,138.95) and (435.67,130) .. (455,130) .. controls (474.33,130) and (490,138.95) .. (490,150) .. controls (490,161.05) and (474.33,170) .. (455,170) .. controls (435.67,170) and (420,161.05) .. (420,150) -- cycle ;
	\draw  [color={rgb, 255:red, 0; green, 0; blue, 0 }  ,draw opacity=1 ] (420,250) .. controls (420,238.95) and (435.67,230) .. (455,230) .. controls (474.33,230) and (490,238.95) .. (490,250) .. controls (490,261.05) and (474.33,270) .. (455,270) .. controls (435.67,270) and (420,261.05) .. (420,250) -- cycle ;
	\draw  [color={rgb, 255:red, 0; green, 0; blue, 0 }  ,draw opacity=1 ] (420,390) .. controls (420,378.95) and (435.67,370) .. (455,370) .. controls (474.33,370) and (490,378.95) .. (490,390) .. controls (490,401.05) and (474.33,410) .. (455,410) .. controls (435.67,410) and (420,401.05) .. (420,390) -- cycle ;
	\draw  [color={rgb, 255:red, 0; green, 0; blue, 0 }  ,draw opacity=1 ] (420,500) .. controls (420,488.95) and (435.67,480) .. (455,480) .. controls (474.33,480) and (490,488.95) .. (490,500) .. controls (490,511.05) and (474.33,520) .. (455,520) .. controls (435.67,520) and (420,511.05) .. (420,500) -- cycle ;
	\draw    (455,170) -- (450,199) ;
	\draw    (455,270) -- (450,301) ;
	\draw    (450,350) -- (455,370) ;
	\draw    (450,449) -- (455,480) ;
	\draw  [color={rgb, 255:red, 0; green, 0; blue, 0 }  ,draw opacity=1 ] (320,150) .. controls (320,138.95) and (335.67,130) .. (355,130) .. controls (374.33,130) and (390,138.95) .. (390,150) .. controls (390,161.05) and (374.33,170) .. (355,170) .. controls (335.67,170) and (320,161.05) .. (320,150) -- cycle ;
	\draw    (355,170) -- (350,199) ;
	\draw  [color={rgb, 255:red, 0; green, 0; blue, 0 }  ,draw opacity=1 ] (220,150) .. controls (220,138.95) and (235.67,130) .. (255,130) .. controls (274.33,130) and (290,138.95) .. (290,150) .. controls (290,161.05) and (274.33,170) .. (255,170) .. controls (235.67,170) and (220,161.05) .. (220,150) -- cycle ;
	\draw    (255,170) -- (250,199) ;
	\draw  [color={rgb, 255:red, 0; green, 0; blue, 0 }  ,draw opacity=1 ] (120,150) .. controls (120,138.95) and (135.67,130) .. (155,130) .. controls (174.33,130) and (190,138.95) .. (190,150) .. controls (190,161.05) and (174.33,170) .. (155,170) .. controls (135.67,170) and (120,161.05) .. (120,150) -- cycle ;
	\draw    (155,170) -- (150,199) ;
	\draw  [color={rgb, 255:red, 0; green, 0; blue, 0 }  ,draw opacity=1 ] (20,150) .. controls (20,138.95) and (35.67,130) .. (55,130) .. controls (74.33,130) and (90,138.95) .. (90,150) .. controls (90,161.05) and (74.33,170) .. (55,170) .. controls (35.67,170) and (20,161.05) .. (20,150) -- cycle ;
	\draw    (55,170) -- (50,200) ;
	
	\draw (556.84,332.39) node [anchor=north west][inner sep=0.75pt]  [rotate=-359.47]  {$\hat{v}$};
	\draw (498,193.6) node [anchor=south east] [inner sep=0.75pt]    {$u_{1}$};
	\draw (498,343.6) node [anchor=south east] [inner sep=0.75pt]    {$u_{3}$};
	\draw (497.03,458.38) node [anchor=north west][inner sep=0.75pt]  [rotate=-359.47]  {$u_{k}$};
	\draw (447.87,192.7) node [anchor=south east] [inner sep=0.75pt]    {$v_{1}$};
	\draw (448,294.6) node [anchor=south east] [inner sep=0.75pt]    {$v_{2}$};
	\draw (453.03,457.38) node [anchor=north west][inner sep=0.75pt]  [rotate=-359.47]  {$v_{k}$};
	\draw (398,192.6) node [anchor=south east] [inner sep=0.75pt]    {$u_{1}^{'}$};
	\draw (398,307) node [anchor=east] [inner sep=0.75pt]    {$u_{2}^{'}$};
	\draw (399,455.4) node [anchor=north west][inner sep=0.75pt]    {$u_{k}^{'}$};
	\draw (348,192.6) node [anchor=south east] [inner sep=0.75pt]    {$v_{2}^{'}$};
	\draw (298,192.6) node [anchor=south east] [inner sep=0.75pt]    {$u_{1}^{''}$};
	\draw (248,192.6) node [anchor=south east] [inner sep=0.75pt]    {$v_{2}^{''}$};
	\draw (497.87,294.71) node [anchor=south east] [inner sep=0.75pt]    {$u_{2}$};
	\draw (448,343.6) node [anchor=south east] [inner sep=0.75pt]    {$v_{3}$};
	\draw (398,367.6) node [anchor=south east] [inner sep=0.75pt]    {$u_{3}^{'}$};
	\draw (198,193.6) node [anchor=south east] [inner sep=0.75pt]    {$u_{1}^{'''}$};
	\draw (148,192.6) node [anchor=south east] [inner sep=0.75pt]    {$v_{3}^{'}$};
	\draw (48,193.6) node [anchor=south east] [inner sep=0.75pt]    {$v_{k}^{'}$};
	\draw (98,192.6) node [anchor=south east] [inner sep=0.75pt]    {$u_{2}^{''}$};
	\draw (63,202.4) node [anchor=north west][inner sep=0.75pt]    {$...$};
	\draw (497.6,415) node [anchor=north west][inner sep=0.75pt]  [rotate=-90]  {$...$};
	\draw (450.6,419) node [anchor=north west][inner sep=0.75pt]  [rotate=-90]  {$...$};
	\draw (399.6,417) node [anchor=north west][inner sep=0.75pt]  [rotate=-90]  {$...$};
	\draw (431,139.4) node [anchor=north west][inner sep=0.75pt]  [font=\scriptsize]  {$H[\hat{C}_{v_{1}} ,1]$};
	\draw (431,239.4) node [anchor=north west][inner sep=0.75pt]  [font=\scriptsize]  {$H[\hat{C}_{v_{2}} ,1]$};
	\draw (431,379.4) node [anchor=north west][inner sep=0.75pt]  [font=\scriptsize]  {$H[\hat{C}_{v_{3}} ,1]$};
	\draw (431,489.4) node [anchor=north west][inner sep=0.75pt]  [font=\scriptsize]  {$H[\hat{C}_{v_{k}} ,1]$};
	\draw (331,138.4) node [anchor=north west][inner sep=0.75pt]  [font=\scriptsize]  {$H[\hat{C}_{v_{2}^{'}} ,1]$};
	\draw (231,139.4) node [anchor=north west][inner sep=0.75pt]  [font=\scriptsize]  {$H[\hat{C}_{v_{2}^{''}} ,1]$};
	\draw (131,139.4) node [anchor=north west][inner sep=0.75pt]  [font=\scriptsize]  {$H[\hat{C}_{v_{3}^{'}} ,1]$};
	\draw (31,139.4) node [anchor=north west][inner sep=0.75pt]  [font=\scriptsize]  {$H[\hat{C}_{v_{k}^{'}} ,1]$};
\end{tikzpicture}
\caption{The gadget graph $H_1[k,l]$. \label{fig:pwgad2}}	
\end{center}
\end{figure}

 Now, we are ready to construct the reduction. The graph $G$ is defined as follows. We take the graph $H$ and for each $i\in [n]$ and $j\in [m]$, we add a disjoint copy of the gadget $H_1[x_i,j+1]$ ($j+1$ is taken module $m$) and identify $\hat{v}$ to $v_{ij}$. Also, let $\hat{C}_{ij}$ be the set of colors in $C$ which are absent in neighbors of $v_{ij}$.  Finally, we add a gadget $H[\hat{C}_{ij},x_i]$ and identify $\bar{v}$ and $v_{ij}$. Let $G$ be the obtained graph (see Figure~\ref{fig:pw}). \\
 
 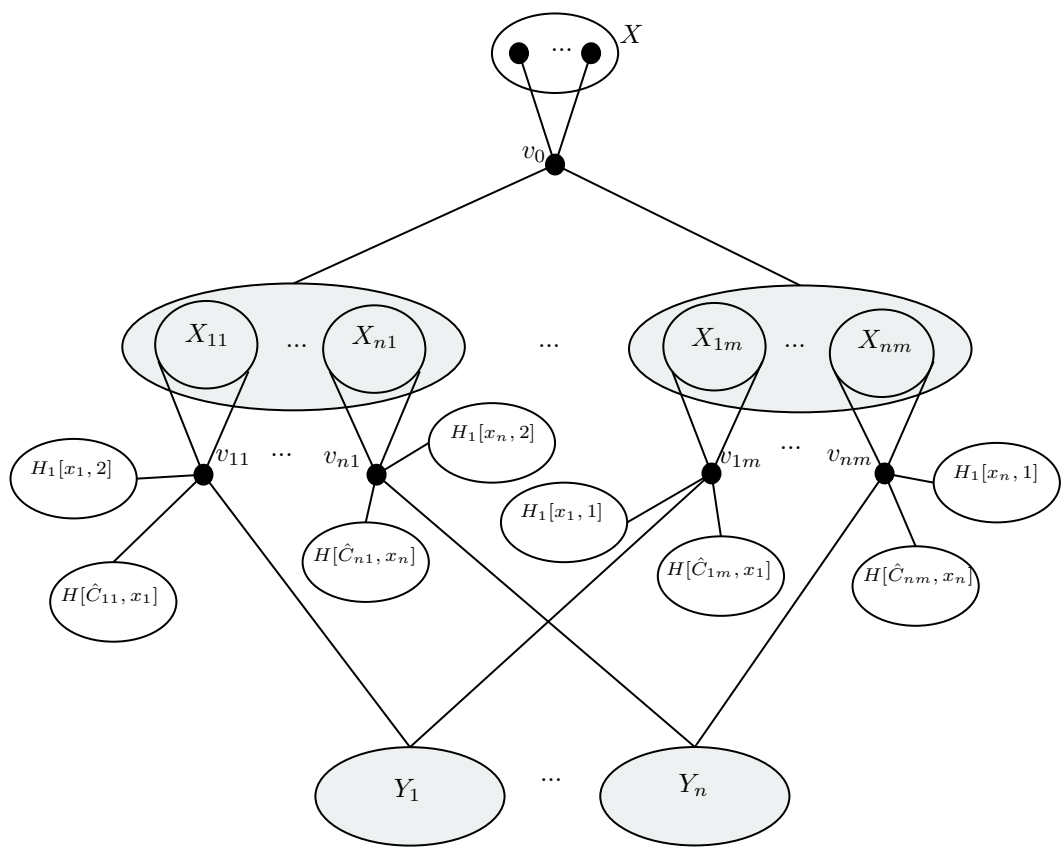
\begin{figure}[ht]
 	\begin{center}

\tikzset{every picture/.style={line width=0.75pt}} 

\begin{tikzpicture}[x=0.75pt,y=0.75pt,yscale=-1,xscale=.9]
	
	\draw  [color={rgb, 255:red, 0; green, 0; blue, 0 }  ,draw opacity=1 ][fill={rgb, 255:red, 238; green, 241; blue, 241 }  ,fill opacity=1 ] (110,302.72) .. controls (110,285.09) and (152.53,270.81) .. (204.99,270.81) .. controls (257.44,270.81) and (299.97,285.09) .. (299.97,302.72) .. controls (299.97,320.34) and (257.44,334.63) .. (204.99,334.63) .. controls (152.53,334.63) and (110,320.34) .. (110,302.72) -- cycle ;
	\draw   (128.23,301.62) .. controls (128.23,289.46) and (140.92,279.61) .. (156.58,279.61) .. controls (172.24,279.61) and (184.93,289.46) .. (184.93,301.62) .. controls (184.93,313.77) and (172.24,323.63) .. (156.58,323.63) .. controls (140.92,323.63) and (128.23,313.77) .. (128.23,301.62) -- cycle ;
	\draw   (221.39,303.82) .. controls (221.39,291.66) and (234.08,281.81) .. (249.74,281.81) .. controls (265.4,281.81) and (278.1,291.66) .. (278.1,303.82) .. controls (278.1,315.97) and (265.4,325.83) .. (249.74,325.83) .. controls (234.08,325.83) and (221.39,315.97) .. (221.39,303.82) -- cycle ;
	\draw  [color={rgb, 255:red, 0; green, 0; blue, 0 }  ,draw opacity=1 ][fill={rgb, 255:red, 238; green, 241; blue, 241 }  ,fill opacity=1 ] (391.03,303.72) .. controls (391.03,286.1) and (433.56,271.81) .. (486.01,271.81) .. controls (538.47,271.81) and (581,286.1) .. (581,303.72) .. controls (581,321.35) and (538.47,335.63) .. (486.01,335.63) .. controls (433.56,335.63) and (391.03,321.35) .. (391.03,303.72) -- cycle ;
	\draw   (410.07,302.62) .. controls (410.07,290.47) and (422.76,280.61) .. (438.42,280.61) .. controls (454.08,280.61) and (466.77,290.47) .. (466.77,302.62) .. controls (466.77,314.78) and (454.08,324.63) .. (438.42,324.63) .. controls (422.76,324.63) and (410.07,314.78) .. (410.07,302.62) -- cycle ;
	\draw   (502.42,305.92) .. controls (502.42,293.77) and (515.31,283.91) .. (531.21,283.91) .. controls (547.11,283.91) and (560,293.77) .. (560,305.92) .. controls (560,318.08) and (547.11,327.93) .. (531.21,327.93) .. controls (515.31,327.93) and (502.42,318.08) .. (502.42,305.92) -- cycle ;
	\draw  [color={rgb, 255:red, 0; green, 0; blue, 0 }  ,draw opacity=1 ][fill={rgb, 255:red, 238; green, 241; blue, 241 }  ,fill opacity=1 ] (217.09,528.81) .. controls (217.09,515.14) and (240.58,504.05) .. (269.55,504.05) .. controls (298.52,504.05) and (322,515.14) .. (322,528.81) .. controls (322,542.49) and (298.52,553.57) .. (269.55,553.57) .. controls (240.58,553.57) and (217.09,542.49) .. (217.09,528.81) -- cycle ;
	\draw  [color={rgb, 255:red, 0; green, 0; blue, 0 }  ,draw opacity=1 ][fill={rgb, 255:red, 238; green, 241; blue, 241 }  ,fill opacity=1 ] (375.09,528.74) .. controls (375.09,515.06) and (398.58,503.98) .. (427.55,503.98) .. controls (456.52,503.98) and (480,515.06) .. (480,528.74) .. controls (480,542.41) and (456.52,553.49) .. (427.55,553.49) .. controls (398.58,553.49) and (375.09,542.41) .. (375.09,528.74) -- cycle ;
	\draw  [fill={rgb, 255:red, 0; green, 0; blue, 0 }  ,fill opacity=1 ] (345,211) .. controls (345,208.24) and (347.24,206) .. (350,206) .. controls (352.76,206) and (355,208.24) .. (355,211) .. controls (355,213.76) and (352.76,216) .. (350,216) .. controls (347.24,216) and (345,213.76) .. (345,211) -- cycle ;
	\draw    (350,211) -- (204.99,270.81) ;
	\draw    (350,211) -- (486.01,271.81) ;
	\draw  [fill={rgb, 255:red, 0; green, 0; blue, 0 }  ,fill opacity=1 ] (246,367) .. controls (246,364.24) and (248.24,362) .. (251,362) .. controls (253.76,362) and (256,364.24) .. (256,367) .. controls (256,369.76) and (253.76,372) .. (251,372) .. controls (248.24,372) and (246,369.76) .. (246,367) -- cycle ;
	\draw  [fill={rgb, 255:red, 0; green, 0; blue, 0 }  ,fill opacity=1 ] (150,367) .. controls (150,364.24) and (152.24,362) .. (155,362) .. controls (157.76,362) and (160,364.24) .. (160,367) .. controls (160,369.76) and (157.76,372) .. (155,372) .. controls (152.24,372) and (150,369.76) .. (150,367) -- cycle ;
	\draw    (130,310) -- (155,367) ;
	\draw    (180,315) -- (155,367) ;
	\draw    (225,315) -- (251,367) ;
	\draw    (275,315) -- (251,367) ;
	\draw  [color={rgb, 255:red, 0; green, 0; blue, 0 }  ,draw opacity=1 ] (48,369) .. controls (48,357.95) and (63.67,349) .. (83,349) .. controls (102.33,349) and (118,357.95) .. (118,369) .. controls (118,380.05) and (102.33,389) .. (83,389) .. controls (63.67,389) and (48,380.05) .. (48,369) -- cycle ;
	\draw  [fill={rgb, 255:red, 0; green, 0; blue, 0 }  ,fill opacity=1 ] (527.77,366.38) .. controls (527.77,363.62) and (530.01,361.38) .. (532.77,361.38) .. controls (535.53,361.38) and (537.77,363.62) .. (537.77,366.38) .. controls (537.77,369.14) and (535.53,371.38) .. (532.77,371.38) .. controls (530.01,371.38) and (527.77,369.14) .. (527.77,366.38) -- cycle ;
	\draw  [fill={rgb, 255:red, 0; green, 0; blue, 0 }  ,fill opacity=1 ] (431.77,366.38) .. controls (431.77,363.62) and (434.01,361.38) .. (436.77,361.38) .. controls (439.53,361.38) and (441.77,363.62) .. (441.77,366.38) .. controls (441.77,369.14) and (439.53,371.38) .. (436.77,371.38) .. controls (434.01,371.38) and (431.77,369.14) .. (431.77,366.38) -- cycle ;
	\draw    (415,315) -- (436.77,366.38) ;
	\draw    (465,310) -- (436.77,366.38) ;
	\draw    (505,315) -- (532.77,366.38) ;
	\draw    (560,310) -- (532.77,366.38) ;
	\draw    (118,369) -- (155,367) ;
	\draw  [color={rgb, 255:red, 0; green, 0; blue, 0 }  ,draw opacity=1 ] (280,351) .. controls (280,339.95) and (295.67,331) .. (315,331) .. controls (334.33,331) and (350,339.95) .. (350,351) .. controls (350,362.05) and (334.33,371) .. (315,371) .. controls (295.67,371) and (280,362.05) .. (280,351) -- cycle ;
	\draw    (251,367) -- (280,351) ;
	\draw  [color={rgb, 255:red, 0; green, 0; blue, 0 }  ,draw opacity=1 ] (320,391) .. controls (320,379.95) and (335.67,371) .. (355,371) .. controls (374.33,371) and (390,379.95) .. (390,391) .. controls (390,402.05) and (374.33,411) .. (355,411) .. controls (335.67,411) and (320,402.05) .. (320,391) -- cycle ;
	\draw    (436.77,366.38) -- (269.55,504.05) ;
	\draw  [color={rgb, 255:red, 0; green, 0; blue, 0 }  ,draw opacity=1 ] (560,371) .. controls (560,359.95) and (575.67,351) .. (595,351) .. controls (614.33,351) and (630,359.95) .. (630,371) .. controls (630,382.05) and (614.33,391) .. (595,391) .. controls (575.67,391) and (560,382.05) .. (560,371) -- cycle ;
	\draw    (532.77,366.38) -- (560,371) ;
	\draw    (155,367) -- (269.55,504.05) ;
	\draw    (251,367) -- (427.55,503.98) ;
	\draw    (532.77,366.38) -- (427.55,503.98) ;
	\draw  [color={rgb, 255:red, 0; green, 0; blue, 0 }  ,draw opacity=1 ] (70,431) .. controls (70,419.95) and (85.67,411) .. (105,411) .. controls (124.33,411) and (140,419.95) .. (140,431) .. controls (140,442.05) and (124.33,451) .. (105,451) .. controls (85.67,451) and (70,442.05) .. (70,431) -- cycle ;
	\draw    (155,367) -- (105,411) ;
	\draw  [color={rgb, 255:red, 0; green, 0; blue, 0 }  ,draw opacity=1 ] (210,411) .. controls (210,399.95) and (225.67,391) .. (245,391) .. controls (264.33,391) and (280,399.95) .. (280,411) .. controls (280,422.05) and (264.33,431) .. (245,431) .. controls (225.67,431) and (210,422.05) .. (210,411) -- cycle ;
	\draw    (251,367) -- (245,391) ;
	\draw  [color={rgb, 255:red, 0; green, 0; blue, 0 }  ,draw opacity=1 ] (407,418) .. controls (407,406.95) and (422.67,398) .. (442,398) .. controls (461.33,398) and (477,406.95) .. (477,418) .. controls (477,429.05) and (461.33,438) .. (442,438) .. controls (422.67,438) and (407,429.05) .. (407,418) -- cycle ;
	\draw    (436.77,366.38) -- (442,398) ;
	\draw    (390,391) -- (436.77,366.38) ;
	\draw  [color={rgb, 255:red, 0; green, 0; blue, 0 }  ,draw opacity=1 ] (515,423) .. controls (515,411.95) and (530.67,403) .. (550,403) .. controls (569.33,403) and (585,411.95) .. (585,423) .. controls (585,434.05) and (569.33,443) .. (550,443) .. controls (530.67,443) and (515,434.05) .. (515,423) -- cycle ;
	\draw    (532.77,366.38) -- (550,403) ;
	\draw    (330,155) -- (350,211) ;
	\draw  [fill={rgb, 255:red, 0; green, 0; blue, 0 }  ,fill opacity=1 ] (365,155) .. controls (365,152.24) and (367.24,150) .. (370,150) .. controls (372.76,150) and (375,152.24) .. (375,155) .. controls (375,157.76) and (372.76,160) .. (370,160) .. controls (367.24,160) and (365,157.76) .. (365,155) -- cycle ;
	\draw  [fill={rgb, 255:red, 0; green, 0; blue, 0 }  ,fill opacity=1 ] (325,155) .. controls (325,152.24) and (327.24,150) .. (330,150) .. controls (332.76,150) and (335,152.24) .. (335,155) .. controls (335,157.76) and (332.76,160) .. (330,160) .. controls (327.24,160) and (325,157.76) .. (325,155) -- cycle ;
	\draw  [color={rgb, 255:red, 0; green, 0; blue, 0 }  ,draw opacity=1 ] (315,155) .. controls (315,143.95) and (330.67,135) .. (350,135) .. controls (369.33,135) and (385,143.95) .. (385,155) .. controls (385,166.05) and (369.33,175) .. (350,175) .. controls (330.67,175) and (315,166.05) .. (315,155) -- cycle ;
	\draw    (370,155) -- (350,211) ;
	
	\draw (347,211) node [anchor=south east] [inner sep=0.75pt]   [align=left] {$\displaystyle v_{0}$};
	\draw (143.01,289.62) node [anchor=north west][inner sep=0.75pt]    {$X_{11}$};
	\draw (235.36,290.72) node [anchor=north west][inner sep=0.75pt]    {$X_{n1}$};
	\draw (423.76,291.72) node [anchor=north west][inner sep=0.75pt]    {$X_{1m}$};
	\draw (516.01,292.82) node [anchor=north west][inner sep=0.75pt]    {$X_{nm}$};
	\draw (160.1,352.95) node [anchor=north west][inner sep=0.75pt]    {$v_{11}$};
	\draw (439,354) node [anchor=north west][inner sep=0.75pt]    {$v_{1m}$};
	\draw (258.87,518.46) node [anchor=north west][inner sep=0.75pt]    {$Y_{1}$};
	\draw (416.87,516.19) node [anchor=north west][inner sep=0.75pt]    {$Y_{n}$};
	\draw (199,300) node [anchor=north west][inner sep=0.75pt]    {$...$};
	\draw (190.57,354.85) node [anchor=north west][inner sep=0.75pt]    {$...$};
	\draw (473,351) node [anchor=north west][inner sep=0.75pt]    {$...$};
	\draw (475.22,299.82) node [anchor=north west][inner sep=0.75pt]    {$...$};
	\draw (220,354) node [anchor=north west][inner sep=0.75pt]    {$v_{n1}$};
	\draw (499,353) node [anchor=north west][inner sep=0.75pt]    {$v_{nm}$};
	\draw (338.98,300) node [anchor=north west][inner sep=0.75pt]    {$...$};
	\draw (340,518) node [anchor=north west][inner sep=0.75pt]    {$...$};
	\draw (57,358) node [anchor=north west][inner sep=0.75pt]  [font=\scriptsize]  {$H_{1}[ x_{1} ,2]$};
	\draw (290,340) node [anchor=north west][inner sep=0.75pt]  [font=\scriptsize]  {$H_{1}[ x_{n} ,2]$};
	\draw (329,380) node [anchor=north west][inner sep=0.75pt]  [font=\scriptsize]  {$H_{1}[ x_{1} ,1]$};
	\draw (569,360) node [anchor=north west][inner sep=0.75pt]  [font=\scriptsize]  {$H_{1}[ x_{n} ,1]$};
	\draw (74,420) node [anchor=north west][inner sep=0.75pt]  [font=\scriptsize]  {$H[\hat{C}_{11} ,x_{1}]$};
	\draw (214,400) node [anchor=north west][inner sep=0.75pt]  [font=\scriptsize]  {$H[\hat{C}_{n1} ,x_{n}]$};
	\draw (410,407) node [anchor=north west][inner sep=0.75pt]  [font=\scriptsize]  {$H[\hat{C}_{1m} ,x_{1}]$};
	\draw (519,410) node [anchor=north west][inner sep=0.75pt]  [font=\scriptsize]  {$H[\hat{C}_{nm} ,x_{n}]$};
	\draw (384,139) node [anchor=north west][inner sep=0.75pt]    {$X$};
	\draw (345.98,151) node [anchor=north west][inner sep=0.75pt]    {$...$};

\end{tikzpicture}
 		\caption{The constructed graph $G$. \label{fig:pw}}	
 	\end{center}
 \end{figure}

\noindent\textbf{Proof of Sufficiency.}
Suppose that $(x_1,\ldots, x_n,m,B)$ is a yes-instance of \ubp\ and $A_1,\ldots, A_m$ are the bins such that $\sum_{x_i\in A_j} x_i=B$ for each $j\in [m]$. We are going to obtain a fair matching in $G$ as follows. Fix some $i\in [n]$. If $x_i$ is in bin $A_j$, then $v_0$ is matched to all vertices in $X_{ij}$. The vertex $v_0$ is also matched with $B$ neighbors of color $m+1$. Also, in the gadget $H_1[x_i,j+1]$ corresponding to $v_{ij}$, the vertices $v_1,\ldots, v_{x_i}$ and $v''_2,\ldots, v''_{x_i-1}$ are matched to all their neighbors.
For each $t \in [2,x_i]$ and each $c\in \hat{C}_{v'_t}$, in the gadget $H[\hat{C}_{v'_t},1]$ corresponding to $v'_{t}$, the vertex $v_c^1$ is matched to its unique neighbor.   
Finally, in the gadget $H[\hat{C}_{ij},x_i]$ corresponding to ${v_{ij}}$, the vertices $v_c^1,\ldots, v_c^{x_i}$ are matched to its unique neighbor, for all $c\in  \hat{C}_{ij}$. 

Now, for each $j'\in [m]\setminus \{j\}$, $v_{ij'}$ is matched to all vertices in $X_{ij'}$.  Also, $v_{ij'}$ is matched to $x_i$ vertices in $Y_i$ (this can be done since $|Y_i|=(m-1)x_i$). Now, in the gadget $H[\hat{C}_{ij'},x_i]$ corresponding to $v_{ij'}$, the vertex $\hat{v}$ (which is identified with $v_{ij'}$) is matched to all its neighbors. 
In the gadget $H_1[x_i,j'+1]$ corresponding to $v_{ij'}$, the vertices $v'_2,\ldots, v'_{x_i}$ and  $\hat{v}$ (which is identified with $v_{ij'}$) are matched to all its neighbors. Also, for each $v_t$ in $H_1[x_i,j'+1]$, $t\in [x_i]$, and for each $v''_t$ in $ H_1[x_i,j'+1]$, $t\in [2,x_i-1]$, in its corresponding gadget $H[\hat{C}_{v_t},1]$, the vertex $v_c^1$ is matched to its unique neighbor, for all $c\in \hat{C}_{v_t}$. 
Since the size of items in each bin is equal to $B$, $v_0$ has $B$ matched neighbor in each color. Other vertices $v\in V$ with $L(v)=0$ have equal matched neighbors in each color. Hence, the obtained matching is an $L$-fair matching.  \\

\noindent\textbf{Proof of necessity.} Suppose that $M$ is an $L$-fair matching for $G$. Fix some $i\in [n]$. For each $j\in [m]$, we have either $|M_{m+1}(v_{ij})|=x_i$ or $|M_{m+1}(v_{ij})|=0$. To see this, consider the vertex $v'_{x_i}$ in the gadget $H_1[x_i,j+1]$ corresponding to the vertex $v_{ij}$. Since $L(v'_{x_i})=0$, $v'_{x_i}$ is matched either with all its neighbors or none of them. In the former case, the vertices $v_1,\ldots v_{x_i}$ and $v''_2,\ldots v''_{x_i-1}$ are matched with no vertex and $v'_2,\ldots v'_{x_i-1}$ are matched to all their neighbors. Thus, $u_1,\ldots, u_{x_i}$ are matched with $v_{ij}$. Since $L(v_{ij})=0$, $v_{ij}$ has exactly $x_i$ matched neighbors in $Y_i$. In the latter case, since $M$ is left perfect, the vertices $v''_2,\ldots v''_{x_i-1}$ are matched to all its neighbors and $v'_2,\ldots, v'_{x_i}$ are matched to no vertex. So, $v_1,\ldots, v_{x_i}$ are matched to all its neighbors. Thus, $v_{ij}$ is not matched to $u_1,\ldots, u_{x_i}$ and so has no matched neighbor in $Y_i$. Hence, the number of matched neighbors of $v_{ij}$ in $Y_i$ is either $0$ or $x_i$. Thus, exactly $m-1$ vertices among $v_{i1},\ldots, v_{im}$ are matched with vertices in $Y_i$ and exactly one vertex say $v_{ij_0}$ is matched to no vertex in $Y_i$ and so is matched to no vertex in $X_{ij_0}$. Therefore, vertices in $X_{ij_0}$ are matched to $v_0$. Now, we put the item $x_i$ to the bin $A_{j_0}$. This gives a partition of the set $\{x_1,\ldots, x_n\}$ into $m$ bins $A_1,\ldots, A_m$. Now, since $v_0$ has $B$ matched neighbors in the color class $m+1$, $v_0$ has also $B$ matched neighbor in each color class $i\in [m]$. This implies that the sum of item sizes in each bin $A_i$ is equal to $B$ and thus $(x_1,\ldots, x_n;m,B)$ is a yes-instance of \ubp.   

Moreover, the path-width of the graph $H$ is at most $m+1$. To see this, note that if one remove $v_0$ from $H$ we obtain $n$ disjoint bipartite graphs where each one has a part of size $m$.
Also, each of the gadgets $H[\hat{C},k]$ and $H_1[k,l]$ has path-width at most $3$ (since $H[\hat{C},k]$ is a tree of depth two and if one remove $\hat{v}$ from $H_1[k,l]$, a subdivision of a caterpillar is obtained). Therefore, the path-width of the graph $G$ is at most $m+4$. Also, the number of colors is $m+1$ and all vertices in $U\cap V(G)$ are of degree at most $m$ (vertices in $Y_i$ have degree equal to $m$), so $\Delta_U=m$. Finally, the graph $H$ has at most $O(m^2B)$ vertices and each added gadget has at most $O(mB)$ vertices. So, the size of $G$ is at most a polynomial in $m$ and $B$. Also, the input of \ubp\ is given in unary encoding. Hence, this is a parameterized reduction and so \gfm\ is W[1]-hard with respect to the path-width of the input graph, the number of colors and $\Delta_U$. 
\end{proof}

\section{FPT Results}

In this section, we prove that \gfm\ is FPT with respect to treewidth and $\Delta_V$ (combined), tree-depth and number of colors (combined), $\fes$ and neighborhood diversity.

\begin{theorem} \label{thm:fes}
\textsc{Generalized Fair Matching} can be solved in time $O(2^{\fes}n\log n)$, where $n$ and $\fes$ are the number of vertices and the feedback edge number of the input graph, respectively.
\end{theorem}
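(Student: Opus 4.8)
The plan is to branch over which feedback edges the matching uses, which leaves a forest instance, and then to decide that forest instance by a bottom-up dynamic program carrying a single bit of state per subtree.

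First I would compute a feedback edge set $F$ with $|F|=\fes$ (for instance the non-tree edges of a spanning forest, found in linear time), so that $G-F$ is a forest. For each of the $2^{\fes}$ subsets $M_F\subseteq F$, interpret $M_F$ as the set of feedback edges used by the sought matching $M$. This is consistent only if every vertex of $U$ meets at most one edge of $M_F$; when it is, delete every $u\in U$ incident with an edge of $M_F$ together with its remaining edges, delete the edges of $F\setminus M_F$, and store for each $v\in V$ a vector $\pi_v$ counting by color the $U$-vertices joined to $v$ by $M_F$ (so $\sum_v\|\pi_v\|_1\le\fes$). What remains is a forest $G'$ in which one must find a left-perfect many-to-one matching of the surviving $U$-vertices that is $L$-fair once the counts $\pi_v$ are added at each $v$ (assuming no vertex of $U$ is isolated, else the instance is trivially a no-instance). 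The original instance is a yes-instance iff some branch succeeds.

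To decide the forest instance, root each tree of $G'$ and compute bottom-up, for every vertex $x$, a set $\mathrm{feas}(x)\subseteq\{0,1\}$: $b\in\mathrm{feas}(x)$ iff the subtree at $x$ admits a left-perfect $L$-fair matching of its $U$-vertices (respecting $\pi$) in which the edge to $x$'s parent is used exactly when $b=1$. For $x\in U$ this is immediate: $1\in\mathrm{feas}(x)$ iff $0\in\mathrm{feas}(c)$ for all children $c$ (then $x$ goes up and no child-edge is used), and $0\in\mathrm{feas}(x)$ iff some child $c^{*}$ has $1\in\mathrm{feas}(c^{*})$ and all other children $c$ have $0\in\mathrm{feas}(c)$ (then $x$ is matched to exactly one child); this is an $O(\deg x)$ test, failing if two children are forced to be matched to $x$. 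For $x\in V$, the children in $U$ that go to $x$ may be chosen freely among those whose subtrees permit it, so for each $b\in\{0,1\}$ one puts $b\in\mathrm{feas}(x)$ iff some admissible choice makes the color multiset at $x$ --- the colors of the chosen children, plus $\pi_x$, plus $\col(\mathrm{parent})$ when $b=1$ --- $L(x)$-fair.

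The step I expect to be the main obstacle is exactly this last fairness test, which must avoid enumerating the exponentially many sets of children. Bucketing the children of $x$ by color, let $\ell_c$ be (number of children of color $c$ whose subtree forces its edge to $x$) $+\,\pi_x[c]+[\,b=1\text{ and }\col(\mathrm{parent})=c\,]$ and let $h_c$ be $\ell_c$ plus the number of children of color $c$ whose subtree leaves that edge optional, with $\ell_c=h_c$ for colors absent among the children; the attainable counts of color $c$ at $x$ are then precisely the integers of $[\ell_c,h_c]$, independently over $c$. A short clamping argument shows that counts $\mathrm{cnt}_c\in[\ell_c,h_c]$ with $\max_c\mathrm{cnt}_c-\min_c\mathrm{cnt}_c\le L(x)$ exist iff $\max_{c\in C}\ell_c-\min_{c\in C}h_c\le L(x)$; the delicate point is that the minimum runs over all of $C$, so any color of $C$ missing from $N_{G'}(x)$ and from $\pi_x$ forces $\min_c h_c=0$, which one detects by a cardinality comparison rather than by scanning $C$. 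After bucketing the children by color (sorting, or hashing), each $x\in V$ costs $O(\deg x)$, so the forest instance is solved in $O(n\log n)$; it is a yes-instance iff every tree's root $r$ satisfies $0\in\mathrm{feas}(r)$, the state $0$ being the only meaningful one at a root and carrying no parent contribution. Correctness of the whole scheme holds because any matching of $G$ decomposes uniquely into its feedback part $M\cap F$ and its forest part, and the latter is captured exactly by the single stored bit (a $U$-vertex's unique match lies inside its subtree or is its parent-edge). With the $O(\fes)$ bookkeeping per branch, the total running time is $O(2^{\fes}\,n\log n)$.
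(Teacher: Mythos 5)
Your proposal is correct and takes essentially the same route as the paper: branch over the subset of feedback edges used by the matching, then run a bottom-up two-state (parent-edge used / not used) feasibility DP on the resulting forest, deciding each $V$-vertex by the interval test $\max_{c}\ell_c-\min_{c}h_c\le L(v)$, which is exactly the paper's criterion $A-B\le L(v)$ with $A=\max_c(n^1_c+m^v_c)$ and $B=\min_c(n^1_c+n^2_c+m^v_c)$.
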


\begin{proof}
Suppose that the input graph is a bipartite graph $G=(U\cup V,E)$ and $F$ is a subset of $E$ such that $|F|=\fes$ and $G-F$ is acyclic. Also, the coloring $\col: U\to C$ and a threshold function $L:V\to \mathbb{N}\cup \{0\}$ are given.
Fix a subset $F'\subseteq F$ such that $F'$ is a many-to-one matching and suppose that $F'$ is included in the solution. Also, define $U'$ as  a subset of $U$ consisting of vertices which are matched by edges in $F'$, i.e. $U'=\{u\in U: \exists uv\in F'\}$ and for every vertex $v\in V$ and $c\in C$, let $m_c^v$ be the number of vertices of color $c$ which are matched by $v$ in $F'$, i.e. $m_c^v=|\{u\in U_c: uv\in F' \}|$.
Now, remove from $G$ all vertices in $U'$ and all edges in $F$ and let $T$ be a connected component of the obtained graph which is a tree. For a subtree $T'$ of $T$ and a many-to-one matching $M$ for $T'$, we say that a vertex $v\in V$ is $\tilde{L}(v)$-fair in $M$, if we have $\max_{c,c'\in C} |M_c(v)|+m_c^v -|M_{c'}(v)|-m_{c'}^v \leq L(v)$. Also, we say that $M$ is $\tilde{L}$-fair if all of vertices $v\in V\cap V(T')$ are $\tilde{L}(v)$-fair in $M$.

Let $T$ be rooted at an arbitrary vertex $u_0\in U$. 
For every vertex $w\in U\cup V$ with parent $z\in U\cup V$, we define $p(w)=z$ (for convenience, we define $p(u_0)=u_0$) and $T_w$ as the subtree of $T$ consisting of all vertices in $T$ which are descendant of $w$ (including $w$ itself) and also let $T_{wz}$ be obtained from $T_w$ by adding the edge $wz$.

We define two functions $f:U\cup V\to \{0,1\}$ and $g:U\cup V\to \{0,1\}$ as follows. First, for every $w\in U\cup V$, we define $f(w)=1$ if and only if there exists an $\tilde{L}$-fair left-perfect many-to-one matching for $T_{w}$. Also, for every $u\in U$, we define $ g(u)=1$ if and only if there exists a left-perfect many-to-one matching $M$ for $T_{up(u)}$ such that $M(u)=p(u)$ and for every vertex $v\in V(T_u)\cap V$, $v$ is $\tilde{L}(v)$-fair in $M$. Moreover, for every $v\in V$, we define $g(v)=1$ if and only if there exists an $\tilde{L}$-fair left-perfect many-to-one matching $M$ for $T_{vp(v)}$ (note that in this case $p(v)\in M(v)$). It is clear that the answer to \gfm\ is yes if and only if there exists an $F'\subseteq F$ such that $f(u_0)=1$. 

Now, we compute the value of the functions $f,g$ recursively by traversing the tree upward. First, suppose that $u\in U$ has $k$ children $v_1,\ldots, v_k$. If there exists some $i\in [k]$ such that $g(v_i)=1$ and $f(v_j)=1$ for all $j\in [k]\setminus \{i\}$, then we have $f(u)=1$ (since $u$ has to matched with one of vertices $v_1,\ldots, v_k$), otherwise we define $f(u)=0$. Also, if for all $j\in [k]$ we have $f(v_j)=1$, then we have $g(u)=1$ (since $u$ is matched to its parent and no vertex $v_j$ is matched to $u$). Otherwise, we have $g(u)=0$. 

Now, let $v\in V$ have $k$ children $u_1,\ldots, u_k$. If there exists $i\in [k]$ such that $f(u_i)=g(u_i)=0$, then there is no feasible solution and the answer is no. So, suppose that for each $u_i$, either $f(u_i)=1$ or $g(u_i)=1$. Now, for each $c\in C$, let $n^1_c$ be the number of vertices $u_i$ of color $c$ such that $f(u_i)=0$ and $g(u_i)=1$ and let $n^2_c$ be the number of vertices $u_i$ of color $c$ such that $f(u_i)=1$ and $g(u_i)=1$. Note that if $f(u_i)=0$ and $g(u_i)=1$, then $u_i$ should be matched with its parent $v$ and if $f(u_i)=1$ and $g(u_i)=1$, then $u_i$ is allowed to be matched with its parent $v$. Therefore, in any $\tilde{L}$-fair matching $M$ for $T_v$, we have $|M_c(v)| \in [n^1_c,n^1_c+n^2_c]$. Define $A=\max_{c\in C} (n^1_c+m_c^v)$ and $B=\min_{c\in C} (n^1_c+n^2_c+m_c^v)$. We claim that $f(v)=1$ if and only if $A-B\leq L(v)$. To see this, note that if $A-B>L(v)$, then there are some colors $c,c'$ such that $|M_c(v)|+m_c^v-|M_{c'}(v)|-m_{c'}^v\geq n^1_c+m_c^v- (n^1_{c'}+n^2_{c'}+m_{c'}^v)=A-B>L(v)$ and thus, $f(v)=0$. Now, suppose that $A-B\leq L(v)$. For each color $c\in C$, define $\ell_c=\max(\min(A,B), n^1_c+m_c^v)$. It is evident that $n^1_c\leq \ell_c-m_c^v\leq n^1_c+n^2_c$. So, we can choose $\ell_c-m_c^v$ vertices of color $c$ among children $u_i$'s of $v$ with $g(u_i)=1$ and match it with $u$ such that for any unmatched child $u_j$, we have $f(u_j)=1$. Note that for any two colors $c,c'\in C$, if $A\leq B$, then we have
$\ell_c-\ell_{c'}= A-A= 0$ and if $A\geq B$, then we have 
$\ell_c-\ell_{c'}\leq \max(B,n_c^1+m_c^v)-B\leq A-B\leq L(v)$. Therefore, $v$ is $\tilde{L}(v)$-fair. Finally, for every unmatched child $u_j$, we have $f(u_j)=1$, so we have an $\tilde{L}$-fair left-perfect many-to-one matching for $T_{u_j}$. The union of all these matchings form an $\tilde{L}$-fair left-perfect many-to-one matching for $T_{v}$ and so $f(v)=1$. 

Computing $g(v)$ is very similar to computation of $f(v)$ with only one difference that $v$ should be matched with its parent $p(v)$. Let $p(v)$ be of color $c_0$. We do the same computations as above and just add $n^1_{c_0}$ by one and we have $g(v)=1$ if and only if $A-B\leq L(v)$.  

Using the above recursion, we can find the value of $f$ for the root $u_0$ and it is clear that a solution exists if and only if there is some $F'\subseteq F$ for which $f(u_0)=1$. Also, the runtime of the algorithm is at most $O(2^{\fes}n\log n)$ (because we have to do the above process for each subset of $F$ and in each round, computing the numbers $n^1_c$, $n^2_c$, $A$ and $B$ takes $O(n\log n)$ time). 
\end{proof}

Using a dynamic programming, we can prove that \gfm\ is FPT with respect to $(\tw,\Delta_V)$.
 
\begin{theorem}\label{thm:twDelta}
\gfm\ can be solved in time $O^*(\tw^{\tw} 2^{|\Delta_V|(\tw+3)})$, where $\Delta_V$ is the maximum degree of vertices in $V$ and $\tw$ is the treewidth of the input graph (assuming that nice tree decomposition of width $\tw$ is given).
\end{theorem}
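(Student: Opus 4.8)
The plan is a dynamic program over a nice tree decomposition $(T,\{X_t\}_t)$ of $G$ of width $\tw$ with the standard node types (leaf, introduce-vertex, introduce-edge, forget, join). The structural observation that makes this work despite the colour set $C$ being arbitrarily large is that each $v\in V$ is matched to at most $\deg_G(v)\le\Delta_V$ vertices of $U$, so the only data about $v$ needed to certify $L(v)$-fairness is, for each colour occurring on $N_G(v)$ (at most $\Delta_V$ of them), how many neighbours of $v$ of that colour are currently matched to $v$; whether every colour of $C$ occurs on $N_G(v)$ at all is a static quantity. Thus a partial solution has a small "interface".

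Concretely, for a node $t$ let $G_t$ be the graph on the vertices introduced in the subtree rooted at $t$ carrying exactly the edges introduced there. I would index the DP cells at $t$ by a pair $(\sigma,\mu)$, where $\sigma\colon U\cap X_t\to\{\mathtt{free},\mathtt{matched}\}$ says whether a $U$-vertex of the bag has already received its matching edge, and $\mu$ assigns to each $v\in V\cap X_t$ a vector $\mu_v\in\mathbb{Z}_{\ge 0}^{\col(N_G(v))}$ with $\sum_c\mu_v(c)\le\Delta_V$, the colour counts of the $N_G(v)$-vertices already matched to $v$ in $G_t$. The Boolean value stored at $(\sigma,\mu)$ records whether there is a many-to-one $M\subseteq E(G_t)$ leaving unmatched exactly the $\mathtt{free}$ vertices of $U\cap X_t$ (and matching all other $U$-vertices of $G_t$), realising the profiles $\mu$, and under which every $v\in V\cap V(G_t)$ that has already left the bag is $L(v)$-fair. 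The transitions are routine: introduce-vertex sets $\sigma(u)=\mathtt{free}$ or $\mu_v=\mathbf 0$; introduce-edge for $uv$ branches on $uv\in M$, the "in" branch requiring $\sigma(u)=\mathtt{free}$ and then setting $\sigma(u)=\mathtt{matched}$ and incrementing $\mu_v(\col(u))$ (killing the branch if it would exceed $\Delta_V$); forget-$u$ keeps only cells with $\sigma(u)=\mathtt{matched}$; forget-$v$ keeps only cells with $\max_c\mu_v(c)-\min_{c\in C}\mu_v(c)\le L(v)$, where the minimum over all of $C$ is $0$ unless $\col(N_G(v))=C$ and every colour already has $\mu_v(c)\ge 1$; join takes the AND over compatible pairs of child cells, "compatible" meaning the two $\sigma$'s are never both $\mathtt{matched}$ on the same $u$ (merge by coordinatewise OR) and $\mu_v$ is the coordinatewise sum — consistent because each edge is introduced in exactly one of the two subtrees. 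The answer to \gfm\ is yes iff the (empty) root cell evaluates to True.

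Correctness follows by the usual bottom-up induction that a cell is True precisely when a partial solution with the recorded interface exists; the only slightly delicate point is the fairness bookkeeping, which is sound because a nice tree decomposition with introduce-edge nodes forgets $v$ only after every edge incident to $v$ has been introduced, so $\mu_v$ at the forget-$v$ node is exactly the final colour-count vector of $M(v)$. For the running time, the number of cells at a bag is at most $2^{|U\cap X_t|}$ times $\prod_{v\in V\cap X_t}(\#\text{admissible }\mu_v)$, and the number of admissible $\mu_v$ is bounded purely in terms of $\Delta_V$ (crudely at most $4^{\Delta_V}$ vectors, and tighter with a leaner encoding); multiplying over the $\le\tw+1$ vertices of a bag, over $O(n)$ bags, and accounting for the cost of combining tables at join nodes gives the stated $O^*(\tw^{\tw}2^{|\Delta_V|(\tw+3)})$ bound.

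I expect the main obstacle to be precisely the state-space accounting: one has to commit to an encoding of the $\mu_v$'s (and of the matching information on $U\cap X_t$) that is lean enough to hit the claimed exponent, and — this is the whole point of the theorem — to verify that $|C|$, which is unbounded, genuinely never enters the bound, the colour information being filtered entirely through the $\le\Delta_V$ colours on $N_G(v)$. The fiddly sub-case is the evaluation of $L(v)$-fairness at a forget-$v$ node when some colours of $C$ do not appear on $N_G(v)$ at all, forcing $\min_{c\in C}\mu_v(c)=0$; this must be handled using the precomputed static comparison of $\col(N_G(v))$ with $C$.
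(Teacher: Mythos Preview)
Your approach is correct and shares the paper's central idea: a DP over a nice tree decomposition in which the colour profile of each $v\in V\cap X_t$ is stored as a vector indexed by the at most $\Delta_V$ colours occurring on $N_G(v)$, so that $|C|$ never enters the state space. The one genuine technical difference is that you use the \emph{introduce-edge} variant of nice tree decompositions and therefore only need a binary $\mathtt{free}/\mathtt{matched}$ flag on each $u\in U\cap X_t$; the paper does not use introduce-edge nodes and instead records, for each $u$ in the bag, the identity of its match as a value $x_u\in\{0,\mathtt{out}\}\cup(V\cap X_t)$, which is where the $\tw^{\tw}$ factor in the stated bound comes from --- a factor your encoding does not actually need. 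Your running-time accounting is a bit loose: the crude $4^{\Delta_V}$ bound on the number of admissible $\mu_v$ can be sharpened to $2^{\Delta_V}$ via the paper's AM--GM-style estimate $\prod_{c}(|N_c(v)|+1)\le(\Delta_V/k+1)^k\le 2^{\Delta_V}$, and the join-node cost deserves an explicit line; but the FPT result with respect to $(\tw,\Delta_V)$ is clearly established either way.
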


\begin{proof}
Consider an instance $G=(U\cup V,E)$ endowed with the coloring $\col:U\to C$ and a threshold function $L:V\to \mathbb{N}\cup \{0\}$. Also, let $(T,\{B_t\}_{t\in V(T)})$ be a nice tree-decomposition for $G$ of width $\tw$ and $O(\tw |V(G)|)$ nodes.
For each vertex $t\in V(T)$, let $G_t$ be the subgraph of $G$ induced on all vertices in $B_{t'}$ where $t'$ is a descendant of $t$. Define $U^t=U\cap B_t$ and $V^t=V\cap B_t$. Also, for each $u\in U^t$, define a variable $x_u$ as follows.
\[
x_u=\begin{cases}
0 & \text{$u$ is matched with no vertex in $V(G_t)$,}\\
\text{out} & \text{$u$ is matched with a vertex in $V(G_t)\setminus B_t$,}\\
v & \text{$u$ is matched with the vertex $v\in  B_t$.}
\end{cases}
\]
Then, for each $v\in V^t$ and each color $c\in C$, define a variable $y_v^c$ which is equal to the number of vertices in $V(G_t)\setminus B_t$ with color $c$ matched with $v$. Finally, define $z_v^c=|\{u\in B_t: x_u=v, \col(u)=c\}|$, i.e. the number of vertices in $B_t$ with color $c$ matched with $v$. It is clear that $y_v^c+z_v^c\in [0, \Delta_V]$.  Now, define $\mathbf{(x,y)}=(x_u,y_v^c;u\in U^t, v\in V^t, c\in C) $ and the function $f_t$, where $f_t(\mathbf{x,y})$ is equal to one if and only if there exists a many-to-one matching $M$ for $G_t$ such that for all vertices $v\in (V\cap V(G_t))\setminus B_t$, $v$ is $L(v)$-fair, for each vertex $u\in (U\cap V(G_t))\setminus B_t$, $u$ is matched to a vertex and also, for each $u\in U^t$ and $v\in V^t$, $M$ is compatible with $(\mathbf{x,y})$, i.e. $x_u=v$ if $uv\in M$, $x_u=\text{out}$, if $M(u)$ is in $V(G_t)\setminus B_t$ and $x_u=0$ if $M(u)=\emptyset$. Also, $|M_c(v)|=y_v^c+z_v^c$, for all $c\in C$. It is clear that the answer of \gfm\ is yes if and only if $f_r(\emptyset,\emptyset)=1$, where $r$ is the root of $T$ (note that $B_r = \emptyset$). 

First, we enumerate the domain size of $f_t$, i.e. the number of vectors $(\mathbf{x,y})$. Each $x_u$ can take $|V^t|+2$ values, so $\mathbf{x}$ can take $(|V^t|+2)^{|U^t|}$ values. Also, let $v\in V^t$ and let $k$ be the number of color classes in which $v$ has a neighbor. It is clear that $k\leq |\Delta_V|$ and each variable $y_v^c$ can take values in $[0,|N_c(v)|]$.  Hence, the number of possible vectors $(y_v^c;c\in C)$ is at most
\begin{align}
\prod_{c\in C}{\left(|N_c(v)|+1\right)} \leq \left(\dfrac{|\Delta_V|}{k}+1\right)^k \leq 2^{|\Delta_V|}.
\label{eq:AA}
\end{align}

Therefore, the number of vectors $(\mathbf{x,y})$ is at most
\[
(|V^t|+2)^{|U^t|} 2^{|\Delta_V|.|V^t|}\leq O(\tw^{\tw} 2^{|\Delta_V||V^t| }).
\]
First inequality in~\eqref{eq:AA} holds because the product is maximized when all $|N_c(v)|$ are equal and the last inequality in~\eqref{eq:AA} holds because the function $({|\Delta_V|}/{k}+1)^k$ is increasing in terms of $k$. 

 Now, we will recursively find the value of the function $f_t$ on each $(\mathbf{x,y})$ in terms of the value of $f_{t'}$ for the children $t'$ of $t$. We consider the following three possibilities.

\paragraph*{Case 1. $t$ is an introduce node.} Suppose that $t$ has a unique child $t'$, where $B_t=B_{t'}\cup \{w\}$. Firstly, let $w\in U$ and the color of $w$ is $c_0$. If $x_w=\text{out}$, then $f_t(\mathbf{x,y})=0$ (since $w$ has no neighbor in $V(G_t)\setminus B_t$). If $x_w=0$ or $x_w=v$ for some $v\in B_t$, then 
$$f_t(\mathbf{x,y})=f_{t'}(x_u,y_v^c;u\in U^t\setminus\{w\},v\in V^t,c\in C).$$

Secondly, let $w\in V$. If $y_w^c\neq 0$ for some $c\in C$, then $f_t(\mathbf{x,y})=0$ (since $w$ has no neighbor in $V(G_t)\setminus B_t)$.
Also, if $x_u=w$ for some $u$ which is not adjacent to $w$, then evidently $f_t(\mathbf{x,y})=0$. Now, suppose that $y_w^c=0$ for all $c\in C$ and if $x_u=w$, then $u$ is adjacent to $w$. For all $u\in U^t$, define
$$\hat{x}_u=\begin{cases}
0 & x_u=w,\\
x_u & \text{otherwise}.
\end{cases}  
$$
Then, 
$$f_t(\mathbf{x,y})=f_{t'}(\hat{x}_u,y_v^c;u\in U^t,v\in V^t\setminus \{w\},c\in C).$$

Computing $f_t$ for each $(\mathbf{x,y}) $ takes $O(|U^t|)$ time. 

\paragraph*{Case 2. $t$ is a forget node.} Suppose that $t$ has a unique child $t'$, where $B_t=B_{t'}\setminus \{w\}$. 
Firstly, let $w\in U$ and $\col(u)=c_0$. 
Consider all values $x_w\in \{v; v\in N(w)\cap B_t \}\cup \{\text{out}\}$. If $x_w=v_0$, for some $v_0\in N(w)\cap B_t $, then for every $v\in V^t$ and $c\in C$, define 
$$\hat{y}_v^c=\begin{cases}
	y_v^c-1 & v=v_0,c=c_0,\\
	y_v^c & \text{otherwise}.
\end{cases}  
$$
Also, if $x_w=\text{out}$, then define $\hat{y}_v^c=y_v^c$. Now,
$$f_t(\mathbf{x,y})=\max_{x_w\in \{v; v\in N(w)\cap B_t \}\cup \{\text{out}\}} f_{t'}(x_w,x_u,\hat{y}_v^c;u\in U^t,v\in V^t,c\in C).$$

Secondly, let $w\in V$. Let $\tilde{U}$ be the set of all vertices $u\in U^t$ such that $x_u=\text{out}$ and $u$ is adjacent to $w$. For each subset $Y\subseteq \tilde{U}$ and each $u\in U^t$, define
\[
\hat{x}_u(Y)=\begin{cases}
w & u\in Y,\\
x_u& \text{otherwise}.	
\end{cases}
\]
Also, for each $c\in C$, define $n_c=|(N_c(w)\cap V(G_t)) \setminus B_{t'} |$. Now,
if there exist some $Y\subseteq \tilde{U}$ and some integers $y_w^c \in [0,n_c]$, $c\in C$, such that 
\[
\max_{c} (y_w^c+|Y\cap U_c|)- \min_{c} (y_w^c+|Y\cap U_c|) \leq L(w),
\]
and 
\[
f_{t'}(\hat{x}_u(Y), y_v^c,y_w^c;u\in U^t,v\in V^t,c\in C)=1,
\]
then, $f_t(\mathbf{x,y})=1$. Otherwise, we have $f_t(\mathbf{x,y})=0$.

If $w\in U$, then the number of values of $x_w$ is at most $O(|V^t|)$. If $w\in V$, then the size of $\tilde{U}$ is at most $|\Delta_V|$ and by \eqref{eq:AA}, the number of values for $y_w^c$ is at most $2^{|\Delta_V|}$. Therefore, computing $f_t$ for each $(\mathbf{x,y})$ takes time at most $O(|V^t|+|C||U^t| 2^{2|\Delta_V|})= O^*(2^{2|\Delta_V|})$.

\paragraph*{Case 3. $t$ is a join node.} Suppose that $t$ has two children $t',t''$, where $B_t=B_{t'}=B_{t''}$. Now, for each $u\in U^t$, define $x'_u$ and $x''_u$ such that if $x_u\neq \text{out}$, then $x'_u=x''_u=x_u$, and if $x_u=\text{out}$, then $(x'_u,x''_u)=(0,\text{out})$ or $(\text{out},0)$. Also, for each $v\in V^t$ and $c\in C$, define ${y'}_v^{c}$ and ${y''}_v^{c}$ such that ${y'}_v^{c}+{y''}_v^{c}=y_v^c$. Then, we have
\begin{align*}
f_t(\mathbf{x,y})=&\max_{x'_u,x''_u,{y'}_v^{c},{y''}_v^{c}}\\ &\min(f_{t'}(x'_u,{y'}_v^{c};u\in U^t,v\in V^t,c\in C),f_{t''}(x''_u,{y''}_v^{c};u\in U^t,v\in V^t,c\in C)).
\end{align*}

The number of possible values for $x'_u$ and $x''_u$ is at most $2^{|U^t|}$ and the number of possible values for ${y'}^c_v$ and ${y''}^c_v$ is  at most $2^{|\Delta_V|}$ (with an argument similar to \eqref{eq:AA}). Therefore, computing $f_t$ for each $(\mathbf{x,y})$ takes time at most $O(2^{|U^t|+|\Delta_V|})$.

Finally,  we compute the runtime of the whole algorithm. The number of nodes in the tree decomposition is at most $O(\tw n)$, where $n$ is the number of vertices of $G$. Hence, the runtime of the whole algorithm is at most 
\[ 
O^*(\tw^{\tw} 2^{|\Delta_V||V^t|} \max\{2^{2|\Delta_V|}, 2^{|U^t|+|\Delta_V|}\} )\leq O^*(\tw^{\tw} 2^{|\Delta_V|(\tw+3)}).
\] 
\end{proof}

The following theorem asserts that \gfm\ is FPT with respect to neighborhood diversity and thus it is FPT with respect to vertex cover number (this can be viewed as a generalization of the result of \cite{main} which asserts that fair matching is FPT with respect to $|V|$).
\begin{theorem} \label{thm:nd}
	If the input graph has $n$ vertices with neighborhood diversity $\nd$, then \textsc{Generalized Fair Matching} can be solved in $O^*(\nd^{O(\nd)})$ time.
\end{theorem}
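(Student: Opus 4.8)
The plan is to reuse the integer program ILP1 that underlies Theorem~\ref{thm:FPTk}, but to \emph{contract} it along the classes of the neighborhood‑diversity partition so that it keeps only $O(\nd)$ variables; the theorem then follows from Lenstra's algorithm exactly as in Theorem~\ref{thm:FPTk}.

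First I would take the neighborhood‑diversity partition $V(G)=P_1\cup\dots\cup P_{\nd}$. Since $G$ is bipartite, a class straddling the bipartition can only consist of isolated vertices or of a single pendant edge $uw$ with $u\in U$, $w\in V$; such configurations are disposed of directly (an isolated vertex of $U$ gives a no‑instance, an isolated vertex of $V$ is irrelevant, a pendant edge forces $uw\in M$ and imposes a trivial local condition on $w$), so we may assume every class lies entirely in $U$ or entirely in $V$. Write $W_1,\dots,W_p$ ($p\le\nd$) for the classes contained in $V$. Two facts drive the argument: all vertices of a class have the same neighbourhood, and for every $u\in U$ the set $N(u)$ is a union of some of the $W_j$'s. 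Consequently, for any $W\subseteq V$ the quantity $\min_{c\in C}|N_c(W)|$ depends only on which classes are \emph{met} by $W$, and $\max_{c\in C}|\nu_c(W)|$ depends only on which classes are \emph{entirely contained} in $W$.

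Using this, I would build a contracted program with the $2p\le 2\nd$ variables $X_j=\sum_{v\in W_j}x_v$ and $Y_j=\sum_{v\in W_j}y_v$: keep the two Hall‑type families of ILP1 only for the $2^{p}\le 2^{\nd}$ sets $W$ that are unions of $V$‑classes (rewriting e.g.\ $\sum_{v\in W}x_v$ as $\sum_{W_j\subseteq W}X_j$), and replace the per‑vertex constraints $0\le y_v-x_v\le L(v)$ by the single constraint $0\le Y_j-X_j\le\Lambda_j$ with $\Lambda_j:=\sum_{v\in W_j}L(v)$. The right‑hand sides of this program are computable in time polynomial in $n$ times $2^{\nd}$. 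I would then show it is feasible if and only if ILP1 is, hence, by \cite{main}, if and only if $G$ admits an $L$‑fair left‑perfect many‑to‑one matching. The forward implication is just summation over classes. For the converse one reconstructs per‑vertex values: given non‑negative integers $X_j\le Y_j$ with $Y_j-X_j\le\Lambda_j$ one can choose $x_v\ge 0$ with $\sum_{v\in W_j}x_v=X_j$ and then $y_v\in[x_v,x_v+L(v)]$ with $\sum_{v\in W_j}y_v=Y_j$; monotonicity of $N_c$ and $\nu_c$ then recovers \emph{all} $2^{|V|}$ constraints of ILP1, since for arbitrary $W$, letting $W^{+}$ and $W^{-}$ be the unions of classes met by and contained in $W$ respectively, $\sum_{v\in W}x_v\le\sum_{v\in W^{+}}x_v\le\min_c|N_c(W^{+})|=\min_c|N_c(W)|$ and dually $\sum_{v\in W}y_v\ge\sum_{v\in W^{-}}y_v\ge\max_c|\nu_c(W^{-})|=\max_c|\nu_c(W)|$.

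Finally, the contracted program has at most $2\nd$ integer variables, so by Lenstra's algorithm \cite{lenstra} its feasibility is decided in time $O^{*}\big((2\nd)^{O(2\nd)}\big)=O^{*}\big(\nd^{O(\nd)}\big)$, which absorbs the $O^{*}(2^{\nd})$ spent on setting it up. The step I expect to be the main obstacle is the contraction of the fairness constraints: because $L$ need not be constant on a class $W_j$, one must verify that imposing only $0\le Y_j-X_j\le\Lambda_j$ loses nothing (an aggregated profile is always realizable vertex by vertex) and, in tandem, that restricting the Hall‑type constraints to unions of classes is sound; everything else is bookkeeping on top of the already established equivalence between ILP1 and the problem.
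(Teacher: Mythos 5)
Your proof is correct, and it rests on the same core idea as the paper's: after discarding isolated vertices and $K_2$-components, every neighborhood-diversity class lies entirely in $U$ or in $V$, the $V$-classes can be merged into $t\le\nd$ units whose thresholds are the \emph{sums} of the individual thresholds, and then Lenstra's algorithm on $O(\nd)$ integer variables finishes the job. The difference is where the merging happens. The paper contracts the graph itself: it builds $G'=(U\cup V',E')$ with one vertex per $V$-class and $L'(v_i)=\sum_{v\in V_i}L(v)$, proves that $G$ has an $L$-fair matching iff $G'$ has an $L'$-fair matching, and only then invokes Theorem~\ref{thm:FPTk} on $G'$ as a black box; the delicate direction is redistributing a matching of $G'$ among the vertices of a class, which the paper handles by writing $|M'_c(v_i)|=L'(v_i)q_c+r_c$ and filling the thresholds $\ell_1,\ell_2,\dots$ greedily with the remainder. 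You instead contract ILP1 directly and prove that the aggregated program is feasible iff ILP1 on $G$ is, leaning on the equivalence from \cite{main} applied to the \emph{original} graph. Your disaggregation step (pick $d_v\in[0,L(v)]$ summing to $Y_j-X_j$) is arithmetically simpler than the paper's matching redistribution, but you pay for it with the extra monotonicity argument via $W^{+}$ and $W^{-}$ needed to recover the Hall-type constraints for arbitrary $W\subseteq V$ — which you do supply, and which is exactly the point that needed care. Both routes are sound and yield the same $O^{*}(\nd^{O(\nd)})$ bound.
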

 
 \begin{proof}
 Let $G=(U\cup V,E)$ be the input graph along with the coloring $\col:U\to C$ and a threshold function $L:V\to \mathbb{N}\cup \{0\}$. Without loss of generality, we can assume that $G$ has no isolated vertex. Moreover, we can remove all connected components isomorphic to $K_2$ (because for any connected component $uv$ of $G$ with $u\in U$ and $v\in V$, if $L(v)=0$, then the answer is no and if $L(v)\geq 1$, then we can add the edge $uv$ to the matching and $v$ is trivially $L$-fair).
 
 Now, suppose that $(V_1,\ldots, V_{\nd})$ is a partition of $U\cup V$, such that $G[V_i]$ is either a clique or a stable set and for each $i\neq j$, $V_i$ is either complete or incomplete to $V_j$. Since $G$ has no connected component isomorphic to $K_1$ and $K_2$, for each $i$, either $V_i\subseteq U$ or $V_i\subseteq V$.  So, assume that $V_1,\ldots, V_t\subseteq V$ and $V_{t+1},\ldots, V_{\nd}\subseteq U$.
 
 Now, define an auxiliary bipartite graph $G'=(U\cup V',E')$, where $V'=\{v_1,\ldots, v_t\}$ and $u\in U$ is adjacent to $v_i$ in $G'$ if and only if $u$ is complete to $V_i$ in $G$. Also, define the threshold function $L':V'\to \mathbb{N}\cup \{0\}$, as $L'(v_i)=\sum_{v\in V_i} L(v)$. Now, we claim $G$ has an $L$-fair matching if and only if $G'$ has an $L'$-fair matching. 
 
 To see this, note that if $M$ is an $L$-fair matching for $G$, then define $M'$ to be the matching where $uv_i\in M'$ if and only if there is an edge $uv\in M$, for some $v\in V_i$. For every two color classes $U_c$ and $U_{c'}$ and every vertex $v\in V$, we have $|M_c(v)|-|M_{c'}(v)|\leq L(v)$. Therefore, for each $i\in [t]$, we have 
 \[
|M'_c(v_i)|-|M'_{c'}(v_i)|=\sum_{v\in V_i} (|M_c(v)|-|M_{c'}(v)|)\leq \sum_{v\in V_i} L(v)=L'(v_i).
 \]  
Thus, $M'$ is an $L'$-fair matching for $G'$. Now, suppose that $M'$ is an arbitrary $L'$-fair matching for $G'$. Fix an $i\in [t]$. Suppose that $V_i=\{v^1,\ldots v^s\}$ where $\ell_j=L(v^j)$ for each $j\in [s]$. Also, for each $c\in C$, let $|M'_c(v_i)|= L'(v_i) q_c+r_c$, where $0\leq r_c< L'(v_i)$. Define the sequence of integers $(r^1_c,\ldots, r^s_c)$, where $r^1_c=\ell_1$,\ldots, $r^k_c=\ell_k$, $r^{k+1}_c=r_c-\sum_{j=1}^k{\ell_j}$, $r^{k+2}_c=\cdots=r_c^s=0$ (note that if $r_c\leq \ell_1$, then define $r^1_c=r_c, r^2_c=\cdots,r^s_c=0$). Now, partition $M'_c(v_i)$ into $s$ disjoint subsets $M^1_c,\ldots, M^s_c$, where $|M^j_c|= q_c \ell_j+r^j_c$. Define the matching $M$ for $G$ where $M_c(v^j)= M^j_c$. Now, we prove that $M$ is an $L$-fair matching for $G$. For this, let $c,c'\in C$ be two colors and $v^j\in V_i$. We have to prove that $|M_c(v^j)|-|M_{c'}(v^j)|\leq \ell_j$. Since $M'$ is $L'$-fair, we have $|M'_c(v_i)|-|M'_{c'}(v_i)|=L'(v_i)(q_c-q_{c'})+r_c-r_{c'}\leq L'(v_i)$. We consider the following two cases. If $r_c>r_{c'}$, then, $q_c-q_{c'}\leq 0$. So, $|M_c(v^j)|-|M_{c'}(v^j)|= (q_c-q_{c'})\ell_j+r^j_c-r^j_{c'}\leq r^j_c\leq \ell_j $. Now, if $r_c\leq r_{c'}$, then $q_c-q_{c'}\leq 1$ and $r^j_c\leq r^j_{c'}$ for all $j\in [s]$. Therefore,
$|M_c(v^j)|-|M_{c'}(v^j)|= (q_c-q_{c'})\ell_j+r^j_c-r^j_{c'}\leq \ell_j $. Hence, $M$ is an $L$-fair matching. 

Hence, solving the problem for $G$ is reduced to solving the problem for $G'$ which can be done in time $O^*(\nd^{O(\nd)})$ by Theorem~\ref{thm:FPTk}.
 \end{proof}

Note that in bipartite graphs, neighborhood diversity and modular-width are equal. So, Theorem~\ref{thm:nd} is also valid for the modular-width.

For the last result of the paper, applying a result in parametrized complexity of integer programming, we prove that \gfm\ is FPT with respect to $ (\td,|C|) $. 

\begin{theorem} \label{thm:tdC}
If the tree-depth of the input graph is equal to $\td$ and $|C|$ be the number of colors, then \gfm\ can be solved in time 
$O^*(2^{2^{\td(|C|+1)}})$.
\end{theorem}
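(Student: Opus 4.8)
The plan is to encode \gfm\ as an integer linear program in the standard form~\eqref{eq:ILP} whose constraint matrix $A$ satisfies $\|A\|_\infty = 1$ and whose \emph{dual} graph has tree-depth $O(\td\cdot|C|)$, and then to apply Theorem~\ref{thm:tdDual}. Concretely, I would take variables $x_{uv}\in\{0,1\}$ for every edge $uv\in E$ (encoding whether $u$ is matched to $v$), a variable $p_v\ge 0$ for every $v\in V$ (a ``pivot'', intended to equal $\min_{c\in C}|M_c(v)|$), and a variable $s_{c,v}$ with $0\le s_{c,v}\le L(v)$ for every $v\in V$ and $c\in C$. The constraints are, for each $u\in U$, the left-perfectness constraint $\sum_{v\in N(u)}x_{uv}=1$, and, for each $v\in V$ and $c\in C$, the constraint
\[
\sum_{u\in N(v),\ \col(u)=c} x_{uv}\ -\ p_v\ -\ s_{c,v}\ =\ 0 .
\]
Because $0\le s_{c,v}\le L(v)$ is built into the variable bounds, this last family of constraints says exactly that $p_v\le |M_c(v)|\le p_v+L(v)$ for all colours $c$, where $M=\{uv: x_{uv}=1\}$. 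The objective is set to zero, so the program (call it ILP2) is a feasibility problem.

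\textbf{Correctness.} A $0/1$ vector $(x_{uv})$ obeying the first family of constraints is precisely a left-perfect many-to-one matching $M$. Given such $M$, a completion by feasible $(p_v,s_{c,v})$ exists iff for each $v$ there is a $p_v\ge 0$ with $\max_c|M_c(v)|-L(v)\le p_v\le \min_c|M_c(v)|$; since $\min_c|M_c(v)|\ge 0$ the bound $p_v\ge 0$ is never binding, so such a $p_v$ exists iff $\max_c|M_c(v)|-\min_c|M_c(v)|\le L(v)$, i.e.\ iff $M(v)$ is $L(v)$-fair. (Colours absent from $N(v)$ cause no trouble: their constraint forces $p_v=0$, which is consistent with $\min_c|M_c(v)|=0$.) Hence ILP2 is feasible iff $G$ has a left-perfect $L$-fair many-to-one matching.

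\textbf{Structure of $A$.} All entries of $A$ lie in $\{-1,0,1\}$, so $\|A\|_\infty=1$. The rows of $A$ are one row $r_u$ per $u\in U$ together with a block $\{r_{v,c}:c\in C\}$ of $|C|$ rows per $v\in V$, and every variable occurs in very few rows: $x_{uv}$ occurs only in $r_u$ and $r_{v,\col(u)}$, $p_v$ occurs only in $\{r_{v,c}:c\in C\}$, and $s_{c,v}$ occurs only in $r_{v,c}$. Consequently, in the dual graph $G_D(A)$ two rows $r_u,r_{v,c}$ are adjacent only when $uv\in E$, the rows $r_{v,c},r_{v,c'}$ are adjacent, and rows from different $v$-blocks (as well as $r_u,r_{u'}$) are never adjacent; so $G_D(A)$ is obtained from $G$ by blowing up each $v\in V$ into a clique on $|C|$ vertices, each $u\in U$ into a single vertex, with every edge of $G_D(A)$ lying between the blow-ups of two $G$-adjacent vertices. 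Starting from an elimination forest of $G$ of depth $\td$ and replacing each $v\in V$ by a path through its $|C|$ block-rows (and each $u\in U$ by its single row) therefore yields an elimination forest of $G_D(A)$ of depth at most $\td\cdot|C|\le \td(|C|+1)$, so $\td_D(A)\le \td(|C|+1)$.

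\textbf{Finishing.} ILP2 has polynomially many variables and constraints and entries/bounds of polynomial magnitude, so Theorem~\ref{thm:tdDual} solves it in time $O^*\bigl((\|A\|_\infty+1)^{2^{\td_D(A)}}\bigr)=O^*\bigl(2^{2^{\td(|C|+1)}}\bigr)$; an optimal elimination forest of $G$ can be found in FPT time in $\td$, or assumed given as in Theorem~\ref{thm:twDelta}. The step I expect to require the most care is the structural claim: verifying that the ``block blow-up'' of an elimination forest of $G$ really is an elimination forest of $G_D(A)$ — i.e.\ pinning down exactly which rows share a variable so that no two rows from distinct $v$-blocks (and no two $U$-rows) are adjacent — together with double-checking that the pivot/slack encoding of $L$-fairness is faithful in all degenerate cases. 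Everything else is routine.
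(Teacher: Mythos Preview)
Your proposal is correct and follows essentially the same approach as the paper: formulate \gfm\ as an ILP with $\|A\|_\infty=1$, observe that the dual graph is (a subgraph of) the graph obtained from $G$ by blowing up each $v\in V$ into a clique of size $O(|C|)$, bound its tree-depth by replacing each $v$ in an elimination forest of $G$ by a path, and invoke Theorem~\ref{thm:tdDual}. The only difference is cosmetic: the paper uses min/max variables $x_v,y_v$ together with a separate inequality $y_v-x_v\le L(v)$ (giving $|C|+1$ rows per $v$), whereas you use a single pivot $p_v$ and per-colour slacks $s_{c,v}$ with the bound $0\le s_{c,v}\le L(v)$ absorbed into the box constraints (giving $|C|$ rows per $v$ and an ILP already in equality form); your encoding is in fact slightly cleaner and yields the marginally better bound $\td_D(A)\le \td\cdot|C|$.
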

\begin{proof}
Let $(G=(U\cup V,E),\col, L)$ be an instance of \gfm. We propose an ILP formulation for the problem as follows. For each edge $uv\in E$, we have a variable $x_{uv}$ which is equal to one if and only if $uv$ is within the solution $M$. Also, for each vertex $v\in V$, we define the variables $x_v$ and $y_v$ to be respectively the maximum and the minimum number of vertices of some color $c\in C$ matched to $v$. So, we have the following ILP formulation:

\begin{align}
&\text{ILP2:}\nonumber\\
&\sum_{v\in V} x_{uv} =1, \quad &\forall\ u\in U, \label{eq:ILP1-a}\\
&y_v-x_v\leq L(v), \quad &\forall\ v\in V,\label{eq:ILP1-b} \\
&x_v\leq \sum_{u\in U_c} x_{uv}\leq y_v, \quad &\forall\ v\in V,\ c\in C, \label{eq:ILP1-c}\\
&x_{uv}\in \{0,1\}, \ x_v, y_v\in \mathbb{N}\cup \{0\} \quad  &\forall\ uv\in E, v\in V. \nonumber
\end{align}

Condition~\eqref{eq:ILP1-a} guarantees that the matching is left-perfect and Conditions~\eqref{eq:ILP1-b} and \eqref{eq:ILP1-c} guarantee fairness of the matching. 
Now, we compute the dual graph $\tilde{G}$ of ILP2 as follows. The vertex set of $\tilde{G}$ corresponds to the constraints, so $V(\tilde{G})= U\cup V\cup (V\times [|C|]) $. There are $|U|+|V|$ constraints in \eqref{eq:ILP1-a} and \eqref{eq:ILP1-b} which have no common variable. So $U\cup V$ is a stable set in $\tilde{G}$. 
For each $v\in V$, there are $|C|+1$ constraints in \eqref{eq:ILP1-b} and \eqref{eq:ILP1-c} which share common variables $x_v$ and $y_v$, so $\{v\}\cup (\{v\}\times [|C|])$ forms a clique of size $|C|+1$ in $\tilde{G}$. For each vertex $v\in V$ and $u\in U_c$, for some color $c$, the constraints corresponding to $u$ in \eqref{eq:ILP1-a} and constraints corresponding to $(v,c)$ in \eqref{eq:ILP1-c} have the common variable $x_{uv}$ if and only if $u$ is adjacent to $v$ in $G_c$.  There is no more edges in $\tilde{G}$. The schematic of the graph $\tilde{G}$ is depicted in Figure~\ref{fig2}. Let $G'$ be the graph obtained from $G$ by blowing up each vertex of $V$ to a clique of size $|C|+1$. It is clear that $\tilde{G}$ is a subgraph of $G'$ and so $\td(\tilde{G})\leq \td(G')\leq (|C|+1)\td$. The last inequality is because in the td-decomposition of $G$ with depth $\td$, we can replace each vertex $v\in V$ with a path of $|C|+1$ vertices to obtain a td-decomposition for $G'$ of depth at most  $(|C|+1)\td$ and the topological height at most $\td$. Finally, if $A$ is the coefficient matrix of ILP2, then it is clear that $\|A\|_\infty=1$. By Theorem~\ref{thm:tdDual}, ILP2 can be solved in $O^*((\|A\|_\infty+1)^{2^{\td_D(A)}})= O^*(2^{2^{\td(|C|+1)}})$. Hence, \gfm\ can be solved in FPT time with respect to $\td+|C|$.   
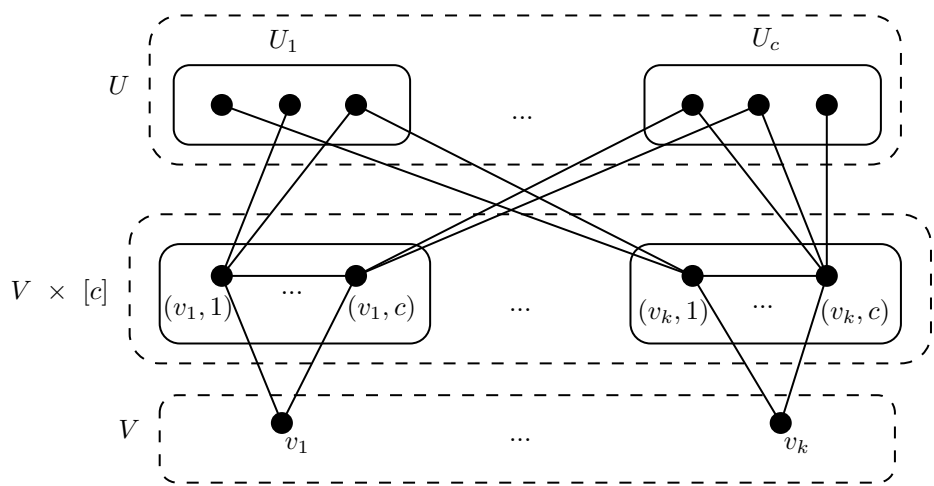
\begin{figure}
	\begin{center}
		\begin{tikzpicture}[x=0.75pt,y=0.75pt,yscale=-1,xscale=1]
			
			\draw   (165,245) .. controls (165,239.48) and (169.48,235) .. (175,235) -- (290,235) .. controls (295.52,235) and (300,239.48) .. (300,245) -- (300,275) .. controls (300,280.52) and (295.52,285) .. (290,285) -- (175,285) .. controls (169.48,285) and (165,280.52) .. (165,275) -- cycle ;
			\draw  [fill={rgb, 255:red, 0; green, 0; blue, 0 }  ,fill opacity=1 ] (191,251) .. controls (191,248.24) and (193.24,246) .. (196,246) .. controls (198.76,246) and (201,248.24) .. (201,251) .. controls (201,253.76) and (198.76,256) .. (196,256) .. controls (193.24,256) and (191,253.76) .. (191,251) -- cycle ;
			\draw  [fill={rgb, 255:red, 0; green, 0; blue, 0 }  ,fill opacity=1 ] (258,251) .. controls (258,248.24) and (260.24,246) .. (263,246) .. controls (265.76,246) and (268,248.24) .. (268,251) .. controls (268,253.76) and (265.76,256) .. (263,256) .. controls (260.24,256) and (258,253.76) .. (258,251) -- cycle ;
			\draw   (400,245) .. controls (400,239.48) and (404.48,235) .. (410,235) -- (525,235) .. controls (530.52,235) and (535,239.48) .. (535,245) -- (535,275) .. controls (535,280.52) and (530.52,285) .. (525,285) -- (410,285) .. controls (404.48,285) and (400,280.52) .. (400,275) -- cycle ;
			\draw  [fill={rgb, 255:red, 0; green, 0; blue, 0 }  ,fill opacity=1 ] (426,251) .. controls (426,248.24) and (428.24,246) .. (431,246) .. controls (433.76,246) and (436,248.24) .. (436,251) .. controls (436,253.76) and (433.76,256) .. (431,256) .. controls (428.24,256) and (426,253.76) .. (426,251) -- cycle ;
			\draw  [fill={rgb, 255:red, 0; green, 0; blue, 0 }  ,fill opacity=1 ] (493,251) .. controls (493,248.24) and (495.24,246) .. (498,246) .. controls (500.76,246) and (503,248.24) .. (503,251) .. controls (503,253.76) and (500.76,256) .. (498,256) .. controls (495.24,256) and (493,253.76) .. (493,251) -- cycle ;
			\draw  [fill={rgb, 255:red, 0; green, 0; blue, 0 }  ,fill opacity=1 ] (221,325) .. controls (221,322.24) and (223.24,320) .. (226,320) .. controls (228.76,320) and (231,322.24) .. (231,325) .. controls (231,327.76) and (228.76,330) .. (226,330) .. controls (223.24,330) and (221,327.76) .. (221,325) -- cycle ;
			\draw  [fill={rgb, 255:red, 0; green, 0; blue, 0 }  ,fill opacity=1 ] (470,325) .. controls (470,322.24) and (472.24,320) .. (475,320) .. controls (477.76,320) and (480,322.24) .. (480,325) .. controls (480,327.76) and (477.76,330) .. (475,330) .. controls (472.24,330) and (470,327.76) .. (470,325) -- cycle ;
			\draw   (172,153) .. controls (172,148.58) and (175.58,145) .. (180,145) -- (282,145) .. controls (286.42,145) and (290,148.58) .. (290,153) -- (290,177) .. controls (290,181.42) and (286.42,185) .. (282,185) -- (180,185) .. controls (175.58,185) and (172,181.42) .. (172,177) -- cycle ;
			\draw  [fill={rgb, 255:red, 0; green, 0; blue, 0 }  ,fill opacity=1 ] (191,165) .. controls (191,162.24) and (193.24,160) .. (196,160) .. controls (198.76,160) and (201,162.24) .. (201,165) .. controls (201,167.76) and (198.76,170) .. (196,170) .. controls (193.24,170) and (191,167.76) .. (191,165) -- cycle ;
			\draw  [fill={rgb, 255:red, 0; green, 0; blue, 0 }  ,fill opacity=1 ] (258,165) .. controls (258,162.24) and (260.24,160) .. (263,160) .. controls (265.76,160) and (268,162.24) .. (268,165) .. controls (268,167.76) and (265.76,170) .. (263,170) .. controls (260.24,170) and (258,167.76) .. (258,165) -- cycle ;
			\draw   (407,153) .. controls (407,148.58) and (410.58,145) .. (415,145) -- (517,145) .. controls (521.42,145) and (525,148.58) .. (525,153) -- (525,177) .. controls (525,181.42) and (521.42,185) .. (517,185) -- (415,185) .. controls (410.58,185) and (407,181.42) .. (407,177) -- cycle ;
			\draw  [fill={rgb, 255:red, 0; green, 0; blue, 0 }  ,fill opacity=1 ] (426,165) .. controls (426,162.24) and (428.24,160) .. (431,160) .. controls (433.76,160) and (436,162.24) .. (436,165) .. controls (436,167.76) and (433.76,170) .. (431,170) .. controls (428.24,170) and (426,167.76) .. (426,165) -- cycle ;
			\draw  [fill={rgb, 255:red, 0; green, 0; blue, 0 }  ,fill opacity=1 ] (493,165) .. controls (493,162.24) and (495.24,160) .. (498,160) .. controls (500.76,160) and (503,162.24) .. (503,165) .. controls (503,167.76) and (500.76,170) .. (498,170) .. controls (495.24,170) and (493,167.76) .. (493,165) -- cycle ;
			\draw  [fill={rgb, 255:red, 0; green, 0; blue, 0 }  ,fill opacity=1 ] (225,165) .. controls (225,162.24) and (227.24,160) .. (230,160) .. controls (232.76,160) and (235,162.24) .. (235,165) .. controls (235,167.76) and (232.76,170) .. (230,170) .. controls (227.24,170) and (225,167.76) .. (225,165) -- cycle ;
			\draw  [fill={rgb, 255:red, 0; green, 0; blue, 0 }  ,fill opacity=1 ] (459,165) .. controls (459,162.24) and (461.24,160) .. (464,160) .. controls (466.76,160) and (469,162.24) .. (469,165) .. controls (469,167.76) and (466.76,170) .. (464,170) .. controls (461.24,170) and (459,167.76) .. (459,165) -- cycle ;
			\draw    (196,251) -- (226,325) ;
			\draw    (263,251) -- (226,325) ;
			\draw    (431,251) -- (475,325) ;
			\draw    (498,251) -- (475,325) ;
			\draw    (196,251) -- (263,251) ;
			\draw    (431,251) -- (498,251) ;
			\draw    (263,165) -- (196,251) ;
			\draw    (464,165) -- (263,251) ;
			\draw    (431,165) -- (263,251) ;
			\draw    (196,165) -- (431,251) ;
			\draw    (230,165) -- (196,251) ;
			\draw    (464,165) -- (498,251) ;
			\draw    (498,165) -- (498,251) ;
			\draw    (263,165) -- (431,251) ;
			\draw    (431,165) -- (498,251) ;
			\draw  [dash pattern={on 4.5pt off 4.5pt}] (165,319.8) .. controls (165,314.94) and (168.94,311) .. (173.8,311) -- (523.2,311) .. controls (528.06,311) and (532,314.94) .. (532,319.8) -- (532,346.2) .. controls (532,351.06) and (528.06,355) .. (523.2,355) -- (173.8,355) .. controls (168.94,355) and (165,351.06) .. (165,346.2) -- cycle ;
			\draw  [dash pattern={on 4.5pt off 4.5pt}] (150,235) .. controls (150,226.72) and (156.72,220) .. (165,220) -- (535,220) .. controls (543.28,220) and (550,226.72) .. (550,235) -- (550,280) .. controls (550,288.28) and (543.28,295) .. (535,295) -- (165,295) .. controls (156.72,295) and (150,288.28) .. (150,280) -- cycle ;
			\draw  [dash pattern={on 4.5pt off 4.5pt}] (160,135) .. controls (160,126.72) and (166.72,120) .. (175,120) -- (520,120) .. controls (528.28,120) and (535,126.72) .. (535,135) -- (535,180) .. controls (535,188.28) and (528.28,195) .. (520,195) -- (175,195) .. controls (166.72,195) and (160,188.28) .. (160,180) -- cycle ;
			
			\draw (218,126) node [anchor=north west][inner sep=0.75pt]    {$U_{1}$};
			\draw (459,125) node [anchor=north west][inner sep=0.75pt]    {$U_{c}$};
			\draw (165,259) node [anchor=north west][inner sep=0.75pt]    {$( v_{1} ,1)$};
			\draw (257,259) node [anchor=north west][inner sep=0.75pt]    {$( v_{1} ,c)$};
			\draw (402,261) node [anchor=north west][inner sep=0.75pt]    {$( v_{k} ,1)$};
			\draw (493,261) node [anchor=north west][inner sep=0.75pt]    {$( v_{k} ,c)$};
			\draw (226,331) node [anchor=north west][inner sep=0.75pt]    {$v_{1}$};
			\draw (475,331) node [anchor=north west][inner sep=0.75pt]    {$v_{k}$};
			\draw (339,169) node [anchor=north west][inner sep=0.75pt]    {$...$};
			\draw (338,266) node [anchor=north west][inner sep=0.75pt]    {$...$};
			\draw (224,257) node [anchor=north west][inner sep=0.75pt]    {$...$};
			\draw (459,264) node [anchor=north west][inner sep=0.75pt]    {$...$};
			\draw (338,331) node [anchor=north west][inner sep=0.75pt]    {$...$};
			\draw (143,322) node [anchor=north west][inner sep=0.75pt]    {$V$};
			\draw (89,251) node [anchor=north west][inner sep=0.75pt]    {$V\ \times \ [ c]$};
			\draw (138,149) node [anchor=north west][inner sep=0.75pt]    {$U$};	
		\end{tikzpicture}
	\end{center}
	\caption{The schematic of the dual graph $\tilde{G}$.}\label{fig2}
\end{figure}
\end{proof}

\section{Concluding Remark}
In this paper, the structural complexity landscape of \gfm\ problem has been investigated. It is proved that the boundary of W[1]-hardness is feedback vertex number and tree-depth. Also, we proved that when we add the number of colors as a parameter, then the boundary is moved from tree-depth to path-width. The complexity class for some parameters are remained unknown, which are listed below.
\begin{itemize}
	\item In Theorems~\ref{thm:fvstdDelta} and \ref{thm:tdC}, we proved that \gfm\ is W[1]-hard with  respect to $(\fvs,\td)$ and FPT with respect to $(\td,|C|)$. So, it is natural to ask if the problem is FPT with respect to $(\fvs,|C|)$?
	\item 
	In Theorem~\ref{thm:twDelta}, we proved that \gfm\ is FPT with respect to ($\tw,\Delta_V$), where both parameters are in the exponent in the running time. This gives rise to the question if the problem is in XP with respect to $\tw$? Also, the W[1]-hardness of the problem with respect to $(\cw,\Delta_V)$ is unknown.
	\item 
	In Theorem~\ref{thm:nd}, we proved that \gfm\ is FPT with respect to $\nd$ and since $\vc \leq O(2^{\nd})$, it is also FPT with respect to $\vc$. Therefore, we can ask if the problem admits a polynomial kernel with respect to $\vc$? Moreover, the parameter vertex integrity is a generalization of vertex cover number, so one may ask if the problem is FPT with respect to $\vi$?
	\item 
	 Finally, all FPT results are also valid for \fm\ problem in which all vertices in $V$ have the same threshold $\ell$. However, in W[1]-hardness results, our reductions construct graphs where the threshold of vertices in $V$ are zero and one. So, it is also valuable if one can find reductions that all vertices have the same threshold (for instance zero). 
\end{itemize}

\end{document}